\newcommand\tr{{\operatorname{tr}}}
\newcommand\diag{{\operatorname{diag}}}
\newcommand\E{{\mathbb{E}}}
\newcommand\C{{\mathbb{C}}}
\renewcommand\v[1]{{\boldsymbol{#1}}}
\newcommand{\p}[0]{\partial}
\newcommand{\R}[0]{\mathbb{R}}
\newtheorem{definition}{Definition}[section]
\newtheorem{theorem}{Theorem}[section]
\newtheorem{lemma}{Lemma}[section]
\newtheorem{remark}{Remark}[section]
\theoremstyle{definition}
\theoremstyle{remark}
\numberwithin{equation}{section}
\renewcommand{\b}{\mathbf{b}}
\newcommand{\A}{\mathbf{A}}
\newcommand{\K}{\mathbf{K}}
\newcommand{\B}{\mathbf{B}}
\newcommand{\D}{\mathbf{D}}
\newcommand{\U}{\mathbf{U}}
\newcommand{\V}{\mathbf{V}}
\newcommand{\M}{\mathbf{M}}
\newcommand{\x}{\mathbf{x}}
\newcommand{\y}{\mathbf{y}}
\newcommand{\z}{\mathbf{z}}
\newcommand{\w}{\mathbf{w}}
\newcommand{\e}{\mathbf{e}}
\newcommand{\g}{\mathbf{g}}
\newcommand{\Q}{\mathbf{Q}}
\newcommand{\I}{\mathbf{I}}
\newcommand{\nnz}{\mathrm{nnz}}
\newcommand{\rank}{\mathrm{rank}}
\newcommand{\polylog}{\mathrm{polylog}}
\newcommand{\dpp}{\mathrm{DPP}}
\newcommand{\range}{\mathrm{range}}
\newcommand{\adj}{\mathrm{adj}}
\renewcommand{\det}{\mathrm{det}}
\renewcommand{\u}{\mathbf{u}}
\renewcommand{\v}{\mathbf{v}}
\renewcommand{\H}{\mathbf{H}}
\renewcommand{\p}{\mathbf{p}}
\renewcommand{\L}{\mathbf{L}}
\renewcommand{\C}{\mathbf{C}}
\renewcommand{\P}{\mathbf{P}}
\renewcommand{\c}{\mathbf{c}}
\DeclareMathOperator*{\argmin}{arg\,min}
\title{Solving Dense Linear Systems Faster Than via Preconditioning}
\author[2]{Micha{\l} Derezi\'nski\thanks{University of Michigan (\texttt{derezin@umich.edu})}
\quad and\quad 
Jiaming Yang\thanks{University of Michigan (\texttt{jiamyang@umich.edu})}}
\begin{document}

\maketitle
\begin{abstract}
    We give a stochastic optimization algorithm that solves a dense $n\times n$ real-valued linear system $Ax=b$, returning $\tilde x$ such that $\|A\tilde x-b\|\leq \epsilon\|b\|$ in time:
    \begin{align*}
        \tilde O\big((n^2+nk^{\omega-1})\log1/\epsilon\big),
    \end{align*}
    where $k$ is the number of singular values of $A$ larger than $O(1)$ times its smallest positive singular~value, $\omega < 2.372$ is the matrix multiplication exponent, and $\tilde O$ hides a poly-logarithmic in $n$ factor. When $k=O(n^{1-\theta})$ (namely, $A$ has a flat-tailed spectrum, e.g., due to noisy data or regularization), this improves on both the cost of solving the system directly, as well as on the cost of preconditioning an iterative method such as conjugate gradient. In particular, our algorithm has an $\tilde O(n^2)$ runtime when $k=O(n^{0.729})$. We further adapt this result to sparse positive semidefinite matrices and least squares regression.
    
    Our main algorithm can be viewed as a randomized block coordinate descent method, where the key challenge is simultaneously ensuring good convergence and fast per-iteration time. In our analysis, we use theory of majorization for elementary symmetric polynomials to establish a sharp convergence guarantee when coordinate blocks are sampled using a determinantal point process. We then use a Markov chain coupling argument to show that similar convergence can be attained with a cheaper sampling scheme, and accelerate the block coordinate descent update via matrix sketching.
\end{abstract}
\newpage
\tableofcontents
\newpage

\section{Introduction}
\label{s:intro}

Solving a linear system $\A\x=\b$ is one of the most classical algorithmic tasks, with applications in scientific computing, engineering, data science, optimization, and more. Many classical deterministic algorithms are known for this task, from direct methods such as Gaussian elimination to iterative methods such as Conjugate Gradient. More recently, there has been much success in developing faster algorithms based on randomized techniques such as sampling and sketching for solving linear systems under specific structural conditions, e.g., graph-structured \cite{spielman2014nearly,koutis2012fast,cohen2018solving} and other classes of sparse linear systems \cite{xia2012superfast,gray2006toeplitz}, as well as extremely over/under-determined systems \cite{sarlos2006improved,woodruff2014sketching}. However, the benefits of using randomized algorithms for general dense linear systems remain unclear.

Most of the work on improving the time complexity of direct methods for solving linear systems has focused on the reduction to matrix multiplication via divide-and-conquer approaches. This has led to the complexity of $O(n^\omega)$ for solving an $n\times n$ linear system $\A\x=\b$, where $\omega < 2.371552$ \cite{williams2023new} is the current best known bound on the matrix multiplication exponent, developed over a long line of works \cite{strassen1969gaussian,pan1984multiply,coppersmith1987matrix,williams2012multiplying}.

On the opposite end of the spectrum from direct methods are iterative methods, which, through a sequence of cheap iterations, construct estimates that gradually converge to the solution. These approaches are widely used in practice for solving large linear systems. One of the most famous of those methods is Conjugate Gradient (CG) / Lanczos algorithm \cite{hestenes1952methods}. This method under exact arithmetic recovers the optimal solution in $n$ iterations, each costing $O(n^2)$ time for a dense system. However, CG is more commonly used as an approximation algorithm for well-conditioned systems, taking $O(\log(1/\epsilon))$ iterations to converge $\epsilon$-close when the condition number $\kappa(\A)$ (the ratio between largest and smallest positive singular values) is bounded by a constant. These dramatically opposing views of the complexity of CG can be unified through careful analysis of its dependence on the singular value distribution of $\A$ \cite{sw09}. In particular, CG will take $k+O(\log(1/\epsilon))$ iterations to converge, when $\A$ has at most $k$ singular values larger than $O(1)$ times the smallest positive singular value, or equivalently, when $\kappa(\A-\A_k)=O(1)$, where $\A_k$ is $\A$'s best rank-$k$ approximation. This leads to the overall complexity of $\tilde O(n^2k)$, where we use $\tilde O$ to hide the logarithmic factors.

The above discussion indicates that solving linear systems with a matrix having only $k=o(n)$ large singular values should be substantially easier than in the worst-case. Understanding this phenomenon is motivated by the real-world prevalence of such linear systems: either due to direct regularization (e.g., $\A=\B+\lambda\I$), as in continuous optimization \cite{boyd2004convex}, machine learning \cite{zhang2013divide}, and statistics \cite{dobriban2018high}, or because the input matrix is distorted by isotropic noise (e.g., $\A=\B+\delta\mathbf{N}$, where $\mathbf{N}$ is the noise), which can be caused by measurement error \cite{loh2011high}, data compression \cite{liang2021pruning}, and privacy enforcement \cite{dwork2014algorithmic}. Yet, the $\tilde O(n^2k)$ complexity of CG is highly unsatisfactory, as it only provides an improvement over direct methods for small $k$, and does not lead to a near-linear time algorithm unless $k=\tilde O(1)$. This leads to our central question:
\begin{quote}
    What is the complexity of solving an $n\times n$ linear system with $k$ large singular values?
\end{quote}

Arguably the most popular strategy of accelerating the complexity of linear system solvers such as CG is via preconditioning. For instance, given access to $\A$'s best rank $k$ approximation $\A_k$, we can easily construct a preconditioner matrix $\M$ that allows CG to solve our problem in $O(\log(1/\epsilon))$ iterations (since $\A_k$ describes
the entire ill-conditioned part of $\A$'s spectrum). While computing $\A_k$ exactly is as expensive as directly solving the linear system, much recent work has been dedicated to fast randomized algorithms for low-rank approximation via matrix sketching. In particular, we can use power iteration \cite{halko2011finding} (or Krylov subspace iteration \cite{musco2015randomized}) to construct a low-rank approximation $\tilde\A_k$ with a spectral norm guarantee $\|\tilde\A_k-\A\|_2 = O(\sigma_{k+1}(\A))$, sufficient to construct a good preconditioner for our problem. These algorithms require (repeatedly) multiplying matrix $\A$ with an $n\times k$ random matrix, leading to the runtime of $\tilde O(n^2k^{\omega-2})$. This complexity can be further refined using fast rectangular matrix multiplication \cite{le2012faster} by replacing the exponent $\omega$ with $\omega(1,1,\log_n(k))\leq \omega$. Nevertheless, for large enough $k$, the resulting preconditioning algorithm will not run in near-linear time.\footnote{Using current matrix multiplication algorithms, we have $\omega(1,1,\alpha)>2$ for any $\alpha> 0.321335$ \cite{williams2023new}. For simplicity, from now on we will present all time complexities using just the square matrix multiplication exponent~$\omega$.}
Some randomized low-rank approximation algorithms avoid power iteration and run in $\tilde O(n^2 + nk^{\omega-1})$ time \cite{clarkson2013low,cohen2015dimensionality,cohen2017input}, but they only guarantee a bound on the Frobenius norm error $\|\tilde\A_k-\A\|_F$ which is worse by up to a $\sqrt n$ factor than the spectral norm bound and thus does not in general lead to a good preconditioner. 

In summary, existing approaches suffer from an undesirable trade-off between the cost of preconditioning and the cost of solving the system, where one of those costs has to be larger than the input size by at least a $\mathrm{poly}(k)$ factor.

\begin{table}
  \centering\begin{tabular}{r||c|c|c}
Problem &Condition on $\A$ 
& Our results& Theorem\\
  \hline\hline
    Dense linear system
        & $\A\in\R^{n\times n}$
& $\tilde O(n^2+nk^{\omega-1})$& Thm \ref{thm:main_2}\\
Least squares & $\A\in\R^{n\times d}$
&$\tilde O(\nnz(\A)+d^2+dk^{\omega-1})$& Thm \ref{thm:main_ls}\\
PSD linear system & $\A\in\mathcal S_n$
&$\tilde O(\nnz^*(\A)+nk^{\omega-1})$& Thm \ref{thm:main_psd}
\end{tabular}
\caption{Summary of our complexity bounds for solving 
linear systems with at most $k$ large singular values to high
precision, hiding $\polylog(n/\epsilon)$ factors, as a function of the square matrix multiplication exponent $\omega$. In the least
squares setting, we consider $\A$ to be a tall matrix, i.e., $n\gg d$, and
in the psd setting, we define $\nnz^*(\A)=n\cdot \max_i\{\nnz(\A_{i,:})\}$.}\label{tab:summary}
\end{table}

\subsection{Main Results} 
In this paper, we avoid the undesirable trade-off between the cost of preconditioning and the cost of solving a dense linear system with $k$ large singular values, by abandoning the preconditioning step in favor of stochastic optimization. 
\begin{theorem}[Dense linear system, simplified Theorem \ref{thm:main_2}]\label{t:main-simple}
    Given $\A\in\R^{n\times n}$, $\b\in\R^n$, $\epsilon>0$ and a constant $C=O(1)$, we can compute $\tilde\x$ such that $\|\A\tilde\x-\b\|\leq\epsilon\|\b\|$ in time:
    \begin{align*}
        \tilde O\Big((n^2+nk^{\omega-1})\log1/\epsilon\Big),
    \end{align*}
    where $k$ is the number of singular values of $\A$ larger than $C$ times its smallest positive singular~value.
\end{theorem}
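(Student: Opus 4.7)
The plan is to recast $\A\x=\b$ as the minimization of a quadratic objective and attack it with a randomized block coordinate descent (BCD) method with block size $s$. At iteration $t$, a subset $S_t\subseteq[n]$ of size $s$ is drawn and $\x_t$ is updated to exactly minimize the objective over the coordinates indexed by $S_t$. I aim to show (i) a per-step expected contraction factor of $1-\Omega(s/n)$ and (ii) a per-step cost of $\tilde O(s^\omega+ns)$; then over $\tilde O((n/s)\log(1/\epsilon))$ iterations the total cost is $\tilde O((ns^{\omega-1}+n^2)\log(1/\epsilon))$, and choosing $s=\Theta(k)$ matches the stated bound $\tilde O((n^2+nk^{\omega-1})\log(1/\epsilon))$.

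For the convergence analysis, I would sample $S_t$ from a determinantal point process (DPP) whose marginal kernel is built from $\A^\top\A$. The expected one-step contraction (in an $\A^\top\A$-weighted norm of the error) is governed by $\lambda_{\min}(\E[\vP_{S_t}])$, where $\vP_{S_t}$ is the $\A^\top\A$-orthogonal projection onto $\mathrm{span}\{\e_i:i\in S_t\}$. For a size-$s$ DPP this expectation admits a closed form in terms of the elementary symmetric polynomials of the eigenvalues of $\A^\top\A$, and the crucial step is a majorization bound comparing these polynomials against those of a ``flattened'' spectrum dominated by the $k$ leading singular values, yielding a condition-number-free $1-\Omega(s/n)$ contraction whenever $s\leq k$. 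This majorization step is, in my view, the technical heart of the argument and the main obstacle: naive uniform or leverage-score sampling degrades with the condition number of $\A$, whereas DPP sampling combined with majorization is precisely what exploits the flat-tailed spectrum of $\A$.

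Exact DPP sampling would itself be too expensive, so I would replace it with a cheaper surrogate distribution (e.g., one built from approximate ridge leverage scores of $\A$) and argue via a Markov chain coupling that the trajectory of iterates driven by the surrogate stays distributionally close to the one driven by the true DPP, losing at most a constant factor in the contraction rate. In parallel, the per-iteration subproblem---inversion of an $s\times s$ submatrix of $\A^\top\A$---would be accelerated by an oblivious sketch of $\A_{:,S_t}$ followed by a small regularized solve, fitting into $\tilde O(s^\omega)$ time, while the residual $\b-\A\x_{t+1}$ is maintained incrementally in $O(ns)$ time per iteration using only the columns $\A_{:,S_t}$. Combining the contraction bound, the coupling argument, and the sketching speedup, and iterating for $O(\log(1/\epsilon))$ epochs of constant-factor contraction, then delivers the claimed time complexity.
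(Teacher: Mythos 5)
Your high-level plan matches the paper's architecture: DPP-driven block coordinate descent, a Schur-concavity/majorization bound on ratios of elementary symmetric polynomials to get a spectrum-aware contraction, a coupling argument to replace the DPP by cheaper sampling, and sketching to accelerate the inner solve. But three specific choices you made would break, and flagging them is exactly what reveals where the real work lies.

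First, your contraction condition is backwards. You claim a $1-\Omega(s/n)$ contraction ``whenever $s\leq k$'', but the paper's improved bound (Lemma~\ref{lem:improve_bound}) reads $\lambda_{\min}^+(\E[\P_{S_\dpp}])\geq \frac{k'-k}{k'-k-1+(r-k)\bar\kappa_k^2}$ and \emph{requires} the DPP size $k'$ to strictly exceed the parameter $k$ indexing the flat tail; the authors take $k'=2k$. If $s\leq k$, the Schur-concavity argument forces you to measure flatness relative to a cutoff below $k$, which reintroduces the large singular values and hence the condition number. The correct statement is that a DPP of size $s=\Theta(k)$ with $s>k$ yields contraction $1-\Omega(k/r)$, which (for $s=2k$) is the same order as $1-\Omega(s/r)$.

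Second, your proposed surrogate distribution---approximate ridge leverage scores---is not what the paper uses, and it is not clear that it would work with the coupling argument you need. The paper first applies a randomized Hadamard transform (Lemma~\ref{lem:approx_iso_fixed}), which makes the $k$-DPP marginals approximately \emph{uniform} in $O(mn\log m)$ time; after that, plain uniform sampling of $O(k\log^3 n)$ coordinates suffices. This is both cheaper and conceptually simpler than estimating leverage scores, and the uniformization step is precisely what makes the next ingredient go through.

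Third, your description of the coupling is off in a way that matters. You describe keeping ``the trajectory of iterates driven by the surrogate\ldots distributionally close to the one driven by the true DPP, losing at most a constant factor.'' The paper does something sharper and more local: it couples \emph{each individual sample} $\tilde S$ drawn by uniform sampling with a hypothetical $k$-DPP sample $S_\dpp$ so that $S_\dpp\subseteq\tilde S$ with high probability (Lemma~\ref{lem:main_coupling}), and then invokes monotonicity of projections, $\P_{S_1}\preceq\P_{S_2}$ for $S_1\subseteq S_2$ (Lemma~\ref{lem:project_order}), to conclude $\lambda_{\min}^+(\E[\P_{\tilde S}])\geq\lambda_{\min}^+(\E[\P_{S_\dpp}])-\delta$. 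This is a per-iteration, deterministic-given-the-event domination in the Loewner order, not a distributional-closeness-of-trajectories statement; trying to control the law of the whole iterate sequence would be considerably harder to make rigorous, and is unnecessary once you observe the monotonicity. Without the RHT preprocessing and the containment-plus-monotonicity mechanism, the proposal as written does not close.
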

\begin{remark}
    We note that $\tilde O$ hides only $\polylog(n)$ factors, and the algorithm can be implemented without the knowledge of $k$. Theorem \ref{thm:main_2} further extends to rectangular and inconsistent systems.
\end{remark}
The above theorem implies that we can solve an $n\times n$ linear system with $k$ large singular values in near-linear time, $\tilde O(n^2)$, as long as $k=O(n^{\frac1{\omega-1}})=O(n^{0.729})$.
Moreover, when we compare the time complexity $\tilde O(n^2+nk^{\omega-1})$ against the cost of CG with low-rank preconditioning, the proposed algorithm entirely avoids the $\tilde O(n^2k^{\omega-2})$ cost of power iteration, without sacrificing the complexity of the solver (up to logarithmic factors). See Table \ref{tab:summary} for a summary of our results.

\paragraph{Least squares.} 
Our approach naturally extends to both over- and under-determined systems. In particular, Theorem \ref{thm:main_2} directly shows that we can solve a tall $n\times d$ least squares problem $\min_\x\|\A\x-\b\|$ with $k$ large singular values, in time $\tilde O(nd+dk^{\omega-1})$. This can be further improved for sparse matrices, by combining our algorithm with existing sketching-based preconditioning techniques for tall least squares. Below, we use $\nnz(\A)$ to denote the number of non-zeros in $\A$.
\begin{theorem}[Least squares, simplified Theorem \ref{thm:main_ls}]\label{t:ls-simple}
    Given $\A\in\R^{n\times d}$ with $k$ large singular values, $\b\in\R^n$ and $\epsilon>0$, we can compute $\tilde \x$ so that $\|\A\tilde\x-\b\|^2\leq \min_\x\|\A\x-\b\|^2+\epsilon\|\b\|^2$ in time 
    $$\tilde O\Big((\nnz(\A) + d^2 + dk^{\omega-1})\log1/\epsilon\Big).$$
\end{theorem}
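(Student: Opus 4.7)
The plan is to solve the tall least squares problem by running a randomized block coordinate descent method, analogous to the main algorithm of Theorem~\ref{t:main-simple}, on the normal equations $\A^\top\A\x=\A^\top\b$ without ever forming the $d\times d$ matrix $\A^\top\A$. The update for a sampled coordinate block $S\subseteq[d]$ of size $|S|\approx k$ is $\x_S\leftarrow\x_S-(\A_{:,S}^\top\A_{:,S})^{-1}\A_{:,S}^\top(\A\x-\b)$, which is exactly a block coordinate descent step on the normal equations. Since $M:=\A^\top\A$ has $k$ large eigenvalues precisely when $\A$ has $k$ large singular values, the convergence analysis powering Theorem~\ref{t:main-simple} should give $T=\tilde O((d/k)\log 1/\epsilon)$ iterations to reach $\epsilon$-accuracy.

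For the per-iteration cost, I would maintain the vector $\A\x$ (updated incrementally in $\tilde O(\nnz(\A_{:,S}))$ time after each block change) and compute the block gradient $\A_{:,S}^\top(\A\x-\b)$ in $\tilde O(\nnz(\A_{:,S}))$. Rather than forming $\A_{:,S}^\top\A_{:,S}$ exactly at cost $O(nk^{\omega-1})$, I would sketch $\A_{:,S}$ via an $\nnz$-time subspace embedding of size $\tilde O(k)\times k$ and assemble the sketched block Gram in $\tilde O(\nnz(\A_{:,S})+k^\omega)$---exactly the matrix-sketching acceleration of the block coordinate update used in the proof of Theorem~\ref{t:main-simple}, now specialized to the column block $\A_{:,S}$. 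Block sampling is performed by the same DPP-based scheme and its Markov-chain-coupled cheaper surrogate. Summing over $T$ iterations, the $\nnz$-level work amortizes to $\tilde O(\nnz(\A))$ per $\log$ factor because the expected sparsity of a random $k$-column block is $(k/d)\nnz(\A)$; the block solves contribute $\tilde O(Tk^\omega)=\tilde O(dk^{\omega-1})$; and the $O(d)$-per-iteration bookkeeping (maintaining DPP statistics, updating coordinates of $\x$, etc.) contributes $\tilde O(d^2)$. This yields the claimed $\tilde O((\nnz(\A)+d^2+dk^{\omega-1})\log 1/\epsilon)$.

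The hard part will be carrying the convergence proof of Theorem~\ref{t:main-simple} over to this implicit setting where the system matrix $\A^\top\A$ is never materialized. The majorization-of-elementary-symmetric-polynomials bound and the Markov chain coupling argument should transfer symbolically, but one must verify that replacing the block Gram $\A_{:,S}^\top\A_{:,S}$ by its sketched surrogate preserves the required per-step contraction in expectation without introducing bias, and that the internal randomness of the column-block sketch does not interact pathologically with the DPP sampling; tuning the sketch dimensions and sampler parameters so that the two sources of randomness compose cleanly is the most delicate step.
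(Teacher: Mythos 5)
There is a genuine gap, and it centers on a tension you didn't resolve: \emph{the cheap uniform-sampling surrogate for the DPP requires a randomized Hadamard transform preprocessing step, which destroys sparsity}. In the paper's framework (Algorithm~\ref{alg:main_2} / Theorem~\ref{thm:main_2}), before any column blocks are drawn, one must replace $\A$ by $\frac{1}{\sqrt n}\A\D\H$ so that the marginals of the implicit $k$-DPP become nearly uniform (Lemma~\ref{lem:approx_iso_fixed}); this is the only thing that makes uniform block sampling ``contain'' a DPP sample with high probability. After this transform every column block has $\approx n\tau$ nonzeros, not $(k/d)\nnz(\A)$ in expectation, so your amortization argument ($T\cdot(k/d)\nnz(\A)=\tilde O(\nnz(\A))$) collapses; you pay $\Omega(n\tau)$ per iteration and $\tilde O(nd)$ total, which for sparse $\A$ can be far larger than $\nnz(\A)+d^2$. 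If instead you forgo the RHT to keep sparsity, you are forced back into genuine $k$-DPP sampling on $\A^\top\A$, and the paper's own accounting (Section~\ref{sec:compare_sample_dpp}) shows that merely estimating the marginals for the MCMC sampler costs $\tilde O(nd\,k^{\omega-2})$, already breaking the bound. Your $d^2$ term is also not produced by anything in your plan: maintaining $\A\x$ and updating $k$ coordinates of $\x$ costs $O(\nnz(\A_{:,S})+k)$ per step, not $O(d)$, and even $O(d)$ per step over $T=\tilde O(d/k)$ steps would give $\tilde O(d^2/k)$.

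What the paper actually does is a two-level scheme (Algorithm~\ref{alg:main_ls}, Theorem~\ref{thm:main_ls}): first sketch the tall matrix once, $\tilde\A=\mathbf\Phi\A\in\R^{\tilde O(d)\times d}$, so that $\tilde\H=\tilde\A^\top\tilde\A$ is a constant-factor spectral approximation of $\H=\A^\top\A$; then run $O(\log 1/\epsilon)$ steps of preconditioned gradient descent, each computing the true gradient $\g_t=\A^\top(\A\x_t-\b)$ in $O(\nnz(\A))$ time, and approximately applying $\tilde\H^\dagger$ to $\g_t$ by running Algorithm~\ref{alg:main_2} \emph{on the small, dense matrix $\tilde\A$}. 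Because $\tilde\A$ has only $\tilde O(d)$ rows, the RHT on $\tilde\A$ and the uniform-block stochastic solver are cheap — the $d^2$ term is precisely the $m r$-type term of Theorem~\ref{thm:main_2} instantiated with $m=\tilde O(d)$ — and the sparsity of $\A$ is used only in the outer loop's gradient computation. The inexactness of the inner solver is then handled with a short preconditioned-gradient-descent contraction argument ($\tilde\H\approx_{1+1/8}\H$ plus an inner solve to relative error $1/4$ in $\|\cdot\|_{\tilde\H}$). This outer-preconditioner/inner-stochastic-solver decomposition is the missing idea: without it you cannot decouple the $\nnz(\A)$-level work from the dense block-coordinate iterations.
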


\paragraph{Positive semidefinite systems.}
An important application of our results is solving symmetric positive semidefinite (psd) linear systems, which arise from data covariances, kernel matrices, Hessian matrices, and more. While Theorem \ref{t:main-simple} already provides improved time complexity guarantees for dense psd systems, a careful adaptation of our algorithm allows it to exploit the sparsity structure of the problem, as well as improving the $\polylog(n)$ factors in the bounds. Similarly to prior stochastic psd linear system solvers \cite{lee2013efficient}, our complexity bounds depend on the following notion of sparsity, $\nnz^*(\A)=n\cdot\max_i\{\nnz(\A_{i,:})\}\leq n^2$, which is proportional to $\nnz(\A)$ when the non-zeros are distributed somewhat uniformly. For a psd matrix $\A$, we define $\|x\|_\A=\sqrt{\x^\top\A\x}$.
\begin{theorem}[PSD linear system, simplified Theorem~\ref{thm:main_psd}]\label{t:psd-simple}
Given a psd matrix $\A\in\R^{n\times n}$ with $k$ large eigenvalues, $\b\in\R^n$, and $\epsilon>0$, we can find $\tilde \x$ such that $\|\tilde\x-\A^\dagger\b\|_\A\leq \epsilon \|\A^\dagger\b\|_\A$ in time:
\begin{align*}
\tilde O\Big((\nnz^*(\A)+nk^{\omega-1})\log1/\epsilon\Big).
\end{align*}
\end{theorem}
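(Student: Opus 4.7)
The plan is to adapt the randomized block coordinate descent framework underlying Theorem \ref{thm:main_2} to the psd setting, where the natural $\|\cdot\|_\A$ convergence measure permits tighter guarantees and row-sparsity of $\A$ can be exploited. Since $\A$ is psd, solving $\A\x=\b$ is (on $\range(\A)$) equivalent to minimizing the quadratic $f(\x)=\tfrac12\x^\top\A\x-\b^\top\x$, and the target error $\|\x-\A^\dagger\b\|_\A^2$ equals $2\bigl(f(\x)-f(\A^\dagger\b)\bigr)$. I would run block coordinate descent starting from $\x^{(0)}=0$, at each step sampling a block $S\subset[n]$ with $|S|=O(k)$ and performing the exact block minimization
\begin{align*}
\x^{(t+1)} \;=\; \x^{(t)} + \I_{:,S}\,\A_{S,S}^{-1}\bigl(\b-\A\x^{(t)}\bigr)_S.
\end{align*}

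For the convergence analysis, setting $\x^*=\A^\dagger\b$, the error recursion $\x^*-\x^{(t+1)} = (\I-\I_{:,S}\A_{S,S}^{-1}\I_{S,:}\A)(\x^*-\x^{(t)})$ becomes, after conjugation by $\A^{1/2}$, a complementary orthogonal projection $\I-\P_S$, where $\P_S=\A^{1/2}\I_{:,S}\A_{S,S}^{-1}\I_{S,:}\A^{1/2}$. Hence
\begin{align*}
\E\bigl[\|\x^*-\x^{(t+1)}\|_\A^2\bigr] \;\leq\; \bigl(1-\lambda_{\min}(\E[\P_S])\bigr)\,\|\x^*-\x^{(t)}\|_\A^2.
\end{align*}
The majorization-of-elementary-symmetric-polynomials bound developed for Theorem \ref{thm:main_2}, applied to $\A^{1/2}$, then yields $\lambda_{\min}(\E[\P_S])=\Omega(k/n)$ under the $k$-large-eigenvalues hypothesis, giving $T=\tilde O((n/k)\log(1/\epsilon))$ iterations. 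Working directly in $\|\cdot\|_\A$ rather than passing through a residual bound avoids an extra $\kappa$ factor and is one source of the improved polylogarithmic dependence in Theorem \ref{thm:main_psd} relative to Theorem \ref{thm:main_2}.

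To make each iteration cheap enough, I would maintain the residual $\r^{(t)}=\b-\A\x^{(t)}$: the block right-hand side $\r^{(t)}_S$ is then read off in $O(k)$ time, the block solve takes $O(k^\omega)$, and the incremental update $\r^{(t+1)}=\r^{(t)}-\A_{:,S}(\x^{(t+1)}_S-\x^{(t)}_S)$ touches $\nnz(\A_{:,S})=\nnz(\A_{S,:})\leq k\cdot\max_i\nnz(\A_{i,:})$ entries by symmetry of $\A$. Summed over $T$ iterations, this telescopes to $\tilde O(\nnz^*(\A)\log(1/\epsilon))$ for residual maintenance and $\tilde O(nk^{\omega-1}\log(1/\epsilon))$ for block solves, matching the claim.

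The main obstacle is sampling $S$: exact size-$O(k)$ DPP sampling is far too expensive to afford per iteration, and naive i.i.d.\ or uniform surrogates do not preserve the $\Omega(k/n)$ contraction. I would import the Markov chain coupling argument from Theorem \ref{thm:main_2} to construct a cheap surrogate distribution coupled to the DPP tightly enough that the contraction is preserved up to constants. The technical work specific to Theorem \ref{thm:main_psd} is engineering this sampler so that each step costs only $\tilde O(\nnz^*(\A)/n + k^{\omega-1})$ amortized time, rather than the $\tilde O(n+k^{\omega-1})$ sufficient in the dense case, by leveraging the maintained residual together with a sparse representation of $\A$ to avoid any per-iteration dense operation on the full matrix.
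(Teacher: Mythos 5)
Your overall framework---block coordinate descent, conjugation by $\A^{1/2}$ to obtain the complementary-projection recursion, the ESP/majorization lower bound applied to $\A^{1/2}$, and incremental residual maintenance costing $O(\nnz(\A_{S,:}))$ per step so that the total residual cost telescopes to $\tilde O(\nnz^*(\A)\log(1/\epsilon))$---is exactly what the paper does, and this much of your proposal is correct.

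However, your proposal for the sampling step is both different from the paper's and, as stated, would not yield the claimed complexity. You propose to ``import the Markov chain coupling argument from Theorem \ref{thm:main_2}'' (i.e., the RHT + uniform-sampling + ghost-DPP coupling). This is not what the paper does for the psd case, and with good reason: that coupling argument requires a preprocessing step that applies the randomized Hadamard transform to $\A$ to make the $k$-DPP marginals near-isotropic. For a psd matrix this would have to be a two-sided conjugation $\A \leftarrow \Q\A\Q^\top$, which (a) costs $O(n^2\log n)$ and (b) destroys any sparsity in $\A$, so you would lose exactly the $\nnz^*(\A)$ scaling that Theorem~\ref{t:psd-simple} is about. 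Your sentence claiming one should engineer the surrogate sampler to run in $\tilde O(\nnz^*(\A)/n + k^{\omega-1})$ amortized time is a placeholder, not an argument.

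The key observation you miss is that in the psd case the coupling workaround is unnecessary. The obstacle in the rectangular case was that direct MCMC $k$-DPP sampling from $\A\A^\top$ requires $\tilde O(mnk^{\omega-2})$ preprocessing, which is too expensive. But when $\A$ itself is the given psd kernel, the MCMC $k$-DPP sampler of \cite{alv22} (Lemma~\ref{lem:anari_main_1}, Lemma~\ref{lem:anari_main_2}) applies directly: a one-time preprocessing step costing $\tilde O(n\tau^{\omega-1})$ computes marginal overestimates, after which each approximate $\tau$-DPP sample costs only $\tilde O(\tau^\omega)$. Both are within the stated budget. The only ``coupling'' needed is the standard total-variation coupling between the approximate and exact DPP to transfer Lemma~\ref{lem:improve_bound}, which is a much weaker use of coupling than the ghost-algorithm argument you invoke. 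This also explains the improved polylog dependence: it comes from avoiding the $\log^3 n$ loss incurred by the uniform-sampling coupling scheme, not (as you suggest) from measuring convergence in $\|\cdot\|_\A$; the apparent $\bar\kappa_k$-versus-$\bar\kappa_k^2$ improvement is just a relabeling, since $\bar\kappa_k$ is defined with first powers of eigenvalue ratios in the psd case and second powers of singular-value ratios otherwise.
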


Our results naturally apply to regularized linear systems, such as kernel ridge regression, i.e., $(\K+\lambda\I)\x=\b$ for a psd matrix $\K$, and $l_2$-regularized least squares, $\min_\x\|\A\x-\b\|^2+\lambda\|\x\|^2$ (as special cases of Theorem~\ref{t:psd-simple} and Theorem~\ref{t:ls-simple}, respectively). In these cases, we provide the first complexity bounds that scale with the number of eigenvalues (of $\K$ or of $\A^\top\A$) larger than the regularization parameter $\lambda$. This is in contrast to prior works that directly exploit this explicit regularization structure \cite{acw17,ftu21}, where the time complexity scales with a notion of $\lambda$-effective dimension which can be arbitrarily larger than our $k$. Thus, remarkably, we achieve a direct runtime improvement over explicitly regularized linear system solvers, even though our algorithms are entirely oblivious to the regularization structure themselves (see Section~\ref{s:related-work} for discussion). 

\paragraph{Numerical stability of our methods.} For the sake of simplicity, all our results assume the exact arithmetic (Real-RAM) model. However, we do not foresee any obstacles in establishing the numerical stability of our algorithms in a Word-RAM model with word sizes polylogarithmic in $n$, the condition number of $\A$, and precision $1/\epsilon$. In fact, our convergence analysis of the main algorithmic framework already allows for approximate computations (Section \ref{sec:approx_sketch_proj}), so that an inner solver can be employed to perform them. Thus, the question reduces to the numerical stability of the inner solver. For this, we use preconditioned conjugate gradient (PCG), where the preconditioner is constructed by sketching a tall matrix (i.e., multiplying by a sparse oblivious subspace embedding), and then computing a singular value decomposition of the sketch. While the conjugate gradient method can behave differently in finite precision than in the Real-RAM model, it has been established that it still obtains essentially the same worst-case convergence rate (stated in our \Cref{lem:cg}) in finite precision \cite{Greenbaum:1989,musco2018stability}.
The preconditioner can be constructed in a stable way using black-box (fast) linear algebra routines \cite{demmel2007fast}.

\subsection{Our Techniques}

A core inspiration for our algorithms is  a broad family of stochastic iterative methods for solving linear systems, which includes randomized Kaczmarz \cite{strohmer2009randomized}, coordinate descent \cite{leventhal2010randomized}, sketch-and-project \cite{gower2015randomized}, and many others \cite{lee2013efficient}. Unlike CG-type methods which do a pass over the entire matrix $\A$ in each iteration, these approaches randomly sub-sample a small number of rows/columns of $\A$ (or construct a small sketch of $\A$) and use this stochastic information to perform a sublinear iterative update. There has been much success in characterizing the convergence and complexity of stochastic methods in the ultra-small sample regime (i.e., single row/column per iteration \cite{strohmer2009randomized}), but the theoretical analysis of the larger sample regime has proven much more elusive. While there has been some progress in this direction over the last few years, it has come with trade-offs: meaningful characterizations of the convergence behavior have required relying on either constructing sophisticated and expensive sub-sampling distributions (such as determinantal point processes \cite{rk20,mutny2020convergence}), or even more expensive dense sketching methods (such as the Gaussian sketch \cite{rebrova2021block,dr22}), rendering the entire procedure computationally impractical.

In light of this context, when designing and analyzing our stochastic iterative solvers, we encountered three main bottlenecks which had to be overcome to establish any improvement over CG-based complexity guarantees for solving dense linear systems:
\begin{enumerate}
    \item Obtaining a sharp convergence guarantee that depends on the spectral structure of $\A$;
    \item Relying on a fast sub-sampling/sketching scheme;
    \item Ensuring fast computation of the iterative updates.
\end{enumerate}

\textbf{1. Convergence guarantee.} As a starting point in our analysis we used recent works on analyzing the convergence rate of block coordinate descent algorithms using Determinantal Point Processes (DPP) as the sub-sampling method. DPPs form a family of discrete distributions over subsets of a finite domain, e.g., $\{1,...,n\}$, where the probabilities of each subset are proportional to a determinant of a certain sub-matrix (the uniform distribution over spanning trees of a graph is one example). Exploiting certain formulas over sums of determinants and adjugate matrices, recent works \cite{rk20,mutny2020convergence} have characterized the convergence rate of block coordinate descent algorithms with DPP and k-DPP (DPP with size $k$ subsets) sub-sampling  in terms of the elementary symmetric polynomials (ESP) of the singular values of matrix $\A$. Of particular relevance to our results are the guarantees obtained by Rodomanov and Kropotov \cite{rk20}. However, their convergence bounds by themselves are not sufficiently sharp to obtain any improvement over CG-type methods, even if we entirely ignored the remaining two bottlenecks. Thus, as a first step, we improve the convergence guarantees of \cite{rk20} for k-DPP-based block coordinate descent by applying theory of majorization, via Schur-concavity of the ratios of ESPs, leading to a factor $k$ improvement in the convergence complexity. In particular, we show that for matrices with $k$ large singular values, using $\tilde O(k)$-sized blocks, we need $\tilde O(n/k)$ iterations to solve a linear system.

\textbf{2. Fast sub-sampling.} There has been much recent work on efficiently sampling from discrete distributions, including strongly Rayleigh measures \cite{anari2016monte} such as k-DPPs. Most recently, Anari, Liu and Vuong \cite{alv22} provided nearly-tight complexity analysis for a Markov chain Monte Carlo (MCMC) algorithm that samples from a k-DPP (to within a small total variation distance). However, the time complexity of their algorithm is still not sufficient to produce a stochastic method that improves upon CG-type solvers for dense linear systems, as it requires performing multiple singular value decompositions of $n \times k$ matrices to produce a single k-DPP sample, at the cost of at least $\tilde O(nk^{\omega-1})$ per iteration. To address this, in the next step of our analysis, we show that the convergence guarantees of k-DPP sampling can be retained (up to logarithmic factors) by a much simpler and cheaper sub-sampling procedure, after preprocessing the linear system with a randomized Hadamard transform \cite{t11}. We show this by coupling a hypothetical run of the MCMC algorithm from \cite{alv22} with our sub-sampling procedure, in such a way that the k-DPP sample will (with high probability) be contained within our sample.

\textbf{3. Fast updates.}  Standard block coordinate methods perform their iterations by reducing the $n\times n$ linear system to a much smaller $k\times n$ linear system, which is extremely under-determined, and then choose a solution with the smallest distance from the current iterate. For $k=O(1)$, this is a very simple and inexpensive update, but as $k$ grows, the cost of solving the system exactly becomes prohibitively expensive. However, we can leverage the fact that, unlike the original system, this new system is extremely rectangular, which makes it much easier to construct an effective preconditioner via sketching \cite{rokhlin2008fast,msm14,woodruff2014sketching}. We prove that $O(\log(n/k))$ (inner) iterations of preconditioned CG can be used to solve this sub-problem to within sufficient accuracy so that the overall convergence complexity of the stochastic method remains unchanged. This improves the cost of the stochastic updates from $O(nk^{\omega-1})$ to $\tilde O(nk+k^\omega)$. Since we showed that $\tilde O(n/k)$ updates are needed to converge, this leads to the final complexity result in Theorem \ref{t:main-simple}.

\subsection{Background and Related Work}
\label{s:related-work}

\textbf{Linear system solvers.}
Extensive literature has been dedicated to solving linear systems both via direct and iterative methods. The direct approaches are based on the strategy of applying a (possibly implicit) representation of $\A^{-1}$ to the vector $\b$. Examples of this approach include Gaussian elimination, Cholesky factorization and QR decomposition. A key breakthrough in the time complexity of direct methods came from Strassen, who showed that matrix inversion is equivalent to matrix multiplication \cite{strassen1969gaussian}. This kickstarted a long line of works on fast $O(n^\omega)$ matrix multiplication \cite{pan1984multiply}, including the Coppersmith-Winograd method \cite{coppersmith1987matrix}. This is still an active area of research \cite{williams2012multiplying,le2012faster,alman2021refined,williams2023new}. 

There has also been much work on accelerating iterative linear system solvers, particularly those based on Krylov subspace methods, such as conjugate gradient / Lanczos. Here, most of the efforts have been concentrated on constructing effective preconditioners, for example via specialized efficient Cholesky factorizations \cite{kyng2016sparsified,kyng2016approximate}, or on improving their numerical stability \cite{peng2021solving,nie2022matrix}. However, these approaches have focused on sparse and/or structured systems, e.g., those decomposable into a sum of simple components, particularly Laplacians of undirected graphs \cite{vaidya1989speeding,spielman2014nearly,koutis2012fast}, and their extensions \cite{cohen2018solving}. Other examples of structured systems with efficient solvers include Hankel/Toeplitz matrices \cite{kailath1979displacement,xia2012superfast} and circulant matrices \cite{gray2006toeplitz}. 

\textbf{Preconditioning regularized systems.}
Linear systems with an explicit regularization term form another class of structured systems that can benefit from effective preconditioning. The main examples here are: the $l_2$-regularized least squares task, $\min_\x\|\A\x-\b\|^2+\lambda\|\x\|^2$; kernel ridge regression, $(\K+\lambda\I)\x=\b$ where $\K$ is a psd kernel matrix; and the regularized Newton step, $(\H+\lambda\I)\x=\g$, where $\H$ is the Hessian matrix, which is symmetric but may not be psd. A long line of works have used randomized sketching, sampling, and the Nystr\"om method \cite{em14,rcr17,mm17} to construct effective low-rank preconditioners for these systems \cite{acw17,ftu21}, relying on the explicit regularizer to avoid power iteration. To construct a good preconditioner for, say, the matrix $\K+\lambda\I$, those algorithms construct a  decomposition of $\K$ with rank at least proportional to the so-called $\lambda$-effective dimension: $d_\lambda=\tr(\K(\K+\lambda\I)^{-1})\leq n$ \cite{em14}. The best known cost of this preconditioning step, even exploiting the structural properties of psd matrices, is at least $\tilde O(n d_\lambda^{\omega-1})$ \cite{mm17}. Similarly as our parameter $k$, the $\lambda$-effective dimension exploits the flat tail of the spectrum of $\K+\lambda\I$. In fact, it is easy to see that such a matrix may have no more than $k=2d_\lambda$ eigenvalues larger than twice its smallest eigenvalue. However the bound cannot be reversed: for any $n$, $k$ and $\lambda$, one can construct an $n\times n$ psd matrix $\K$ such that $\K+\lambda\I$ has only $k$ eigenvalues larger than twice its smallest eigenvalue, and yet its $\lambda$-effective dimension is at least $n/2$, in which case, the preconditioning cost becomes $\Theta(n^\omega)$, as large as the cost of directly solving the system. Thus, our algorithms not only recover but also improve upon the time complexity of these specialized solvers, even without explicit access to the regularization term.

\textbf{Stochastic solvers.} The literature on stochastic iterative methods for solving linear systems has its origins in the Kaczmarz method \cite{kaczmarz37}, which performs iterative updates while cycling through the rows of the system and projecting the current iterate onto the hyperplane formed by an individual row. A randomized version of this method was introduced and analyzed by Strohmer and Vershynin \cite{strohmer2009randomized}, followed up by closely related randomized coordinate descent methods for psd linear systems and least squares by Leventhal and Lewis \cite{leventhal2010randomized}. This triggered much research into stochastic iterative solvers, leading to more general classes of methods that sample blocks of rows instead of a single row \cite{needell2013two,needell2014paved,mutny2020convergence}, or use matrix sketching \cite{rebrova2021block,dr22} via the Sketch-and-Project framework \cite{gower2015randomized}. Despite these efforts, it is still not fully understood when stochastic solvers are faster than deterministic (e.g., Krylov-based) iterative methods. One such result was obtained by Lee and Sidford \cite{lee2013efficient}, who showed that the convergence rate of an accelerated stochastic coordinate descent solver exhibits better condition number dependence than conjugate gradient for psd linear systems. However, this still leads to poor performance for matrices that have an ill-conditioned top-$k$ part of the spectrum, whereas our algorithms avoid this by using blocks of coordinates (rows or columns) as opposed to one coordinate at a time. 

\textbf{Matrix sketching and sub-sampling.} Our techniques are closely related to matrix sketching and sampling techniques commonly used in randomized linear algebra \cite{halko2011finding,woodruff2014sketching,drineas2016randnla,martinsson2020randomized,randlapack_book}, which is an area that has led to improved randomized approximation algorithms for tasks such as least squares regression \cite{sarlos2006improved,rokhlin2008fast,clarkson2013low}, low-rank approximation \cite{cohen2015dimensionality,cohen2017input,li2020input}, $l_p$ regression \cite{meng2013low,cohen2015lp,cd21,wang2022tight}, linear programming \cite{cohen2021solving} and more \cite{song2019relative,jiang2020faster}. Some of the key matrix sketching techniques we use in this work are subspace embeddings \cite{clarkson2013low,nelson2013osnap,meng2013low,c16,chenakkod2023optimal}, which can be used to construct effective preconditioners for highly over- or under-determined linear systems (e.g., least squares). In particular, our fast least squares solver is an accelerated version of the sketch-to-precondition algorithm by Rokhlin and Tygert \cite{rokhlin2008fast}, which, using current subspace embedding techniques, can solve this task in $\tilde O(\nnz(\A) + d^\omega)$ time by sketching the tall $n\times d$ matrix $\A$ down to an $\tilde O(d)\times d$ size, and then performing a QR/SVD decomposition to produce a preconditioner. When the matrix $\A$ has only $k$ large singular values, then this can be improved to $\tilde O(\nnz(\A) +d^2k^{\omega-2})$, by using randomized low-rank SVD with power iteration \cite{halko2011finding}. We instead use our stochastic solver to avoid constructing an explicit preconditioner altogether (see Table \ref{tab:summary} for comparison). Finally, as mentioned earlier, many randomized low-rank approximation methods mentioned above, such as \cite{cohen2015dimensionality,cohen2017input}, avoid power iteration and run in linear time. However, these approaches provide a Frobenius norm error guarantee (as opposed to an approximation of all top $k$ singular values), which does not translate into a strong preconditioner for a linear system.

One of the key matrix sub-sampling techniques we use (albeit mainly for our theoretical analysis, rather than as an algorithmic routine) is called determinantal point processes (DPP), also known as volume sampling. This is a family of discrete distributions which has its origins in physics \cite{macchi1975coincidence} and graph theory \cite{guenoche1983random}. More recently, DPPs have been used in randomized linear algebra \cite{dm21} to construct optimal coresets for the column subset selection problem \cite{deshpande2006matrix,gs12,dkm20} and least squares regression \cite{derezinski2019minimax}, as well as for discrete and stochastic optimization \cite{nikolov2019proportional,derezinski2020bayesian,rk20,mutny2020convergence}. Early algorithmic approaches to DPP sampling required an exact SVD of the input matrix \cite{hough2006determinantal,kt12}, but a more recent line of works showed that this can be avoided \cite{derezinski2019fast,dcv19}, culminating in fast Markov chain Monte Carlo sampling algorithms for DPPs \cite{anari2016monte,anari2020isotropy,alv22}. Still, these methods lead to a substantial overhead in time complexity for our algorithms, which is why we do not use them explicitly except for certain special cases (namely, psd linear systems).

\section{Preliminaries}\label{sec:prelim}
\paragraph{Problem Setup and Notation.}
Throughout most of the paper, we will focus on a linear system $\A\x = \b$ where $\A \in \R^{m \times n}$ and $\b \in \R^m$. Let $\rank(\A) = r$ and $\{\sigma_i\}_{i=1}^r$ be the decreasing singular values of $\A$.  There are two cases we consider. In the first case where $\b \in \range(\A)$, the linear system is consistent and has one unique min-length solution $\x^* = \A^\dagger \b$. In the second case where $\b \notin \range(\A)$, the linear system is inconsistent and there is no exact solution, and we denote $\x^* = \A^\dagger \b$ as the solution of the least squares problem $\min_{\x} \|\A\x - \b\|$.
As a special case, we also consider solving a symmetric positive semidefinite definite (psd) linear system $\A\x = \b$ where $\A\in\mathcal S_n$, $\b\in\R^n$, and we use $\mathcal S_n$ to denote the psd cone.

For vector $\v$, we use $\|\v\|$ to denote its Euclidean norm, and for a psd matrix $\M$, we denote $\|\v\|_\M := \sqrt{\v^\top\M\v}$. For matrix $\A$, we denote $\|\A\|_2$ and $\|\A\|_F$ as its operator norm and Frobenius norm respectively, and denote $\kappa(\A)$ as its condition number. We also define the averaged condition number of $\A$ as $\bar{\kappa}(\A) := \frac{\|\A\|_F \|\A^\dagger\|_2}{\sqrt{\rank(\A)}}\leq \kappa(\A)$, which arises in the analysis of our algorithms. For $1\leq k\leq \rank(\A)$, we denote the best rank-$k$ approximation of $\A$ as $\A_k := \argmin_{\rank(\hat{\A}) = k} \|\A-\hat{\A}\|_2$, and denote $\bar{\kappa}_k := \bar{\kappa}(\A-\A_k)$ (i.e., the averaged condition number of the tail of the spectrum). Let $\{\sigma_i\}_{i=1}^r$ be the singular values of $\A$ in decreasing order, if we assume that there exists $k$ such that $\sigma_i/\sigma_r = O(1)$ holds for all $i\geq k$, then we have $\bar{\kappa}_k = (\frac{1}{r-k}\sum_{j>k}^r \sigma_j^2 / \sigma_r^2)^{1/2} = O(1)$. For simplicity of presentation, throughout the paper we assume that $\bar{\kappa}_k< m^4$, which is a rather loose condition. For two $n \times n$ psd matrices $\A$ and $\B$, we say $\A \approx_{1+\epsilon} \B$ when $\frac{1}{1+\epsilon} \A \preceq \B \preceq (1+\epsilon)\A$ where $\preceq$ denotes the matrix ordering.

For $1 \leq \tau \leq m$, we use $\tbinom{[m]}{\tau}$ to denote the set of all the $\tau$-element subsets of $[m] = \{1, 2, \ldots, m\}$. For each $S\in\tbinom{[m]}{\tau}$, we use $\I_S\in\R^{\tau \times m}$ to denote the matrix obtained from $m\times m$ identity matrix $\I$ by only retaining the rows whose indices are in $S$. For matrix $\L \in \R^{m \times m}$ and subset $S\in\tbinom{[m]}{\tau}$, we use $\L_{S, S} = \I_S\L\I_S^\top  \in \R^{\tau \times \tau}$ to denote the principal submatrix located at the intersection of the rows and columns with indices from $S$, we also denote $\L_{:,S} = \L\I_S^\top$ and $\L_{S,:} = \I_S \L$ respectively. For a square $n\times n$ matrix $\L$, we use $\adj(\L)$ to denote its adjugate matrix, which is an $n\times n$ matrix such that $\adj(\L)_{ij} = (-1)^{i+j}\det(\L_{[n]\backslash j,[n]\backslash i})$.

\paragraph{Elementary symmetric polynomials.} We next introduce some concepts that are needed for the convergence analysis of our algorithms, starting with elementary symmetric polynomials (ESP) and their properties.
\begin{definition}[Elementary symmetric polynomial]
For a given vector $\lambda = (\lambda_1, \ldots, \lambda_m) \in \R^m$, we define its $\ell^{th}$ elementary symmetric polynomial as 
\begin{align*}
\sigma_{\ell}(\lambda) := \sum_{S \in \tbinom{[m]}{\ell}}\prod_{i \in S} \lambda_i.
\end{align*}
\end{definition}

\begin{lemma}[Sum of principal minors]\label{lem:sum_minor}
Let $\L\in\R^{m \times m}$ be a symmetric matrix with eigenvalues $\lambda = (\lambda_1, \ldots, \lambda_m)$ where $\lambda_1 \geq \cdots \geq \lambda_m$. Let $1\leq \ell \leq m$, then we have
\begin{align*}
\sum_{S \in \tbinom{[m]}{\ell}} \det(\L_{S, S}) = \sigma_{\ell} (\lambda).
\end{align*}
\end{lemma}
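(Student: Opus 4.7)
The plan is to prove the identity by computing the characteristic polynomial of $\L$ in two different ways and matching the coefficients of a particular power of the indeterminate. Specifically, consider $p(t) = \det(t\I - \L)$. Since $\L$ is symmetric, its eigenvalues $\lambda_1, \ldots, \lambda_m$ are real, so the eigenvalue factorization gives
\begin{align*}
p(t) = \prod_{i=1}^m (t - \lambda_i) = \sum_{\ell=0}^m (-1)^\ell \sigma_\ell(\lambda)\, t^{m-\ell}.
\end{align*}
The goal is to establish, independently, that the coefficient of $t^{m-\ell}$ in $p(t)$ also equals $(-1)^\ell \sum_{S \in \binom{[m]}{\ell}} \det(\L_{S,S})$; matching the two expressions yields the lemma.

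To compute the coefficient on the determinant side, I would use the multilinearity of $\det$ in the columns of $t\I - \L$. Writing the $j$-th column as $t\e_j - \L_{:,j}$ and expanding, one obtains a sum over all subsets $T \subseteq [m]$, where in each term column $j$ contributes $t\e_j$ when $j \in T$ and $-\L_{:,j}$ when $j \notin T$. For a fixed $T$, expanding the determinant cofactor-wise along the $|T|$ columns of the form $t\e_j$ picks off the corresponding rows indexed by $T$, each contributing a factor of $t$, and leaves behind the $(|T^c| \times |T^c|)$ block of $-\L$ indexed by $T^c$ in both rows and columns. Therefore that term equals $t^{|T|}(-1)^{|T^c|}\det(\L_{T^c, T^c})$, and summing gives
\begin{align*}
p(t) = \sum_{T \subseteq [m]} t^{|T|}(-1)^{m-|T|}\det(\L_{T^c, T^c}) = \sum_{\ell=0}^m (-1)^\ell\, t^{m-\ell} \sum_{S \in \binom{[m]}{\ell}} \det(\L_{S,S}),
\end{align*}
after reindexing $S = T^c$ and $\ell = |S|$. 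Equating the $t^{m-\ell}$ coefficient here with $(-1)^\ell \sigma_\ell(\lambda)$ from the eigenvalue expansion proves the claim.

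The proof has no genuine obstacle; it is a classical consequence of the two standard ways to write the characteristic polynomial. The only step requiring care is the column-wise cofactor expansion and the associated sign tracking, to ensure that the factor $(-1)^{|T^c|}$ from pulling the minus signs out of the $|T^c|$ columns of $-\L$, combined with the reindexing, correctly produces the sign $(-1)^\ell$ matching the eigenvalue side. Note that symmetry of $\L$ is used only to guarantee that the $\lambda_i$ are real (so $\sigma_\ell(\lambda)$ is well defined as a real number); the combinatorial identity between the coefficient and the principal-minor sum holds for any square matrix.
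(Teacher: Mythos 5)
Your proof is correct. The paper states this lemma as a classical preliminary fact and does not supply its own proof, so there is nothing to compare against; your argument — computing $\det(t\I-\L)$ once via the eigenvalue factorization and once via multilinearity in the columns, then matching the $t^{m-\ell}$ coefficients — is the standard textbook derivation, and the sign bookkeeping ($(-1)^{|T^c|}$ from pulling $-1$ out of the $|T^c|$ non-basis columns, matching $(-1)^\ell$ after reindexing $S=T^c$) is handled correctly. Your closing remark is also right: the identity holds for arbitrary square matrices with complex eigenvalues, and symmetry is invoked only so that the $\lambda_i$ are real and can be listed in decreasing order as the paper's notation requires.
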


\paragraph{Determinantal point processes.}
Given a PSD matrix with dimension $m$, a determinantal point process ($\dpp$) is a probability measure defined over $2^{[m]}$, i.e., all the subsets of $[m]$. There is a special class of $\dpp$s where we restrict the size of the subsets to be fixed, we call this a fixed-size $\dpp$ or $k$-DPP, where $k$ is the size of the subsets. Formally we have the following definitions.
\begin{definition}[$k$-DPP]
Given a PSD matrix $\L\in\R^{m \times m}$, a $k$-$\dpp$ is a distribution on $\tbinom{[m]}{k}$ such that $\Pr\{S\} \propto \det(\L_{S,S})$ where $S$ satisfies $|S| = k$. We denote it as $k$-$\dpp(\L)$.
\end{definition}
An important special case of $k$-DPPs, used in some of the DPP sampling algorithms, is known as Projection DPPs.  
\begin{definition}[Projection DPP]
Given a PSD matrix $\L\in\R^{m \times m}$, a $k$-$\dpp$ is called a Projection $\dpp$ iff $k = \rank(\L)$. We denote it as $\mathrm{P}$-$\dpp(\L)$.
\end{definition}
When the PSD matrix $\L$ is clear in the text, we omit $\L$ and denote the $k$-DPP and projection DPP as $k$-DPP and P-DPP respectively. We have the following lemma which states that a $k$-DPP can be decomposed into a mixture of projection DPPs. This result has been used to develop some of the first DPP sampling algorithms, which require an eigendecomposition of~$\L$.
\begin{lemma}[Algorithm 8 and Theorem 5.2 in \cite{kt12}]\label{lem:sample_dpp_projection}
Given $\L \in \R^{m \times m}$ with $\rank(\L) = r$, let $\L = \sum_i \lambda_i \u_i\u_i^\top$ be the eigendecomposition of $\L$ where $\lambda_1\geq \cdots\geq \lambda_m$. Suppose that we sample $\gamma = (\gamma_1, \ldots, \gamma_r)$ where $\gamma_i\sim\mathrm{Bernoulli}(\frac{\lambda_i}{\lambda_i+1})$ are independent, and then sample $S \sim \mathrm{P}$-$\dpp(\sum_i \gamma_i \u_i\u_i^\top)$. Then, conditioned on the event that $\sum_{i} \gamma_i = k$, we have $S \overset{d}{=} S_{\dpp}$, where $S_{\dpp} \sim k$-$\dpp(\L)$.
\end{lemma}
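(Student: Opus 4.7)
The plan is to compute the marginal distribution of $S$ produced by the two-stage procedure conditioned on $\sum_i \gamma_i = k$ and verify that it matches the $k$-$\dpp(\L)$ mass function $\det(\L_{S,S})/\sigma_k(\lambda)$, where the normalizer comes from Lemma~\ref{lem:sum_minor}. First I would handle the Bernoulli stage: for each index set $A\subseteq [r]$ with $|A|=k$, the unconditional probability that $\gamma$ has ones exactly on $A$ is $\prod_{i\in A}\lambda_i/(\lambda_i+1)\cdot\prod_{i\notin A}1/(\lambda_i+1)$. Summing over all such $A$ gives $\Pr(\sum_i\gamma_i=k)=\sigma_k(\lambda)/\prod_i(\lambda_i+1)$, so that conditionally
\[
\Pr\bigl(\gamma \text{ selects } A \,\big|\, \textstyle\sum_i\gamma_i=k\bigr) \;=\; \frac{\prod_{i\in A}\lambda_i}{\sigma_k(\lambda)}.
\]

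Next, I would handle the projection-$\dpp$ stage. Given $A$, the matrix $\sum_{i\in A}\u_i\u_i^\top=\U_A\U_A^\top$ is a rank-$k$ orthogonal projection, where $\U_A$ stacks the eigenvectors indexed by $A$ into an $m\times k$ matrix with orthonormal columns. The projection DPP on this kernel has mass
\[
\Pr_{\mathrm{P\text{-}}\dpp}(S) \;=\; \det\bigl((\U_A\U_A^\top)_{S,S}\bigr) \;=\; \det\bigl((\U_A)_{S,:}(\U_A)_{S,:}^\top\bigr) \;=\; \det\bigl((\U_A)_{S,:}\bigr)^2
\]
for $|S|=k$, using $(\U_A\U_A^\top)_{S,S}=(\U_A)_{S,:}(\U_A)_{S,:}^\top$ and the determinant identity for square matrices. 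Summing these over $|S|=k$ gives $\det(\U_A^\top\U_A)=1$ by Cauchy--Binet, confirming normalization.

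The final step is to combine the two stages and recognize the result via Cauchy--Binet in the reverse direction. Writing $\L=\U\Lambda\U^\top$ with $\U\in\R^{m\times r}$ and $\Lambda=\diag(\lambda_1,\dots,\lambda_r)$, apply Cauchy--Binet to the product $(\U_{S,:}\Lambda)\,\U_{S,:}^\top$ to obtain
\[
\det(\L_{S,S}) \;=\; \sum_{A\subseteq[r],\,|A|=k}\;\Bigl(\prod_{i\in A}\lambda_i\Bigr)\,\det\bigl((\U_A)_{S,:}\bigr)^2.
\]
Multiplying the conditional Bernoulli mass with the projection-DPP mass and summing over $A$ therefore gives
\[
\Pr(S) \;=\; \sum_{A:\,|A|=k}\frac{\prod_{i\in A}\lambda_i}{\sigma_k(\lambda)}\cdot\det\bigl((\U_A)_{S,:}\bigr)^2 \;=\; \frac{\det(\L_{S,S})}{\sigma_k(\lambda)},
\]
which is exactly the $k$-$\dpp(\L)$ distribution after invoking Lemma~\ref{lem:sum_minor} to identify the normalizer. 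There is no real obstacle here beyond correctly bookkeeping the indices; the only step that requires a little care is invoking Cauchy--Binet twice (once to normalize the projection-DPP and once to re-expand $\det(\L_{S,S})$) and making sure the rank-deficient case is handled by restricting to the $r$ nonzero eigenvectors rather than all $m$ columns of the eigenbasis.
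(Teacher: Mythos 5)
Your derivation is correct. The paper itself does not prove this lemma---it cites it directly from Kulesza--Taskar (Algorithm~8 and Theorem~5.2 of \cite{kt12})---so there is no paper proof to compare against line by line; what you have written out is essentially the standard argument from that reference. Your bookkeeping checks out at each step: the conditional Bernoulli mass $\prod_{i\in A}\lambda_i/\sigma_k(\lambda)$ is right; the projection-DPP mass $\det\bigl((\U_A)_{S,:}\bigr)^2$ and its normalization via $\det(\U_A^\top\U_A)=1$ are right; and the re-expansion $\det(\L_{S,S})=\sum_{|A|=k}\prod_{i\in A}\lambda_i\cdot\det\bigl((\U_A)_{S,:}\bigr)^2$ is exactly Cauchy--Binet applied to $\L_{S,S}=(\U_{S,:}\Lambda^{1/2})(\U_{S,:}\Lambda^{1/2})^\top$ (equivalently, to $(\U_{S,:}\Lambda)\U_{S,:}^\top$ as you wrote it, splitting the $\Lambda$ between the two Gram factors). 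Identifying the normalizer $\sum_{|S|=k}\det(\L_{S,S})=\sigma_k(\lambda)$ with Lemma~\ref{lem:sum_minor} closes the argument. Restricting the Bernoulli draws to $i\in[r]$ is the right way to handle the rank-deficient case, since indices $i>r$ have $\lambda_i=0$ and would be selected with probability zero anyway. The only cosmetic remark is that you could collapse the two Cauchy--Binet invocations into one observation---both the projection-DPP normalization and the $\det(\L_{S,S})$ expansion are instances of the same Gram-determinant expansion---but presenting them separately, as you did, is clearer.
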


\paragraph{Markov chain $k$-DPP sampling.}
Recent works have shown that we can approximately sample from a $k$-DPP without an eigendecomposition of the $\L$ matrix, by relying on Markov chain Monte Carlo (MCMC) techniques, which apply more generally to all strongly Rayleigh measures \cite{alv22}. 
\begin{lemma}[Corollary 7 in \cite{alv22}]
\label{lem:anari_main_1}
Given an $m\times m$ PSD matrix $\L$, there is an algorithm that outputs $s$ independent approximate samples from $k$-$\dpp(\L)$ in time $\tilde{O}(mk^{\omega-1} + sk^\omega)$.
\end{lemma}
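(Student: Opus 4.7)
The plan is to combine two ingredients from the strongly-Rayleigh / high-dimensional-expander toolkit used in~\cite{alv22}. First, a domain sparsification step reduces the effective ground-set size from $m$ down to $\tilde{O}(k)$; then a rapidly-mixing local Markov chain (the down-up / base-exchange walk) is run on the sparsified kernel to generate each of the $s$ samples. The first step absorbs the $m$-dependence into a one-time preprocessing cost, while the second step is responsible for the per-sample cost.

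For the sparsification stage, I would (approximately) compute the marginal inclusion probabilities $\pi_i = \Pr[i \in S]$ for $i\in[m]$ under $k$-$\dpp(\L)$ in time $\tilde{O}(mk^{\omega-1})$, for example by building a Nystr\"om surrogate of $\L$ from an $m \times \tilde{O}(k)$ column sample and reading off leverage-type scores of its rows. A multiset $T$ of $\tilde{O}(k)$ indices drawn i.i.d.\ from $\pi$ then contains the support of a true $k$-DPP draw with high probability, and, conditional on $T$, the target distribution restricts to a $k$-DPP on a new $\tilde{O}(k) \times \tilde{O}(k)$ kernel $\L'$ that can be assembled within the same time budget.

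On $\L'$, I would then run the down-up walk: at each step drop a uniformly random element of the current set and propose a replacement with probability proportional to the resulting $k\times k$ principal minor of $\L'$. The key input, proved in~\cite{alv22} via spectral independence and a local-to-global argument, is that $k$-DPPs satisfy a modified log-Sobolev inequality of order $\Omega(1/k)$, so mixing from a warm start takes $\tilde{O}(k)$ swaps. Maintaining a Cholesky factorization of the current principal minor at initial cost $O(k^\omega)$, each swap is a rank-one update costing $O(k^{\omega-1})$, yielding $\tilde{O}(k^\omega)$ per independent sample and $\tilde{O}(mk^{\omega-1}+sk^\omega)$ in total. The hard part, which I would take as a black box, is the MLSI / spectral-independence bound that drives rapid mixing; once it is available, the domain-sparsification construction and the per-step linear-algebra bookkeeping above are comparatively routine.
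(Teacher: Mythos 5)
This is a cited lemma (Corollary~7 of \cite{alv22}), so the paper has no proof of its own to compare against; but it does summarize the mechanism in Lemma~\ref{lem:anari_main_2}, Algorithm~\ref{alg:dpp_sample}, and Section~\ref{sec:compare_sample_dpp}, and your reconstruction deviates from it in a way that breaks the claimed runtime.

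Your preprocessing stage (estimate marginal overestimates, sparsify the domain to $\tilde O(k)$ candidates at cost $\tilde O(mk^{\omega-1})$) is consistent with the paper's account of \cite{alv22} (their Theorem~34). The problem is the Markov chain you run on the sparsified kernel. You propose the single-element base-exchange walk (drop one element, resample one replacement) with $\tilde O(k)$ steps via an $\Omega(1/k)$ MLSI, and you claim $O(k^{\omega-1})$ per swap by maintaining a Cholesky factor. That per-step cost is not attainable: a rank-one update of a $k\times k$ Cholesky factor already costs $\Theta(k^2) > k^{\omega-1}$, and before you can even do the update you must compute the replacement distribution over $\tilde O(k)$ candidate elements, each conditional marginal requiring a quadratic form against a $(k-1)\times(k-1)$ inverse. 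Even with careful incremental maintenance this is $\tilde\Omega(k^2)$ per step, so $\tilde O(k)$ steps cost $\tilde\Omega(k^3)$ per sample, which exceeds $\tilde O(k^\omega)$ since $\omega<3$. The chain \cite{alv22} actually uses, and the paper describes in Algorithm~\ref{alg:dpp_sample} and Lemma~\ref{lem:anari_main_2}, is the \emph{block} down-up walk on the complement distribution: lift the current size-$k$ set to a random size-$O(k)$ superset $T$, then down-sample by drawing a fresh $k$-DPP on the $O(k)\times O(k)$ kernel $\L_{T,T}$. Only $O(\log^3 m)=\tilde O(1)$ such iterations are needed, and each costs $\tilde O(k^\omega)$ (an eigendecomposition plus the projection-DPP mixture, as in Lemma~\ref{lem:sample_dpp_projection}). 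That is how the $\tilde O(k^\omega)$ per-sample bound arises. Replacing a $\tilde O(1)$-step block walk with a $\tilde O(k)$-step single-swap walk is not an equivalent bookkeeping choice, and is the concrete gap in your argument.
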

The above result is based on the following down-up walk sampling algorithm (Algorithm~\ref{alg:dpp_sample}). Given a discrete measure $\mu$ over $\tbinom{[m]}{\tau}$, this procedure will converge to $\mu$ as its stationary distribution. 
\begin{algorithm}[!ht]
\caption{Down-up walk on the complement distribution for a discrete measure $\mu$.}
\label{alg:dpp_sample}
\begin{algorithmic}[1]
\For{$i = 0, 1, 2, \ldots$}
\State From all $t$-sized supersets of $S_i$, select one uniformly at random and name it $T_i$;
\State Select among $k$-sized subsets of $T_i$ a random set $S_{i+1}$ with $\Pr\{S_{i+1}\}\propto \mu(S_i)$.
\EndFor
\end{algorithmic}
\end{algorithm}

\paragraph{Total variation distance.}
In the analysis of sampling a $k$-DPP sample, we will need the following definitions and properties of total variation distance.
\begin{definition}[Total variation distance]
Let $\mu, \nu$ be probability measures defined on a measurable space $(\Omega, \mathcal{F})$, the total variation distance between $\mu$ and $\nu$ is defined as
\begin{align*}
\delta(\mu, \nu) = \sup_{A\in\mathcal{F}} |\mu(A) - \nu(A)|.
\end{align*}
\end{definition}

\begin{definition}[Coupling, see Definition 4.12 in \cite{v14}]
Let $\mu, \nu$ be two probability measures defined on $\Omega_1, \Omega_2$ respectively. We say a joint distribution of random variable $Z = (X,Y)$ over $\Omega_1 \times \Omega_2$ is a coupling of $\mu$ and $\nu$, if it has marginal distributions $X\sim \mu$ and $Y\sim \nu$.
\end{definition}

The following standard lemma states that the total variation distance between $\mu$ and $\nu$ is the minimum probability
that random variables $X\sim \mu$ and $Y\sim \nu$ do not coincide.

\begin{lemma}[Total variation, see Example 4.14 in \cite{v14}]
\label{lem:tv_distance}
Let $\mu, \nu$ be probability measures defined on some $(\Omega,\mathcal F)$. Then, we have
$\inf_{(X,Y)}\Pr\{X\neq Y\} = \delta(\mu, \nu)$,
where the infimum is taken over all couplings of $\mu$ and $\nu$.
\end{lemma}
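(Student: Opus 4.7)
The plan is to prove the two inequalities separately: (i) for any coupling $(X,Y)$ we have $\Pr\{X\neq Y\}\geq \delta(\mu,\nu)$, and (ii) there exists a coupling attaining equality. Together these yield the claimed identity, and since the paper uses this lemma for discrete $k$-DPP distributions on $\binom{[m]}{k}$, I will present the construction in the discrete setting (the general case follows by replacing sums with integrals against a dominating measure).

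For the lower bound, fix any coupling $(X,Y)$ and any measurable event $A\subseteq\Omega$. Using the marginal conditions,
\begin{align*}
\mu(A)-\nu(A)&=\Pr\{X\in A\}-\Pr\{Y\in A\}\\
&=\Pr\{X\in A,\, X=Y\}+\Pr\{X\in A,\, X\neq Y\}\\
&\quad-\Pr\{Y\in A,\, X=Y\}-\Pr\{Y\in A,\, X\neq Y\}.
\end{align*}
The two terms conditioning on $\{X=Y\}$ are identical and cancel. The remaining difference is bounded by $\Pr\{X\in A,\,X\neq Y\}\leq \Pr\{X\neq Y\}$. Since this holds for every $A$, taking the supremum gives $\delta(\mu,\nu)\leq \Pr\{X\neq Y\}$, and infimizing over couplings gives $\delta(\mu,\nu)\leq \inf_{(X,Y)}\Pr\{X\neq Y\}$.

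For the matching upper bound I will construct the so-called maximal coupling. Define $p(\omega)=\min\{\mu(\omega),\nu(\omega)\}$ and let $\beta=\sum_{\omega}p(\omega)$. A short computation gives $\beta=1-\delta(\mu,\nu)$, because $\sum_\omega(\mu(\omega)-p(\omega))=\sum_\omega\max\{\mu(\omega)-\nu(\omega),0\}=\delta(\mu,\nu)$ by taking $A=\{\omega:\mu(\omega)\geq\nu(\omega)\}$ in the definition of $\delta$. Now sample $(X,Y)$ as follows: with probability $\beta$, draw $Z\sim p/\beta$ and set $X=Y=Z$; with probability $1-\beta=\delta(\mu,\nu)$, draw $X$ and $Y$ independently from the normalized residuals $(\mu-p)/\delta(\mu,\nu)$ and $(\nu-p)/\delta(\mu,\nu)$, which are well-defined probability measures since $\mu-p$ and $\nu-p$ are nonnegative with total mass $\delta(\mu,\nu)$. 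A direct check on each $\omega$ confirms that the marginal of $X$ is $\mu$ and the marginal of $Y$ is $\nu$. Because the residuals $\mu-p$ and $\nu-p$ have disjoint support (one is zero wherever $\mu\leq\nu$ and vice versa), on the second branch $X\neq Y$ almost surely. Hence $\Pr\{X\neq Y\}=1-\beta=\delta(\mu,\nu)$, matching the lower bound.

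The only subtle step is verifying the identity $\sum_\omega p(\omega)=1-\delta(\mu,\nu)$ and the disjoint-support property of the residuals, both of which follow from the elementary identity $\min\{a,b\}=a-\max\{a-b,0\}$ applied pointwise and summed, using that the supremum in $\delta(\mu,\nu)$ is attained at $A=\{\omega:\mu(\omega)>\nu(\omega)\}$. There is no real obstacle in the discrete case; the construction is the standard maximal coupling, and the minor care needed is only to normalize the residual distributions correctly when $\delta(\mu,\nu)=0$ (in which case $\mu=\nu$ and $X=Y$ trivially).
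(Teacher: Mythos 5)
Your proof is correct, and it is the standard argument: the cancellation trick for the lower bound, and the maximal coupling (built from the pointwise minimum $p=\min\{\mu,\nu\}$ and the disjointly supported residuals) for the matching upper bound. Note that the paper does not supply its own proof of this lemma; it simply cites it from \cite{v14} (Example 4.14), so there is nothing to compare against. Two small remarks, neither of which is a real gap: (i) your lower bound shows $\mu(A)-\nu(A)\leq\Pr\{X\neq Y\}$, and you should observe that applying this to $A^c$ (or swapping the roles of $\mu,\nu$) gives the absolute value needed to match the definition $\delta(\mu,\nu)=\sup_A|\mu(A)-\nu(A)|$; (ii) your restriction to the discrete case is harmless here since the paper only invokes the lemma for $k$-DPPs on $\binom{[m]}{k}$, which are finite, but for the record the general measurable case goes through verbatim by writing $\mu,\nu$ as densities with respect to a common dominating measure such as $\mu+\nu$.
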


\paragraph{Randomized Hadamard transform.} We use the randomized Hadamard transform (RHT) as a preprocessing step in our algorithms, to ensure that a simple block coordinate sampling procedure works as well as a determinantal point process.
\begin{definition} \label{hada} The Hadamard matrix $\H_{2^k}$ of dimension $2^k \times 2^k$ is obtained using the following recurrence relation
\begin{align*}
    \H_0 &= [1],\qquad
    \H_{2n} = \begin{bmatrix}
        \H_n & \H_n \\
        \H_n & -\H_n
    \end{bmatrix}.
\end{align*}
\end{definition}
\noindent
Throughout the paper, we will drop the subscript from $\H$, when the dimensions are clear.
\begin{definition} \label{def:RHT}
The randomized Hadamard transform (RHT) of an $m \times n$ matrix $\A$ is the product $\frac{1}{\sqrt{m}}\H\D\A$, where $\D$ is a random $m \times m$ diagonal matrix whose entries are independent random $\pm1$ signs. By padding $\A$ with zero rows if necessary, we may assume that $n$ is a power of~2. 
\end{definition}

In the analysis of using randomized Hadamard transform, one key technique we use is the tail bound for a Rademacher random vector, which is defined as $\mathbf{d} = (d_1, \ldots, d_n)$ where each $d_i$ is an independent random sign variable that equals $1$ with probability $1/2$ and $-1$ otherwise. We have the following tail bound lemma.
\begin{lemma}[Rademacher tail bound, e.g., \cite{t11}]\label{lem:rademacher}
Let $f$ be a convex function on vectors with Lipschitz bound $|f(\x) - f(\y)| \leq L \|\x-\y\|$. Let $\mathbf{d}$ be a Rademacher vector. Then, for all $t \geq 0$,
\begin{align*}
\Pr\left\{f(\mathbf{d}) \geq \E[f(\mathbf{d})] + Lt\right\} \leq e^{-t^2/8}.
\end{align*}
\end{lemma}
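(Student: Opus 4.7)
The plan is to reduce the tail bound to a sub-Gaussian moment generating function (MGF) estimate via Chernoff's inequality, and then to establish that MGF estimate using the entropy (Herbst) method for the Rademacher product measure. By homogeneity we may assume $L=1$. For any $\lambda > 0$, Markov's inequality applied to $\exp(\lambda(f(\mathbf{d}) - \E f(\mathbf{d})))$ gives
\begin{align*}
\Pr\{f(\mathbf{d}) - \E f(\mathbf{d}) \geq t\} \;\leq\; e^{-\lambda t}\, \E\!\left[\exp\bigl(\lambda(f(\mathbf{d}) - \E f(\mathbf{d}))\bigr)\right],
\end{align*}
so it suffices to prove the sub-Gaussian MGF bound $\E[\exp(\lambda(f(\mathbf{d})-\E f(\mathbf{d})))] \leq \exp(2\lambda^2)$ for all $\lambda\in\R$. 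Optimizing $\lambda = t/4$ in the resulting Chernoff estimate produces exactly the desired $e^{-t^2/8}$ tail.

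For the MGF bound itself, I would apply the Herbst argument with the modified log-Sobolev inequality for the uniform measure on $\{-1,+1\}^n$. Writing $M(\lambda)=\E[\exp(\lambda f(\mathbf{d}))]$, the modified log-Sobolev inequality (combined with tensorization across the $n$ coordinates) controls $\mathrm{Ent}(e^{\lambda f})$ by a discrete Dirichlet-style energy of the form $\sum_i \E\bigl[(f(\mathbf{d}) - f(\mathbf{d}^{(i)}))^2\, e^{\lambda f(\mathbf{d})}\bigr]$, where $\mathbf{d}^{(i)}$ denotes $\mathbf{d}$ with its $i$-th coordinate flipped. Translating this into an ODE for $\log M(\lambda)$ via the standard Herbst trick and integrating from $0$ yields $\log M(\lambda) \leq \lambda\,\E f(\mathbf{d}) + 2\lambda^2$, which is exactly the sub-Gaussian envelope required above.

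The main obstacle is that this argument must produce a \emph{dimension-free} bound on the Dirichlet-style sum $\sum_i (f(\mathbf{d}) - f(\mathbf{d}^{(i)}))^2$; this is where the convexity hypothesis becomes essential. A coordinate flip moves the input by Euclidean distance $2$, so using only the $1$-Lipschitz property gives $|f(\mathbf{d}) - f(\mathbf{d}^{(i)})| \leq 2$, and a naive bounded-differences / McDiarmid argument then produces the weak bound $\exp(-t^2/(2n))$, whose $n$-dependence is fatal. Convexity lets us replace each discrete difference by a directional subgradient of $f$, so that the sum collapses to $\|\nabla f\|^2 \leq 1$ and the Dirichlet energy becomes $O(1)$ rather than $O(n)$. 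Since the resulting sub-Gaussian concentration inequality for convex Lipschitz functions on the hypercube is a classical result in the Talagrand / Ledoux-Talagrand line of work, in practice it suffices to cite it from \cite{t11} (or a standard textbook reference) and verify that the constant in the exponent is indeed $1/8$.
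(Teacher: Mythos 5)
The paper does not prove this lemma; it is stated as a known result cited from Tropp~\cite{t11}, who in turn invokes the classical concentration-of-measure literature for convex Lipschitz functions on product spaces (in the Talagrand/Ledoux line). Your proposal therefore does more work than the paper requires, and your outline via the entropy (Herbst) method is a legitimate route to this standard result. The arithmetic is self-consistent: a sub-Gaussian MGF bound $\E[\exp(\lambda(f-\E f))]\leq\exp(2\lambda^2)$ optimizes at $\lambda=t/4$ to give $e^{-t^2/8}$, and the variance proxy $4$ matches the fact that flipping one coordinate of $\mathbf{d}$ moves it by Euclidean distance $2$, so $\sum_i\bigl(f(\mathbf{d})-f(\mathbf{d}^{(i)})\bigr)_+^2\leq 4\|\nabla f(\mathbf{d})\|^2\leq 4L^2$ under the Lipschitz hypothesis (the positive part appearing because convexity only upper-bounds $f(\mathbf{d})-f(\mathbf{d}^{(i)})$ by a subgradient of $f$ at $\mathbf{d}$, not the absolute difference). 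Your diagnosis of why convexity is indispensable --- that without it one is stuck with the $n$-dependent McDiarmid bound $e^{-t^2/(2n)}$ --- is exactly the right structural point. The one caveat, which you already flag, is that nailing the constant $1/8$ requires fixing the precise form of the modified log-Sobolev inequality for the symmetric Bernoulli measure (including whether it is phrased with squared positive-part differences) and tracking the resulting Herbst ODE; since the result is classical, citing \cite{t11} as the paper does is the appropriate level of rigor here.
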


\paragraph{Conjugate gradient.}
We have the following algorithm and convergence guarantee of using conjugate gradient method to solve positive semidefinite linear system $\A\A^\top \u = \b$, where we are only given $\A$ and $\b$ without knowing $\A\A^\top$.
\begin{lemma}[Convergence of CG, \cite{h18}]
\label{lem:cg}
Consider solving positive semidefinite linear system $\A\A^\top \u = \b$ where $\A\in\R^{\tau \times n}, \b \in\R^{\tau}$ are given. Let $\kappa := \kappa(\A)$ be the condition number and $\u^* = (\A\A^\top)^{\dagger}\b$ be the min-length solution. Then, running conjugate gradient by starting from $\u_0 = \mathbf{0}$ will yield the following convergence result:
\begin{align*}
\|\u_s - \u^*\|_{\A\A^\top} \leq 2 \left(\frac{\kappa - 1}{\kappa+1} \right)^s \|\u^*\|_{\A\A^\top}.
\end{align*}
We denote the output of conjugate gradient as $\u_s = \mathrm{CG}(\A,\b,s)$.
\end{lemma}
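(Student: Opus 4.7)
The plan is to prove the bound through the classical Krylov subspace analysis of conjugate gradient, adapted to handle the possibly singular matrix $\M := \A\A^\top$. Writing $\u^* = \M^\dagger \b$ and decomposing $\b = \M\u^* + \b^\perp$ with $\b^\perp \in \mathrm{null}(\M)$, I will first observe that CG started at $\u_0 = \mathbf{0}$ produces iterates $\u_s$ satisfying $\u_s - \u^* \in \u^* + \mathcal{K}_s(\M,\b)$, and that the component of $\b$ in $\mathrm{null}(\M)$ is irrelevant for the $\M$-norm since $\|\cdot\|_\M$ annihilates $\mathrm{null}(\M)$. Thus it suffices to work with $\b$ replaced by $\M\u^*$, in which case $\u_s - \u^* \in \mathrm{range}(\M)$ and I can restrict all spectra to the positive eigenvalues of $\M$.

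Next I will invoke the defining optimality property of CG: among all $\u \in \mathcal{K}_s(\M,\b)$, the iterate $\u_s$ minimizes the energy-norm error $\|\u - \u^*\|_{\M}$. Equivalently, $\u_s - \u^* = p(\M)(\u_0 - \u^*) = -p(\M)\u^*$ for some polynomial $p$ of degree at most $s$ with $p(0)=1$, chosen to minimize $\|p(\M)\u^*\|_\M$. Diagonalizing $\M$ on its range then yields
\begin{align*}
    \|\u_s - \u^*\|_\M \;\leq\; \min_{\substack{\deg p \leq s\\ p(0)=1}} \max_{\lambda \in \mathrm{spec}(\M)\setminus\{0\}} |p(\lambda)| \cdot \|\u^*\|_\M.
\end{align*}

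The core analytic step is then to construct a good candidate polynomial $p$. I will use the shifted and scaled Chebyshev polynomial of the first kind on the interval $[\sigma_{\min}(\A)^2, \sigma_{\max}(\A)^2]$ containing all positive eigenvalues of $\M$. The standard minimax estimate gives
\begin{align*}
    \min_{\substack{\deg p \leq s\\ p(0)=1}} \max_{\lambda \in [\sigma_{\min}^2,\sigma_{\max}^2]} |p(\lambda)| \;\leq\; 2\left(\frac{\sqrt{\kappa(\M)} - 1}{\sqrt{\kappa(\M)}+1}\right)^s,
\end{align*}
and using $\kappa(\M) = \kappa(\A\A^\top) = \kappa(\A)^2$, so $\sqrt{\kappa(\M)} = \kappa$, this simplifies to the stated factor $2\bigl(\tfrac{\kappa-1}{\kappa+1}\bigr)^s$. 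Multiplying by $\|\u^*\|_\M$ completes the argument.

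The only subtle step is verifying that the restriction to $\mathrm{range}(\M)$ is valid and that the Chebyshev bound applies even when $\M$ is singular; this is where the main care is needed, since the interval $[\sigma_{\min}^2, \sigma_{\max}^2]$ must exclude the zero eigenvalue. Once this reduction is handled, the remainder of the proof is just the standard Chebyshev-polynomial calculation, for which I would cite the textbook derivation in \cite{h18} rather than reproduce it in full.
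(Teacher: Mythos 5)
The paper does not prove this lemma; it cites it to \cite{h18}, so there is no in-paper proof to compare against. Your Chebyshev minimax derivation is the standard textbook argument and correctly translates $\sqrt{\kappa(\A\A^\top)} = \kappa(\A)$ into the stated factor $2\bigl(\tfrac{\kappa-1}{\kappa+1}\bigr)^s$.

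The one genuinely imprecise step is the claim that ``the component of $\b$ in $\mathrm{null}(\M)$ is irrelevant for the $\M$-norm.'' With $\M = \A\A^\top$ and $\b = \M\u^* + \b^\perp$, $\b^\perp \in \mathrm{null}(\M)$, the CG step sizes $\alpha_s = \r_s^\top\r_s / \p_s^\top\M\p_s$ involve the \emph{Euclidean} norm of the residual, which retains the $\b^\perp$ component at every step (since $\M\u_s \in \mathrm{range}(\M)$ always). Consequently, the range-components of the iterates, and hence their $\M$-norm errors, do depend on $\b^\perp$, and the optimality property you invoke (that $\u_s$ minimizes $\|\u - \u^*\|_\M$ over the Krylov subspace) fails when $\b \notin \mathrm{range}(\M)$: CG minimizes $\tfrac12\u^\top\M\u - \b^\top\u = \tfrac12\|\u - \u^*\|_\M^2 - \b^{\perp\top}\u + \mathrm{const}$, and the extra linear term does not vanish. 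The fix is to add the explicit hypothesis $\b \in \mathrm{range}(\A)$, which is how the lemma is actually used: in Lemma \ref{lem:approx_converge} and the proof of Theorem~\ref{thm:main_2}, the right-hand side passed to CG always lies in the range of the relevant matrix (e.g.\ $\tilde\b_t = \tilde\A(\x_t - \x^*)$ for a consistent outer system). Under that hypothesis, a one-line induction starting from $\u_0 = \mathbf{0}$ shows that every CG iterate, residual, and search direction stays in $\mathrm{range}(\M)$, on which $\M$ is positive definite; there the classical SPD optimality and your Chebyshev bound over the nonzero spectrum apply verbatim, and the rest of your argument goes through.
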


\paragraph{Subspace embeddings.}
To construct the preconditioner in our inner iteration for solving the linear system, as well as for obtaining an improved input sparsity time algorithm for least squares, we use the following result which guarantees a $(1+\epsilon)$-subspace embedding with an optimal embedding dimension \cite{chenakkod2023optimal}. We slightly adapt their result for our setting, so that it holds with a high enough probability to apply a union bound across all iterations of our algorithms.
\begin{lemma}[Adapted from Theorem 1.4 in \cite{chenakkod2023optimal}]\label{lem:precondition_sparse}
Given $\A\in\R^{n \times d}, \epsilon < 1/2$, $\delta < 1/2$, in time $O(\nnz(\A)\log(d / \delta) / \epsilon + d^2\log^4(d/\delta)/\epsilon^6)$ we can compute $\mathbf{\Phi}\A$ where $\mathbf{\Phi}\in\R^{\phi \times n}$ is an embedding matrix with $\phi = O((d+\log(1/\delta)) / \epsilon^2)$, such that with probability $1-\delta$ we have
\begin{align*}
\forall \x\in\R^d, ~~\frac{1}{1+\epsilon} \|\A\x\| \leq \|\mathbf{\Phi}\A\x\| \leq (1+\epsilon) \|\A\x\|.
\end{align*}
\end{lemma}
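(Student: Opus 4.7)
The plan is to derive this as an amplification of the $O(1)$-probability version of Theorem~1.4 in \cite{chenakkod2023optimal}. That theorem, at constant failure probability, produces a $(1+\epsilon)$-subspace embedding of dimension $O(d/\epsilon^2)$ computable in time $O(\nnz(\A)/\epsilon + d^2/\epsilon^6)$, built as a composition of a sparse OSNAP-style stage with a dense Johnson--Lindenstrauss stage. The distortion bound is obtained through a standard net argument over the unit sphere of $\range(\A)$ combined with per-point concentration inequalities.

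I would trace the proof of Theorem~1.4 and locate every place where a failure probability is union-bounded: the per-column sparsity $s$ of the OSNAP stage contributes a Bernstein-type tail $\exp(-\Omega(s\epsilon^2))$ per net point; the FJLT concentration in the dense stage contributes a subgaussian tail scaling with its target dimension; and the covering number $\epsilon^{-O(d)}$ of the unit sphere in $\range(\A)$ multiplies the per-point bounds. To turn the constant failure probability into $\delta$, I would inflate each of these: set the embedding dimension to $\phi = O((d+\log(1/\delta))/\epsilon^2)$ so that the net union bound absorbs the $\log(1/\delta)$ factor; set the OSNAP sparsity to $s = O(\log(d/\delta))$, which is where the $\log(d/\delta)$ factor in the $\nnz(\A)/\epsilon$ term enters (since applying the sketch costs $s$ per nonzero); and re-tune the internal thresholds of the FJLT analysis, whose nested concentration steps propagate into the $\log^4(d/\delta)$ factor in the dense-stage cost.

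The main obstacle is tracking the precise log exponents: the $\log^4$ factor in the $d^2/\epsilon^6$ term is not produced by a single union bound but accumulates through recursive dimension-reduction and internal conditioning inside the construction of \cite{chenakkod2023optimal}. A rigorous proof therefore requires reopening that argument and re-running each probabilistic inequality with $\delta$-dependent parameters, which is mostly mechanical but tedious. As a sanity check I would verify that plugging in a constant $\delta$ recovers the original Theorem~1.4 parameters. Should the reopening turn out to be too delicate, a black-box fallback is to draw $O(\log(1/\delta))$ independent copies of the constant-probability embedding and select one that passes a certifiable test, for instance a Johnson--Lindenstrauss check on a precomputed orthonormal basis of $\range(\A)$, at the cost of an extra $\log(1/\delta)$ factor everywhere, which would be slightly weaker than the stated bound.
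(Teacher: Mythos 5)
The paper does not actually prove this lemma: it is stated as ``Adapted from Theorem~1.4 in \cite{chenakkod2023optimal}'' with no further argument, and the adaptation is treated as a mechanical re-parameterization of the cited result. Your main plan---trace the two-stage (sparse OSNAP followed by dense) construction of \cite{chenakkod2023optimal}, identify each concentration-plus-net union bound, and re-run it with $\delta$-dependent thresholds so that $\phi = O((d+\log(1/\delta))/\epsilon^2)$ and the per-column sparsity picks up a $\log(d/\delta)$ factor---is exactly the implicit content of the paper's citation, and your sanity check (constant $\delta$ should recover the original Theorem~1.4 parameters) is the right way to validate it. That part is fine and is essentially what the paper is relying on.

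The ``black-box fallback'' you offer, however, has a genuine gap. You propose drawing $O(\log(1/\delta))$ independent constant-probability embeddings and ``selecting one that passes a certifiable test, for instance a Johnson--Lindenstrauss check on a precomputed orthonormal basis of $\range(\A)$.'' Certifying that $\mathbf{\Phi}$ is a $(1+\epsilon)$-subspace embedding for $\A$ means verifying a two-sided singular-value bound on $\mathbf{\Phi}\mathbf{U}$ where $\mathbf{U}$ is an orthonormal basis of $\range(\A)$; computing such a basis already costs $\Omega(nd^{\omega-1})$ (or $\Omega(\nnz(\A)\,d)$ for a sparse QR), which can far exceed the claimed $O(\nnz(\A)\log(d/\delta)/\epsilon + d^2\,\mathrm{polylog}/\epsilon^6)$ budget when $n\gg d$, defeating the purpose of the sketch. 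Moreover a generic ``JL check'' on finitely many vectors does not certify a uniform subspace guarantee; one would need something like an explicit extreme-singular-value test on $\mathbf{\Phi}\mathbf{U}$ or a pairwise consistency/median argument across the independent copies, neither of which is spelled out and neither of which is trivial to carry out within the stated time bound. So the fallback as written does not work; the primary route (reopen the proof and substitute $\delta$-dependent parameters) is the one that should be carried through, and it is the one the paper implicitly leans on.
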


\section{Main Algorithms}\label{sec:result}
In this section we state our main algorithms and main results for solving a dense linear system $\A\x = \b$ where $\A\in\R^{m\times n}$. For the sake of completeness, and because they provide somewhat different guarantees in somewhat different regimes, we propose two algorithms. 

The first algorithm (Algorithm~\ref{alg:main}) can be viewed as a Kaczmarz-type stochastic solver, in that it sub-samples the rows of $\A$ and projects its iterate onto the hyperplane defined by this smaller linear system. As other Kaczmarz-type methods, this algorithm converges to the optimum solution of a consistent linear system, and its convergence rate is naturally measured in the Euclidean distance from the optimum, obtaining a guarantee of the form  $\|\tilde\x-\x^*\|\leq\epsilon\|\x^*\|$.

The second algorithm (Algorithm~\ref{alg:main_2}) can be viewed more as a coordinate descent-type method, which sub-samples the columns of $\A$. This algorithm actually solves an implicit psd linear system $\A^\top\A\x=\c$, but by setting $\c=\A^\top\b$, we can use it to solve the $\A\x=\b$ system via the normal equations. This approach has the advantage that when the original linear system is inconsistent, then the procedure will converge to the least squares solution $\x^*=\min_\x\|\A\x-\b\|=\A^\dagger\b$. Another difference from the Kaczmarz-type algorithm is that the convergence guarantees of Algorithm \ref{alg:main_2} are expressed as $\|\A(\tilde\x-\x^*)\|\leq\epsilon\|\A\x^*\|$.

In order to ensure fast convergence with a cheap sub-sampling scheme, we first transform the linear system to an equivalent form by using the randomized Hadamard transform (RHT). The transformed system enjoys the property that the corresponding determinantal point process is nearly isotropic, which makes it easier to sample, see Section~\ref{sec:sampling}. Note that our two algorithms use slightly different preprocessing steps: the Kaczmarz-type method (Algorithm~\ref{alg:main}) applies the RHT from the left, $\A \leftarrow \frac{1}{\sqrt{m}}\H\D\A$,  because it uses row sampling; whereas the coordinate descent-type method (Algorithm \ref{alg:main_2}) applies the RHT from the right, $\A \leftarrow \frac{1}{\sqrt{n}}\A\D\H$, because it uses column sampling. In either case, this transformation can be done in $\tilde O(mn)$ time, and since the RHT is an orthogonal matrix, we can easily recover the solution to the original system. 

Both algorithms use an inner solver to approximately solve the small linear system obtained via sub-sampling (the projection step). We do this by constructing a preconditioner for the smaller system and use CG as an inner solver. The construction of the preconditioner is based on a subspace embedding  sketch, leveraging the fact that the small system is highly rectangular,  see Algorithm~\ref{alg:preconditioner} and Lemma~\ref{lem:precondition_sparse}.

\subsection{Fast Kaczmarz-type Solver}
Next we state our first main algorithm and convergence result which solves a consistent linear system. Recall that we define $\bar{\kappa}(\A) = \|\A\|_F\|\A^\dagger\|_2 / \sqrt{r} \leq \kappa(\A)$ as the averaged condition number for matrix $\A$, and define $\bar{\kappa}_k(\A) = \bar{\kappa}(\A-\A_k)$ for $k < r$. Denote $\{\sigma_i\}_{i=1}^r$ as the singular values of $\A$ in decreasing order, then we have $\bar{\kappa}_k(\A) = (\frac{1}{r-k}\sum_{j > k}^r \sigma_j^2 / \sigma_r^2)^{1/2} \leq \sigma_{k+1} / \sigma_r$. Moreover, notice that the statement ``with probability $0.99$'' depends only on ensuring that the preprocessing construction $\frac{1}{\sqrt{m}}\H\D\A$ is successful, and comes from Lemma~\ref{lem:approx_iso_fixed} by choosing $\delta = 0.01$.
\begin{algorithm}[!ht]
\caption{Fast Kaczmarz-type solver for consistent linear systems.}
\label{alg:main}
\begin{algorithmic}[1]
\State \textbf{Input: }matrix $\A\in\R^{m\times n}$, vector $\b \in \R^m$, sample size $\tau$, iterate $\x_0$, inner iteration $s_{\max}$, outer iteration $t_{\max}$;
\State Compute $\A \leftarrow \frac{1}{\sqrt{m}}\H\D\A$ and $\b \leftarrow \frac{1}{\sqrt{m}}\H\D\b$; \Comment{Takes $O(mn \log m)$ time.}
\For{$t = 0, 1, \ldots, t_{\max}-1$}
\State Generate $\tilde{S}$ by calling Algorithm~\ref{alg:dpp_sample_real}; \Comment{Takes $O(\tau)$ time.}
\State Compute $\tilde{\A} \leftarrow \I_{\tilde{S}}\A$; \Comment{Subsampling: takes $O(n\tau)$ time.}
\State Compute $\tilde{\b} \leftarrow \I_{\tilde{S}}\A\x_t - \I_{\tilde{S}}\b$; \Comment{Takes $O(n\tau)$ time.}
\State Generate preconditioner $\M$ by calling Algorithm~\ref{alg:preconditioner} with $\tilde{\A}$; \Comment{Takes $\tilde{O}(n\tau + \tau^\omega)$ time.}
\State Compute $\u_{s_{\max}}\leftarrow\mathrm{CG}(\M^\top\tilde{\A}, \M^\top \tilde{\b}, s_{\max})$ with Lemma~\ref{lem:cg}; \Comment{Takes $O(n\tau s_{\max})$ time.}
\State Compute $\w_t \leftarrow \tilde{\A}^\top \M \u_{s_{\max}}$; \Comment{Takes $O(n\tau)$ time.}
\State Update $\x_{t+1} \leftarrow \x_t - \w_t$; \Comment{Takes $O(n)$ time.}
\EndFor \\
\Return $\tilde{\x} = \x_{t_{\max}}$; \Comment{Solves $\A\x=\b$.}
\end{algorithmic}
\end{algorithm}

\begin{theorem}[Fast Kaczmarz-type solver]\label{thm:main}
Given matrix $\A\in\R^{m\times n}$ with $\rank(\A) = r$ and $\b \in \R^m$, let $\{\sigma_i\}_{i=1}^r$ be the singular values of $\A$ in decreasing order. For $\log m \leq k < r$, define $\bar{\kappa}_k := (\frac{1}{r-k}\sum_{j > k}^r \sigma_j^2 / \sigma_r^2)^{1/2}$ and assume for simplicity that $\bar{\kappa}_k \leq m^4$.
Assume the linear system $\A\x = \b$ is consistent (i.e., $\b \in \range(\A)$) and let $\x^* = \A^\dagger\b$ be its min-length solution. Assume $\x_0 - \x^*$ is not in the nullspace of $\A$. Then conditioned on an event that holds with probability $0.99$ (and only depends on $\frac{1}{\sqrt{m}}\H\D\A$), for $C=O(1)$, running Algorithm~\ref{alg:main} with choice $\tau \geq Ck\log^3 m$ and $s_{\max} = O(\log(r\bar{\kappa}_k / \tau))$ will yield the following convergence result:
\begin{align*}
\E\|\x_t - \x^*\|^2 \leq \left(1 - \frac{\tau/r}{10\bar{\kappa}_k^2\log^3 m}\right)^t \cdot \|\x_0 - \x^*\|^2.
\end{align*}
By further choosing $\tau = O(k\cdot \log^3 m), \x_0 = \mathbf{0}$ and $t_{\max} = O(r \bar{\kappa}_k^2/ k \cdot\log(1/\epsilon))$ for given $\epsilon>0$, Algorithm~\ref{alg:main} outputs $\tilde{\x}$ such that $\|\tilde{\x} - \x^*\|^2 \leq \epsilon \|\x^*\|^2$ holds with probability $0.98$ in time:
\begin{align*}
O\Big(mn \log m + (nr\log^4 m + rk^{\omega-1} \log^{3\omega}m)\cdot \bar{\kappa}_k^2\log1/\epsilon\Big).
\end{align*}
\end{theorem}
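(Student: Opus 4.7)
The plan is to decompose the analysis into three pieces: (i) the sharp contraction of the \emph{idealized} update that would use an exact $k$-$\dpp$ sample and an exact inner projection; (ii) a coupling argument showing that the cheap sampler can replace a $k$-$\dpp$ sample provided the RHT preprocessing is successful; and (iii) control on the additional error introduced by replacing the exact projection with $s_{\max}$ preconditioned CG steps. For piece (i), let $\tilde\A := \I_S\A$ and $\vPi_S := \I - \tilde\A^\dagger \tilde\A$. An exact projection gives $\x_{t+1}-\x^* = \vPi_S(\x_t-\x^*)$, hence
\begin{align*}
\E\bigl[\|\x_{t+1}-\x^*\|^2 \,\bigm|\, \x_t\bigr] \;=\; (\x_t-\x^*)^\top \bigl(\I - \E[\tilde\A^\dagger \tilde\A]\bigr)(\x_t-\x^*).
\end{align*}
For $S\sim k$-$\dpp(\A\A^\top)$ the matrix $\E[\tilde\A^\dagger\tilde\A]$ is diagonal in the eigenbasis of $\A^\top\A$ with entries that Lemma~\ref{lem:sum_minor} expresses as ratios of elementary symmetric polynomials of $\sigma_1^2,\dots,\sigma_r^2$. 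Applying Schur-concavity of the appropriate ESP ratio and majorizing the tail of the spectrum by the flat vector $(\bar\kappa_k^2\sigma_r^2,\dots,\bar\kappa_k^2\sigma_r^2)$ yields $\lambda_{\min}(\E[\tilde\A^\dagger\tilde\A]) \geq \Omega(k/(r\bar\kappa_k^2))$, which is the factor-$k$ improvement over \cite{rk20} announced in Section~\ref{s:intro}.

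For piece (ii), I work on the high-probability event, named in the theorem, that the leverage scores of $\frac{1}{\sqrt m}\H\D\A$ are uniform up to $\polylog(m)$ factors. Conditioned on this, I couple the cheap sampler $\tilde S$ of size $\tau = Ck\log^3 m$ with a true $k$-$\dpp$ sample $S_{\dpp}$ by simulating a hypothetical run of the down-up walk of Algorithm~\ref{alg:dpp_sample} on the isotropized matrix; at each walk step, Lemma~\ref{lem:tv_distance} lets me match the conditional distributions so that the walk's final state lies inside $\tilde S$ with probability $1-1/\poly(m)$. Because $\vPi_S$ is monotone decreasing in $S$ in the Loewner order, the containment $S_{\dpp}\subseteq \tilde S$ transfers the contraction from piece~(i) to $\tilde S$, giving $\lambda_{\min}(\E[\tilde\A_{\tilde S}^\dagger \tilde\A_{\tilde S}]) \geq \Omega(\tau/(r\bar\kappa_k^2\log^3 m))$, with the tiny failure probability absorbed into the constant.

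For piece (iii), the inner CG call solves $\tilde\A_{\tilde S}\tilde\A_{\tilde S}^\top \u = \tilde\b$; by Lemma~\ref{lem:precondition_sparse} the preconditioner $\M$ produced by Algorithm~\ref{alg:preconditioner} brings the relevant condition number to $O(1)$, and Lemma~\ref{lem:cg} then yields relative error $2^{-s_{\max}}$. Choosing $s_{\max}=\Theta(\log(r\bar\kappa_k/\tau))$ makes this inexactness a negligible fraction of the per-step progress from piece~(ii), so the actual $\x_t$ sequence still contracts by $1-\Omega(\tau/(r\bar\kappa_k^2\log^3 m))$ in expectation, as claimed. Iterating $t_{\max}=O((r\bar\kappa_k^2/k)\log(1/\epsilon))$ times and applying Markov's inequality (combined with the $0.99$ preprocessing event) yields $\|\tilde\x-\x^*\|^2\leq \epsilon\|\x^*\|^2$ with probability $0.98$. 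Multiplying $t_{\max}$ by the per-iteration cost $\tilde O(n\tau + \tau^\omega)$ (dominated by inner CG, whose $s_{\max}=O(\log m)$ factor contributes the extra $\log m$, and by preconditioner construction, as annotated in Algorithm~\ref{alg:main}), and adding the one-time $O(mn\log m)$ RHT cost, recovers the stated runtime $O(mn\log m + (nr\log^4 m + rk^{\omega-1}\log^{3\omega}m)\cdot\bar\kappa_k^2\log1/\epsilon)$.

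The delicate step is piece~(ii): producing a sampler cheap enough to run in $O(\tau)$ time per call that nonetheless stochastically dominates a $k$-$\dpp$. Without the RHT the leverage scores can be arbitrarily non-uniform and no such coupling exists; with the RHT they concentrate only within $\polylog(m)$ factors, which is precisely what forces the $\log^3 m$ overhead in $\tau$ and hence the $\log^{3\omega} m$ factor in the final runtime.
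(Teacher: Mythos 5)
Your proposal follows the paper's argument essentially step-for-step: piece~(i) is Lemma~\ref{lem:improve_bound} (Schur-concavity of ESP ratios via Lemma~\ref{lem:poly_upper_bound}, applied to the formula from Lemma~\ref{lem:sum_adj}/Lemma~\ref{lem:sum_minor}); piece~(ii) is Lemmas~\ref{lem:approx_iso_fixed}, \ref{lem:ghost_coupling}, \ref{lem:main_coupling} together with the monotonicity Lemma~\ref{lem:project_order}, assembled into Lemma~\ref{lem:main_sampling}; and piece~(iii) is Lemma~\ref{lem:approx_converge} and Lemma~\ref{lem:main_convergence}. The final assembly (Markov plus a union bound with the $0.99$ preprocessing event, multiply $t_{\max}$ by the annotated per-iteration cost) is the same.

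Two imprecisions in piece~(ii) are worth flagging, since this is where the argument is most delicate. First, the RHT event you condition on is not ``leverage scores of $\tfrac1{\sqrt m}\H\D\A$ uniform up to $\polylog(m)$'': what Lemma~\ref{lem:approx_iso_fixed} actually controls are the \emph{$k$-DPP marginals} $\Pr\{j\in S_{\dpp}\}$ (only an upper bound $O(k/m)$, not two-sided uniformity), which is the quantity Lemma~\ref{lem:anari_main_2} needs as marginal overestimates. Leverage scores are the marginals of the rank-$r$ Projection DPP, a different object; the paper reaches the $k$-DPP marginals by decomposing the $k$-DPP into a mixture of Projection DPPs (Lemma~\ref{lem:sample_dpp_projection}) precisely because there is no simple closed form for $k$-DPP marginals directly. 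Second, ``at each walk step Lemma~\ref{lem:tv_distance} lets me match the conditional distributions'' overstates what the TV argument does: in the paper the ghost walk (Algorithm~\ref{alg:dpp_sample_ghost}) shares its oracle calls with Algorithm~\ref{alg:dpp_sample_real}, so its terminal state is contained in $\tilde S$ \emph{deterministically}; Lemma~\ref{lem:tv_distance} is invoked once, on the walk's terminal distribution versus the true $k$-DPP, to claim $S_{\mathrm{iter}}=S_{\dpp}$ with probability $1-m^{-O(1)}$. A step-by-step TV coupling would accumulate error across $O(\log^3 m)$ steps and would not directly yield the inequality $\E[\P_{\tilde S}]\succeq\E[\P_{S_{\dpp}}]-\delta\I$ that Lemma~\ref{lem:main_sampling} actually uses to absorb the failure probability. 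Neither issue changes the architecture of the proof, but both are places where the looser phrasing would not survive being made rigorous without adopting exactly the paper's constructions.
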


\subsection{Fast Coordinate Descent-type Solver}
Next, we present our second main algorithm (fast coordinate descent-type solver), and a corresponding convergence result, which solves a linear system that does not necessarily need to be consistent. We note that while this algorithm can be used directly on tall least squares problems, we provide an improved variant for this setting, which can leverage the sparsity of $\A$ to further improve the runtime (see Section \ref{s:ls}).

\begin{theorem}[Fast coordinate descent-type solver]\label{thm:main_2}
Given matrix $\A\in\R^{m\times n}$ with $\rank(\A) = r$ and $\b \in \R^m$, let $\{\sigma_i\}_{i=1}^r$ be the singular values of $\A$ in decreasing order. For $\log n \leq k < r$, define $\bar{\kappa}_k := (\frac{1}{r-k}\sum_{j > k}^r \sigma_j^2 / \sigma_r^2)^{1/2}$ and assume for simplicity that $\bar{\kappa}_k \leq m^4$.
Let $\x^* = \A^\dagger\b$ be the least squares solution of the (possibly inconsistent) linear system $\A\x = \b$. Assume $\x_0 - \x^*$ is not in the nullspace of $\A$. Then conditioned on an event that holds with probability $0.99$ (and only depends on $\frac{1}{\sqrt{n}}\A\D\H$), for $C = O(1)$, running Algorithm~\ref{alg:main_2} with choice $\c = \A^\top\b, \tau \geq Ck \log^3 n$ and $s_{\max} = O(\log(r\bar{\kappa}_k / \tau))$ will yield the following convergence result:
\begin{align*}
\E\|\A(\x_t - \x^*)\|^2 \leq \left(1 - \frac{\tau / r}{10\bar{\kappa}_k^2\log^3 n}\right)^t \cdot \|\A(\x_0 - \x^*)\|^2.
\end{align*}
By further choosing $\tau = O(k\cdot \log^3 n), \x_0 = \mathbf{0}$ and $t_{\max} = O(r\bar{\kappa}_k^2/k \cdot \log(1/\epsilon))$ for given $\epsilon>0$, Algorithm~\ref{alg:main_2} outputs $\tilde{\x}$ such that $\|\A(\tilde{\x} - \x^*)\|^2 \leq \epsilon \|\A\x^*\|^2$ holds with probability $0.98$ in time:
\begin{align*}
O\Big(mn \log n + (mr \log^4 n + rk^{\omega-1} \log^{3\omega}n)\cdot\bar{\kappa}_k^2\log1/\epsilon\Big).
\end{align*}
\end{theorem}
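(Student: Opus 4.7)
The plan is to analyze Algorithm~\ref{alg:main_2} as a randomized block coordinate descent method applied implicitly to the psd system $\A^\top\A\x = \A^\top\b$, whose minimum-length least-squares solution is $\x^*$. In each outer iteration a set $\tilde S$ of $\tau$ column indices is drawn, and the \emph{idealized} update projects $\x_t$ onto $\{\x:\I_{\tilde S}\A^\top(\A\x-\b)=0\}$, giving the one-step identity
\begin{align*}
\|\A(\x_{t+1}-\x^*)\|^2 = \|(\I-\P_{\tilde S})\A(\x_t-\x^*)\|^2,\qquad \P_{\tilde S}:=\A\I_{\tilde S}^\top(\I_{\tilde S}\A^\top\A\I_{\tilde S}^\top)^\dagger\I_{\tilde S}\A^\top.
\end{align*}
Thus it suffices to lower-bound $\lambda_{\min}(\E\P_{\tilde S})$ restricted to $\range(\A)$. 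The right multiplication $\A\leftarrow\frac{1}{\sqrt n}\A\D\H$ is orthogonal and therefore preserves both the singular spectrum of $\A$ and the error norm $\|\A(\x-\x^*)\|$; its sole purpose is to make the subsequent column-sampling distribution well-behaved.

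Next I would establish a sharp contraction for the idealized update when $\tilde S$ is drawn from $k\text{-}\dpp(\A^\top\A)$. Using the adjugate representation of $\P_{\tilde S}$ together with Lemma~\ref{lem:sum_minor}, the eigenvalues of $\E\P_{\tilde S}$ on $\range(\A)$ can be written as ratios of elementary symmetric polynomials of the eigenvalues of $\A^\top\A$. This step is the main obstacle: to avoid a bound that scales with the full condition number $\kappa(\A)$, I would invoke Schur-concavity of the relevant ratio of ESPs and apply a majorization argument against the worst spectrum consistent with $k$ outlying singular values and tail averaged condition number $\bar{\kappa}_k$. This yields $\lambda_{\min}(\E\P_{\tilde S})\gtrsim \tau/(r\bar{\kappa}_k^2)$ for $\tau\gtrsim k$, improving the guarantees of \cite{rk20} by a factor of $k$ and giving the $1-\Theta(\tau/(r\bar\kappa_k^2))$ contraction rate before sampling losses.

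Third, I would replace the expensive $k$-DPP sampler with the cheap column sampler used in Algorithm~\ref{alg:main_2}. By Lemma~\ref{lem:anari_main_1} there exists an MCMC sampler whose output lies within inverse-polynomial total variation distance of $k\text{-}\dpp(\A^\top\A)$; I would couple a hypothetical run of this chain with the simple sampler so that, on an event of probability $0.99$ depending only on $\frac{1}{\sqrt n}\A\D\H$, the $k$-DPP sample $S_{\dpp}$ is contained in $\tilde S$. Containment is enough, because $\P_{\tilde S}\succeq \P_{S_{\dpp}}$ in the Loewner order, so the idealized contraction only improves when $\tilde S\supseteq S_{\dpp}$. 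The high-probability event itself is the approximate column-isotropy of $\frac{1}{\sqrt n}\A\D\H$, which I would certify via the Rademacher tail bound of Lemma~\ref{lem:rademacher} applied to the diagonal signs of the RHT; this analysis is what inflates the required block size to $\tau\geq Ck\log^3 n$ and introduces the $\log^3 n$ factor in the denominator of the rate. Lemma~\ref{lem:tv_distance} absorbs the coupling failure probability as a negligible additive loss.

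Finally I would handle the inexact inner solve and the runtime accounting. The projection step reduces to the highly rectangular system associated with $\A\I_{\tilde S}^\top\in\R^{m\times\tau}$; applying Lemma~\ref{lem:precondition_sparse} yields a sketch-based preconditioner $\M$ in time $\tilde O(m\tau+\tau^\omega)$ that gives the preconditioned operator a constant condition number, so Lemma~\ref{lem:cg} implies $s_{\max}=O(\log(r\bar{\kappa}_k/\tau))$ inner CG iterations drive the per-step error below $\mathrm{poly}(\tau/(r\bar{\kappa}_k))\|\A(\x_t-\x^*)\|$. A short perturbation bookkeeping then shows this error is absorbed into the outer contraction while keeping the rate $1-\tau/(10r\bar{\kappa}_k^2\log^3 n)$ that appears in the theorem. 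Adding up $O(mn\log n)$ RHT preprocessing, $t_{\max}=O(r\bar{\kappa}_k^2/k\cdot\log(1/\epsilon))$ outer iterations, each costing $\tilde O(m\tau+\tau^\omega)$ (sampling, an incremental residual update via $\A\w_t$ with $\w_t$ supported on $\tilde S$, preconditioner construction, and $s_{\max}$ CG steps), and substituting $\tau=O(k\log^3 n)$, yields the stated runtime.
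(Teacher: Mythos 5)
Your proposal follows essentially the same line of argument as the paper's proof: reduce to the implicit psd system $\A^\top\A\x=\A^\top\b$, track the error in the $\A^\top\A$-norm via the projection $\P_{\tilde S}$, bound $\lambda_{\min}^+(\E\P_{\tilde S})$ by coupling the uniform sampler (after right-RHT) with an MCMC $k$-DPP chain and then applying the Schur-concavity/ESP majorization bound, and finally absorb the inexact CG-preconditioned inner solve into the contraction while tallying the per-iteration cost $\tilde O(m\tau+\tau^\omega)$. The only cosmetic deviation is that you describe the exact block update as a "projection onto $\{\x:\I_{\tilde S}\A^\top(\A\x-\b)=0\}$" rather than a constrained minimization over $\x_t+\range(\I_{\tilde S}^\top)$, but the one-step identity you derive is the same one the paper uses, so this is a matter of phrasing rather than substance.
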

\begin{remark}\label{rem:main}
By further assuming that the linear system $\A\x = \b$ is consistent (i.e., $\b \in \range(\A)$), we know that $\x^* = \A^\dagger \b$ is indeed a solution of the linear system and $\A\x^* = \b$. Suppose we choose $k \geq \log n$ be such that $\frac{\sigma_i}{\sigma_r} = O(1)$ holds for all $i \geq k$, then we have $\bar{\kappa}_k = O(1)$. Then, Algorithm~\ref{alg:main_2} outputs $\tilde{\x}$ such that $\|\A\tilde{\x} - \b\|^2 \leq \epsilon \|\b\|^2$ with probability $0.98$ in time:
\begin{align*}
\tilde O\Big(mn + (m  r + rk^{\omega-1})\log1/\epsilon\Big).
\end{align*}
If we do not know such $k$ in advance, we can use the following doubling trick to find the right~$k$: we start from a guess $\tilde{k} = 1$ and run Algorithm~\ref{alg:main_2} to obtain an output $\tilde{\x}$ and check if we have $\|\A\tilde{\x} - \b\|^2 \leq \epsilon \|\b\|^2$. If this equation holds, then we can stop with the current value of $\tilde{k}$ and the algorithm solves the linear system; if not, we can simply double the value of $\tilde{k}$ and run Algorithm~\ref{alg:main_2} again. Notice that $\tilde{k}$ satisfies the definition of $k$ as long as $\tilde{k} \geq k$, since $\sigma_i / \sigma_r = O(1)$ also holds for all $i \geq \tilde{k}$. In this way, we need to run Algorithm~\ref{alg:main_2} at most $\log r$ times (since $k\leq r$), where each run costs $\tilde O((mn + rk^{\omega-1})\log1/(\epsilon\delta))$ in order to guarantee that the failure probability is at most $\delta$. Thus the final cost is $\tilde O((mn + rk^{\omega-1})\log1/(\epsilon\delta))$ with $\tilde{O}(\cdot)$ hiding $\polylog(n)$ factors.

\end{remark}

\begin{algorithm}[!ht]
\caption{Fast coordinate descent-type solver for linear systems.}
\label{alg:main_2}
\begin{algorithmic}[1]
\State \textbf{Input: }matrix $\A\in\R^{m\times n}$, vector $\c \in \R^n$, uniform sample size $\tau$, iterate $\x_0$, inner iteration $s_{\max}$, outer iteration $t_{\max}$;
\State Compute $\A \leftarrow \frac{1}{\sqrt{n}}\A\D\H$ and $\c \leftarrow \frac{1}{\sqrt{n}}\H\D\c$; \Comment{Takes $O(mn \log n)$ time.}
\State Compute $\y_0 \leftarrow \A\x_0$; \Comment{Invariant: $\y_t = \A\x_t$.}
\For{$t = 0, 1, \ldots, t_{\max}-1$}
\State Generate $\tilde{S}$ by calling Algorithm~\ref{alg:dpp_sample_real}; \Comment{Takes $O(\tau)$ time.}
\State Compute $\tilde{\A} \leftarrow \A\I_{\tilde{S}}^\top$; \Comment{Subsampling: takes $O(m\tau)$ time.}
\State Compute $\tilde{\c} \leftarrow \tilde{\A}^\top\y_t - \I_{\tilde{S}}\c$; \Comment{Takes $O(m\tau)$ time.}
\State Generate preconditioner $\M$ by calling Algorithm~\ref{alg:preconditioner} with $\tilde{\A}^\top$; \Comment{Takes $\tilde{O}(m\tau + \tau^\omega)$ time.}
\State Compute $\w_{s_{\max}}\leftarrow\mathrm{CG}(\M^\top\tilde{\A}^\top, \M^\top \tilde{\c}, s_{\max})$ with Lemma~\ref{lem:cg}; \Comment{Takes $O(m\tau s_{\max})$ time.}
\State Compute $\x_{t+1} \leftarrow \x_t - \I_{\tilde{S}}^\top \w_{s_{\max}}$; \Comment{Takes $O(\tau)$ time.}
\State Compute $\y_{t+1} \leftarrow \y_t - \tilde{\A} \w_{s_{\max}}$; \Comment{Takes $O(m\tau)$ time.}
\EndFor \\
\Return $\tilde{\x} = \frac{1}{\sqrt{n}}\D\H \x_{t_{\max}}$; \Comment{Solves $\A^\top\A\x=\c$.}

\end{algorithmic}
\end{algorithm}

\section{Overview of the Analysis}\label{sec:analysis}
For simplicity, in this section we only provide analysis and proof sketch for Algorithm~\ref{alg:main} and Theorem~\ref{thm:main}, which is about solving a consistent linear system $\A\x = \b$ where $\A\in\R^{m\times n}, \b \in\R^m$. For the analysis and proof of Algorithm~\ref{alg:main_2} and Theorem~\ref{thm:main_2}, which is similar, see Section~\ref{sec:proof_main_2}. 

\subsection{Expected Projection under DPP Sampling} \label{sec:analysis_1}
Let $\I_{S} \in \R^{\tau \times m}$ be a sampling matrix with $\tau \leq n$. Our analysis is based on an iterative method to solve the linear system which we will refer to as \emph{sketch-and-project}, following \cite{gower2015randomized} (it is also known as the block Kaczmarz method):
\begin{align}\label{eq:sketch_project_rec}
\x_{t+1} = \x_t - \A^\top\I_{S}^\top (\I_{S}\A\A^\top\I_{S}^\top)^\dagger(\I_{S}\A\x_t - \I_{S}\b) = \x_t - \underbrace{(\I_{S}\A)^\dagger(\I_{S}\A\x_t - \I_{S}\b)}_{\w_t^*}.
\end{align}
Denote $\x^* = \A^\dagger\b$ as the min-length solution and denote $\P_S = \A^\top\I_{S}^\top (\I_{S}\A\A^\top\I_{S}^\top)^\dagger\I_{S}\A$. We have
\begin{align*}
\x_{t+1} - \x^* = (\I - \P_S)\x_t + \A^\top\I_{S}^\top (\I_{S}\A\A^\top\I_{S}^\top)^\dagger\I_{S}\b - \x^* = (\I - \P_S)(\x_t - \x^*).
\end{align*}
Notice that $\P_S$ is a projection matrix and thus $\P_S^2 = \P_S$. Since $\x_0 - \x^*$ is not in the nullspace of $\E[\P_{S}]$, we know $\x_t-\x^*$ will always not be in the nullspace of $\E[\P_{S}].$\footnote{For the detailed reason see proof of Lemma~\ref{lem:main_convergence}.} Thus we have
\begin{align*}
\E\|\x_{t+1} - \x^*\|^2 = & ~ \E[(\x_t - \x^*)^\top(\I-\P_{S})^2(\x_t - \x^*)] \nonumber \\
= & ~ (\x_t - \x^*)^\top\E[(\I-\P_{S})^2](\x_t - \x^*) \nonumber \\
= & ~ (\x_t - \x^*)^\top(\I - \E[\P_{S}])(\x_t - \x^*) \nonumber \\
\leq & ~ (1 - \lambda_{\min}^+(\E[\P_{S}])) \cdot \|\x_t - \x^*\|^2.
\end{align*}
In order to obtain the convergence rate of sketch-and-project, we need to lower bound the quantity $\lambda_{\min}^+(\E[\P_{S}])$, which turns out to be quite challenging for most sampling schemes. To address this, following recent prior work \cite{rk20}, we will consider $S$ sampled according to $k$-$\dpp(\A\A^\top)$. \cite{rk20} provide a lower bound of this quantity by relating the expectation under a determinantal point process to a formula for the sum of adjugates of sub-matrices, expressed in terms of the elementary symmetric polynomials of the eigenvalues of $\A\A^\top$. Directly applying their results to our setting with $k' = O(k)$, we can obtain a lower bound of order $\Omega(\frac{1}{r})$, where $r$ is the rank of $\A$ and we hide the dependence on $\bar\kappa_k$. In the following lemma, we refine the proof by providing a sharper bound on the expression derived by \cite{rk20}, using Schur concavity of the ratios of elementary symmetric polynomials. Namely, we show that if we use a slightly larger DPP sample size $k'=2k$, then $\lambda_{\min}^+(\E[\P_{S_\dpp}]) =\Omega( \frac{k}{r})$, an improvement by a factor $k$ which ends up being crucial in the complexity analysis (see Section~\ref{sec:convergence}). Going forward, to simplify the notation we will often use $k$ instead of $k'$, since they are within a factor 2 of one another.
\begin{lemma}[Improved bound]\label{lem:improve_bound}
Given matrix $\A\in\R^{m\times n}$ with rank $r$.
Let $S_{\dpp}\sim k'$-$\dpp(\A\A^\top)$ where $k'<r$, let $\I_S = \I_{S_{\dpp}} \in \R^{k' \times m}$ be the corresponding sampling matrix and define the projection matrix $\P_{S_{\dpp}} := \A^\top\I_S^\top (\I_S\A\A^\top\I_S^\top)^\dagger\I_S \A$. For any $k < k'$ we have
\begin{align}\label{eq:proj_lower_bound}
\lambda_{\min}^+(\E[\P_{S_{\dpp}}]) \geq \frac{k'-k}{k'-k-1 + (r-k)\bar{\kappa}_{k}^2}.
\end{align}
\end{lemma}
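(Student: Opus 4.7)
The plan is to take the ESP-based expression for $\E[\P_{S_{\dpp}}]$ derived in \cite{rk20} via the adjugate/minor identity of Lemma \ref{lem:sum_minor}, identify the worst direction as the one corresponding to the smallest positive eigenvalue $\lambda_r$ of $\A\A^\top$, and then obtain the sharper $k'-k$ factor by applying Schur-concave bounds on ratios of elementary symmetric polynomials (ESPs) to the resulting scalar expression.

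First I would invoke the formula from \cite{rk20}: combining the mixture-of-projection-DPPs decomposition (Lemma \ref{lem:sample_dpp_projection}) with the adjugate identity (Lemma \ref{lem:sum_minor}), one expresses $\E[\P_{S_{\dpp}}]$ in the right-singular-vector basis of $\A$ as a psd matrix whose positive eigenvalues are exactly $\lambda_i\,\sigma_{k'-1}(\lambda_{-i})/\sigma_{k'}(\lambda)$ for $i=1,\ldots,r$, where $\lambda=(\lambda_1,\ldots,\lambda_r)$ lists the positive eigenvalues of $\A\A^\top$ in decreasing order and $\lambda_{-i}$ denotes $\lambda$ with the $i$-th entry removed. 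Because the map $t\mapsto t\,\sigma_{k'-1}(\lambda_{-i})/\bigl(t\,\sigma_{k'-1}(\lambda_{-i})+\sigma_{k'}(\lambda_{-i})\bigr)$ is increasing in $t$, the minimum over $i$ is attained at $i=r$. Using $\sigma_{k'}(\lambda)=\lambda_r\sigma_{k'-1}(\lambda_{-r})+\sigma_{k'}(\lambda_{-r})$, I would then rewrite this minimum as $1/(1+R)$ with $R:=\sigma_{k'}(\lambda_{-r})/\bigl(\lambda_r\sigma_{k'-1}(\lambda_{-r})\bigr)$, so that the lemma reduces to proving the upper bound $R\le\bigl((r-k)\bar{\kappa}_k^2-1\bigr)/(k'-k)$.

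To prove this upper bound I would proceed in two steps. The first step handles the unbounded top-$k$ entries: expanding each of $\sigma_{k'}(\lambda_{-r})$ and $\sigma_{k'-1}(\lambda_{-r})$ in a single coordinate $\lambda_i$ and then applying Newton's inequality $\sigma_{k'-1}^2\ge \sigma_{k'}\,\sigma_{k'-2}$ to the remaining $(r-2)$-entry vector shows that the map $\lambda\mapsto\sigma_{k'}(\lambda_{-r})/\sigma_{k'-1}(\lambda_{-r})$ is coordinate-wise nondecreasing in each $\lambda_i$. Since $\lambda_1,\ldots,\lambda_k$ are not constrained from above, $R$ is controlled by its limit as those entries tend to infinity, in which only the leading-order term containing all of $\lambda_1,\ldots,\lambda_k$ survives and we obtain
$$R\;\le\;\frac{1}{\lambda_r}\cdot\frac{\sigma_{k'-k}(\lambda_{k+1},\ldots,\lambda_{r-1})}{\sigma_{k'-k-1}(\lambda_{k+1},\ldots,\lambda_{r-1})}.$$
The second step applies to this tail ratio the Schur-concave inequality $\sigma_\ell(x)/\sigma_{\ell-1}(x)\le(\sum_i x_i)/\ell$ on $\R_+^n$, which is a consequence of the Schur-concavity of $\sigma_\ell/\sigma_{\ell-1}$ (the ratio is maximized, for fixed coordinate sum, at the flat vector) and equivalent to the monotonicity of Maclaurin means. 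This yields $R\le\sum_{j=k+1}^{r-1}\lambda_j/\bigl((k'-k)\lambda_r\bigr)$, and the definition $\bar{\kappa}_k^2=\tfrac{1}{r-k}\sum_{j>k}\lambda_j/\lambda_r$ gives $\sum_{j=k+1}^{r-1}\lambda_j=\lambda_r\bigl((r-k)\bar{\kappa}_k^2-1\bigr)$, so $R\le\bigl((r-k)\bar{\kappa}_k^2-1\bigr)/(k'-k)$ and $\lambda_{\min}^+(\E[\P_{S_{\dpp}}])\ge 1/(1+R)\ge(k'-k)/\bigl(k'-k-1+(r-k)\bar{\kappa}_k^2\bigr)$, as claimed.

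The main obstacle I expect is the first step: verifying carefully from \cite{rk20}'s adjugate formula that the positive eigenvalues of $\E[\P_{S_{\dpp}}]$ in the right-singular basis of $\A$ are \emph{exactly} the ESP ratios $\lambda_i\,\sigma_{k'-1}(\lambda_{-i})/\sigma_{k'}(\lambda)$ and that the worst direction is therefore $\v_r$; one must check that the basis-dependent adjugate sum $\sum_{S_0}\I_{S_0}^\top\adj(\A\A^\top)_{S_0,S_0}\I_{S_0}$ averages correctly when conjugated by $\A^\top$ and $\A$. Once this explicit spectral identification is in place, the remaining Schur-concavity step --- which supplies the crucial $k'-k$ improvement over the baseline bound of \cite{rk20} --- is short and reduces to standard Newton/Maclaurin inequalities applied to the tail of the spectrum.
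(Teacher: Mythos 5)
Your proof follows essentially the same route as the paper's: both start from the same adjugate/ESP identity from \cite{rk20} expressing $\E[\P_{S_\dpp}]$ with positive eigenvalues $\lambda_i\sigma_{k'-1}(\lambda_{-i})/\sigma_{k'}(\lambda)$, both localize the worst direction to $\v_r$, and both close the argument by bounding the tail ESP ratio $\sigma_{k'}(\lambda_{-r})/\sigma_{k'-1}(\lambda_{-r})$ by $\tfrac{1}{k'-k}\sum_{j=k+1}^{r-1}\lambda_j$, which is exactly where the factor-$k$ improvement over \cite{rk20} comes from. The only differences are local: you establish that $i=r$ is the minimizer via the DPP-marginal monotonicity fact ($\lambda_i\sigma_{k'-1}(\lambda_{-i}) - \lambda_j\sigma_{k'-1}(\lambda_{-j}) = (\lambda_i-\lambda_j)\sigma_{k'-1}(\lambda_{-ij})$), whereas the paper constructs the matrix $\B_{k'}$ and reads off the smallest diagonal entry; and you re-derive the key ESP-ratio bound (coordinate monotonicity via Newton's inequality, a limiting argument to kill $\lambda_1,\ldots,\lambda_k$, then Maclaurin) instead of citing it directly as Lemma~\ref{lem:poly_upper_bound} (Lemma 3.1 of \cite{gs12}), which already gives $\sigma_{k'}(\lambda_{-r})/\sigma_{k'-1}(\lambda_{-r})\le\tfrac{1}{k'-k}\sum_{j>k,\,j\neq r}\lambda_j$ in one step.
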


\subsection{Replacing DPP with Uniform Sampling}\label{sec:analysis_2}

Despite many recent improvements, sampling a $k$-DPP is still too expensive for our problem. Currently the best known algorithm for sampling $k$-DPPs is obtained by \cite{alv22}, which gives an $\tilde{O}(mk^{\omega-1} + sk^\omega)$ complexity to approximately obtain $s$ samples according to a $k$-DPP based on an $m\times m$ PSD matrix, see Lemma~\ref{lem:anari_main_1}. Since in our case we are dealing with a rectangular non-PSD matrix $\A\in\R^{m\times n}$, after adapting their algorithm to this setting, the cost of sampling becomes $\tilde{O}(mnk^{\omega-2} + s n k^{\omega-1})$, see Section~\ref{sec:compare_sample_dpp}. However this is still too expensive. Can we get rid of $\tilde{O}(mnk^{\omega-2})$? Notice that we do not precisely need a $k$-DPP sample, but rather, it suffices to produce a sample that \emph{contains} a $k$-DPP, since the latter can actually guarantee a faster convergence rate, see Lemma~\ref{lem:project_order}. We propose Algorithm~\ref{alg:dpp_sample_real} which only uses \emph{uniform sampling} with replacement, and show that, after preprocessing the linear system with the randomized Hadamard transform, this algorithm will output a sample $\tilde{S}$ that contains a $k$-DPP sample with high probability. The size of such a uniform sample is only $O((k + \log\frac{1}{\delta}) \log^3 m)$, which is almost linear in $k$, see Lemma~\ref{lem:main_coupling} and Section~\ref{sec:sampling}.

By combining Lemma~\ref{lem:improve_bound}, Lemma~\ref{lem:main_coupling} and Lemma~\ref{lem:project_order}, we have the following main result for our sampling algorithm. Notice that we first apply randomized Hadamard transform $\Q= \frac{1}{\sqrt{m}}\H \mathbf{D}$ to matrix $\A$ as a preprocessing step, in order to guarantee that with high probability the marginal probabilities of the corresponding $k$-DPP are approximately isotropic (see Lemma~\ref{lem:approx_iso_fixed}). For the full proof of Lemma~\ref{lem:main_sampling} see Section~\ref{sec:proof_sample}.

\begin{lemma}[Sampling]\label{lem:main_sampling}
Given matrix $\A\in\R^{m\times n}$ with rank $r$,
denote $\mathbf{Q} = \frac{1}{\sqrt{m}}\H \mathbf{D}$ as the RHT matrix and $\bar{\A} = \Q\A$ as the preprocessed matrix. Conditioned on an event that holds with probability $0.99$ (and only depends on RHT), for $k \geq \log m$, if we draw $\tau$ uniform samples from $[m]$ (according to Algorithm~\ref{alg:dpp_sample_real}) such that $\tau \geq Ck\log^3 m$ for some constant $C = O(1)$, then the resulting set $\tilde S\subseteq[m]$ satisfies
\begin{align*}
\lambda_{\min}^+(\E[\P_{\tilde{S}}]) \geq \frac{\tau / r}{2.5\bar{\kappa}_k^2\log^3 m}
\end{align*}
where $\P_{\tilde{S}} := \bar{\A}^\top \I_{\tilde{S}}^\top(\I_{\tilde{S}}\bar{\A}\bar{\A}^\top\I_{\tilde{S}}^\top)^\dagger\I_{\tilde{S}}\bar{\A}$ is the projection matrix and $\I_{\tilde{S}} \in \R^{\tau \times m}$ is the row sampling matrix according to $\tilde{S}$.
\end{lemma}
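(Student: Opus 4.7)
}
The plan is to combine the three ingredients explicitly flagged in Section~\ref{sec:analysis_2}: the RHT isotropy guarantee (Lemma~\ref{lem:approx_iso_fixed}), the coupling between a uniform sample and a $k$-DPP sample (Lemma~\ref{lem:main_coupling}), the monotonicity of sketch-and-project operators under set inclusion (Lemma~\ref{lem:project_order}), and the sharpened DPP lower bound (Lemma~\ref{lem:improve_bound}). The high-level structure is: (i) condition on the 0.99-probability RHT event, so that $\bar\A=\Q\A$ has nearly uniform leverage scores; (ii) construct a coupled pair $(S_{\dpp},\tilde S)$ such that $S_{\dpp}\subseteq\tilde S$ with overwhelming probability; (iii) pass the bound from $\E[\P_{S_{\dpp}}]$ to $\E[\P_{\tilde S}]$ via monotonicity.

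More concretely, I will first fix the RHT and pass to $\bar\A$, which, by Lemma~\ref{lem:approx_iso_fixed}, has approximately isotropic row-marginals of $k'$-DPP$(\bar\A\bar\A^\top)$ with probability $0.99$. All subsequent randomness (uniform sampling and coupled DPP) lives on top of this conditioning. Next I choose $k'=2k$ inside Lemma~\ref{lem:improve_bound}, which already yields
\[
\lambda_{\min}^+\bigl(\E[\P_{S_{\dpp}}]\bigr)\;\ge\;\frac{k}{k-1+(r-k)\bar\kappa_k^2}\;\gtrsim\;\frac{1}{\bar\kappa_k^2}\cdot\frac{k}{r}.
\]
I then invoke Lemma~\ref{lem:main_coupling} to build a joint law on $(S_{\dpp},\tilde S)$ with $S_{\dpp}\sim k'\text{-DPP}(\bar\A\bar\A^\top)$, $\tilde S$ a uniform multiset of size $\tau\ge Ck\log^3 m$, and $\Pr[S_{\dpp}\not\subseteq\tilde S]\le\delta$ for any prescribed small $\delta$; the bound $\tau\ge Ck\log^3 m$ is exactly what the coupling lemma needs since the RHT isotropy makes individual rows almost uniform in the DPP marginals. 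Lemma~\ref{lem:project_order} then gives the pointwise matrix inequality $\P_{\tilde S}\succeq\P_{S_{\dpp}}$ on the event $S_{\dpp}\subseteq\tilde S$.

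Putting these together, I will write
\[
\E[\P_{\tilde S}]\;\succeq\;\E\!\left[\P_{S_{\dpp}}\,\mathbf{1}_{S_{\dpp}\subseteq\tilde S}\right]\;=\;\E[\P_{S_{\dpp}}]-\E\!\left[\P_{S_{\dpp}}\,\mathbf{1}_{S_{\dpp}\not\subseteq\tilde S}\right],
\]
and bound the leftover term using $\P_{S_{\dpp}}\preceq\Pi_{\range(\A^\top)}$ together with $\Pr[S_{\dpp}\not\subseteq\tilde S]\le\delta$, giving a spectral slack of at most $\delta$ in the row-space of $\A$. Since the null space of $\E[\P_{\tilde S}]$ coincides with $\mathrm{null}(\A)$ (every summand projects into $\range(\A^\top)$), the $\lambda_{\min}^+$ operator is taken in the row-space and
\[
\lambda_{\min}^+\bigl(\E[\P_{\tilde S}]\bigr)\;\ge\;\lambda_{\min}^+\bigl(\E[\P_{S_{\dpp}}]\bigr)-\delta.
\]
Choosing $\delta$ to be a small constant fraction of the DPP lower bound and then substituting $k\ge \tau/(C\log^3 m)$ absorbs all constants into the $2.5$ in the statement, delivering $\lambda_{\min}^+(\E[\P_{\tilde S}])\ge \tau/(2.5\,r\,\bar\kappa_k^2\log^3 m)$.

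The main obstacle is the clean transition from the coupling inequality $S_{\dpp}\subseteq\tilde S$ to a matrix inequality in expectation, while keeping track of the correct normalization on $\lambda_{\min}^+$ in the presence of a nontrivial null space. I need to verify that (a) Lemma~\ref{lem:project_order} really yields the Loewner ordering $\P_{\tilde S}\succeq\P_{S_{\dpp}}$ (not merely a trace or rank comparison), and (b) the slack induced by the coupling-failure event is spectrally bounded in $\range(\A^\top)$ rather than all of $\R^n$; both are routine but require a careful check because the indicator event depends on the coupled pair, not on $\tilde S$ alone. The remaining bookkeeping---choosing $\delta\asymp 1/\log^3 m$ so that losing $\delta$ is harmless against the $k/(r\bar\kappa_k^2)$ DPP bound, and converting $k$ into $\tau$ via $\tau\ge Ck\log^3 m$---is a constant-tracking exercise that I can fold into the final inequality.
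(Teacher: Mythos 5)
Your overall scaffolding --- condition on the RHT event, couple $\tilde S$ with a DPP sample via Lemma~\ref{lem:main_coupling}, transfer the bound through the Loewner monotonicity of Lemma~\ref{lem:project_order}, subtract a $\delta$-slack for the coupling failure, and finally invoke Lemma~\ref{lem:improve_bound} --- is exactly the paper's scaffolding, and your treatment of the nullspace (every $\P_S$ maps into $\range(\A^\top)$, so the slack term is $\delta\cdot\Pi_{\range(\A^\top)}$ and $\lambda_{\min}^+$ is read on the row space) is, if anything, a touch more explicit than the paper's. The key lemmas you cite and the inequality $\E[\P_{\tilde S}]\succeq\E[\P_{S_{\dpp}}]-\delta\,\Pi_{\range(\A^\top)}$ are all correct.

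The gap is in the parameter choice for the coupled DPP. You fix $k'=2k$, which via Lemma~\ref{lem:improve_bound} yields the $\tau$-independent bound $\approx k/(r\bar\kappa_k^2)$. The lemma you are trying to prove has a right-hand side that grows linearly in $\tau$, namely $\tau/(2.5\,r\,\bar\kappa_k^2\log^3 m)$, and your conversion ``$k$ into $\tau$ via $\tau\geq Ck\log^3 m$'' goes the wrong way: it gives $k\leq \tau/(C\log^3 m)$, so a lower bound in $k$ does not yield a lower bound in $\tau/\log^3 m$. Concretely, once $\tau$ exceeds a constant times $k\log^3 m$ the claimed right-hand side overtakes your fixed $k/(r\bar\kappa_k^2)$ bound and the argument fails. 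The paper avoids this by letting the DPP size itself scale with $\tau$: it sets $k'=\min\{\tau/\log^3 m,\,r\}$, so the numerator $k'-k$ in Lemma~\ref{lem:improve_bound} grows like $\tau/\log^3 m$, and then a monotonicity-in-$k$ argument plus a two-case split ($\bar\kappa_k^2\geq C-1$ versus $\bar\kappa_k^2< C-1$, using $\tau/\log^3 m\leq r$ in the second case) produces $\tau/(2r\bar\kappa_k^2\log^3 m)-\delta$, after which $\delta=m^{-9}$ and $\bar\kappa_k\leq m^4$ absorb the loss into the constant $2.5$. To fix your proof, replace $k'=2k$ by a $k'$ proportional to $\tau/\log^3 m$ (still $\geq 2k$ under the hypothesis $\tau\geq Ck\log^3 m$) and redo the bookkeeping; with the fixed $k'=2k$ choice you are only exploiting the minimum admissible $\tau$ and discarding the extra information in a larger $\tilde S$.
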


\subsection{Approximate Sketch-and-Project}\label{sec:analysis_3}
In the update rule of sketch-and-project (see Eq.\eqref{eq:sketch_project_rec}), the remaining problem is that we need to compute $\w_t^* := (\I_{\tilde{S}}\bar{\A})^\dagger(\I_{\tilde{S}}\bar{\A}\x_t - \I_{\tilde{S}}\bar{\b})$ in each iteration, where $\bar\A=\Q\A$ and $\bar\b=\Q\b$. Directly computing this quantity takes $O(n\tau^{\omega-1})$. Instead of directly computing $\w_t^*$, we show that it is possible to \emph{approximate} this quantity in $\tilde{O}(n\tau+\tau^\omega)$ time (see Section~\ref{sec:approx_sketch_proj}). Denote $\tilde{\A} = \I_{\tilde{S}}\bar{\A} \in \R^{\tau \times n}$ and $\tilde{\b}_t = \I_{\tilde{S}}\bar{\A}\x_t - \I_{\tilde{S}}\bar{\b}$, let $\M \in \R^{\tau \times \tau}$ be a preconditioner such that $\tilde{\kappa}:=\kappa(\tilde{\A}^\top \M) = O(1)$ holds with high probability. We first show that we can transform the problem of computing $\w_t^* = \tilde{\A}^\dagger \tilde{\b}_t$ to solving the following preconditioned normal equation of the second kind:
\begin{align*}
(\M^\top \tilde{\A}\tilde{\A}^\top \M) \u = \M^\top\tilde{\b}_t, ~~~ \w = \tilde{\A}^\top \M\u.
\end{align*}
Then we use CG to solve this positive semidefinite linear system over $\u$. Let $\u^* = \M^{-1}(\tilde{\A}\tilde{\A}^\top)^{\dagger} \tilde{\b}_t$ and $\tilde{\w}^* = \tilde{\A}^\top\M\u^*$ be the optimal solution, we show that $\tilde{\w}^* = \w_t^*$. By using Lemma~\ref{lem:cg} and doing the change of variable back from $\u$ to $\w$, we show in Lemma~\ref{lem:approx_converge} that after $s$ iterations,
\begin{align*}
\|\w_s - \w_t^*\| \leq 2\left(\frac{\tilde{\kappa} - 1}{\tilde{\kappa} + 1}\right)^s \|\w^*_t\|.
\end{align*}
The only part left is to construct a preconditioner $\M \in \R^{\tau \times \tau}$ such that $\tilde{\kappa} = O(1)$, and we construct it by using fast oblivious subspace embedding matrix proposed by \cite{chenakkod2023optimal}.
According to Lemma~\ref{lem:precondition_sparse}, by setting $\phi = O(\tau+\log(1/\delta))$ we have that $\tilde{\kappa} = O(1)$ holds with probability $1- \delta$, since we can set $\epsilon$ to be a small constant. Next we analyze the time complexity: given $\tilde{\A}$, computing $\mathbf{\Theta}\tilde{\A}^\top$ takes $O(n\tau \log (\tau/\delta) + \tau^2\log^4(\tau/\delta))$, computing the SVD of $\mathbf{\Theta}\tilde{\A}^\top$ takes $O(\phi \tau^{\omega-1})$, and computing $\M$ takes $O(\tau^\omega)$. Thus we conclude that the total time complexity is
\begin{align*}
O(n\tau \log (\tau/\delta) + \tau^2\log^4(\tau/\delta)+ \tau^{\omega-1}\phi) = O(n\tau \log(\tau/\delta) + \tau^{\omega} +\tau^{\omega-1}\log(1/\delta)) = O(n\tau \log(\tau/\delta) + \tau^{\omega}).
\end{align*}
With all the above analysis, we have the following convergence lemma for approximate sketch-and-project. For the proof of Lemma~\ref{lem:main_convergence} see Section~\ref{sec:proof_approx_project}.

\begin{lemma}[Convergence of approximate sketch-and-project]\label{lem:main_convergence}
Given matrix $\A\in\R^{m\times n}$ with rank $r$, denote $\mathbf{Q} = \frac{1}{\sqrt{m}}\H \mathbf{D}$ as the RHT matrix and $\bar{\A} = \Q\A$ as the preprocessed matrix. Let $\I_{\tilde{S}}\in\R^{\tau \times m}$ be the sampling matrix and $\P_{\tilde{S}} := \bar{\A}^\top \I_{\tilde{S}}^\top(\I_{\tilde{S}}\bar{\A}\bar{\A}^\top\I_{\tilde{S}}^\top)^\dagger\I_{\tilde{S}}\bar{\A}$ be the corresponding projection matrix. Denote $\rho = \lambda_{\min}^+(\E[\P_{\tilde{S}}])$, then, running Algorithm~\ref{alg:main} with choice $s_{\max} = O(\log\frac{1}{\rho})$ will yield the following convergence result:
\begin{align*}
\E\|\x_{t+1} - \x^*\|^2 \leq \left(1 - \frac{\rho}{4}\right)\cdot \|\x_t - \x^*\|^2.
\end{align*}
\end{lemma}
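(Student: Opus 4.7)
The plan is to decompose the one-step update into its idealized sketch-and-project part plus the CG approximation residual, and then combine the two contributions via Young's inequality. Write $\tilde{\A} = \I_{\tilde S}\bar{\A}$ and $\tilde{\b}_t = \I_{\tilde S}(\bar{\A}\x_t - \bar{\b})$. Using consistency of the (preprocessed) system, i.e., $\bar{\A}\x^* = \bar{\b}$, and the identity $\tilde{\A}^\dagger \tilde{\A} = \bar{\A}^\top \I_{\tilde S}^\top(\I_{\tilde S}\bar{\A}\bar{\A}^\top\I_{\tilde S}^\top)^\dagger\I_{\tilde S}\bar{\A} = \P_{\tilde S}$, the exact update is
$$\w_t^* \;=\; \tilde{\A}^\dagger \tilde{\b}_t \;=\; \P_{\tilde S}(\x_t - \x^*).$$
Since the algorithm uses the CG iterate $\w_t$ in place of $\w_t^*$, the error recursion reads
$$\x_{t+1} - \x^* \;=\; (\I - \P_{\tilde S})(\x_t - \x^*) \;+\; (\w_t^* - \w_t),$$
which is the key decomposition.

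Next, I would bound the two pieces separately. For the idealized piece $a := (\I - \P_{\tilde S})(\x_t - \x^*)$, using $\P_{\tilde S}^2 = \P_{\tilde S}$ gives the conditional expectation (over $\tilde S$ only) $\E\|a\|^2 = (\x_t - \x^*)^\top (\I - \E[\P_{\tilde S}])(\x_t - \x^*) \le (1-\rho)\|\x_t - \x^*\|^2$, provided $\x_t - \x^*$ lies in $\range(\E[\P_{\tilde S}])$. This invariant holds by induction: each update $\w_t \in \range(\tilde{\A}^\top) \subseteq \range(\bar{\A}^\top)$, and the initialization $\x_0 = \mathbf{0}$ yields $\x_0 - \x^* = -\A^\dagger\b \in \range(\A^\top) = \range(\bar{\A}^\top)$. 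For the CG error piece $b := \w_t^* - \w_t$, I would invoke Lemma~\ref{lem:approx_converge} (from Section~\ref{sec:analysis_3}) to obtain $\|b\| \le 2 q^{s_{\max}}\|\w_t^*\|$ with $q = (\tilde{\kappa}-1)/(\tilde{\kappa}+1)$, and couple this with $\|\w_t^*\| = \|\P_{\tilde S}(\x_t - \x^*)\| \le \|\x_t - \x^*\|$ since $\P_{\tilde S}$ is an orthogonal projection.

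I would then combine the two via Young's inequality with tuning parameter $\alpha = \rho/4$:
$$\|a + b\|^2 \;\le\; (1 + \rho/4)\|a\|^2 + (1 + 4/\rho)\|b\|^2,$$
so that
$$\E\|\x_{t+1} - \x^*\|^2 \;\le\; \Bigl[(1+\rho/4)(1-\rho) + 4(1 + 4/\rho)\, q^{2s_{\max}}\Bigr]\|\x_t - \x^*\|^2.$$
The first bracketed term simplifies to $1 - 3\rho/4 - \rho^2/4 \le 1 - 3\rho/4$. Choosing $s_{\max} = O(\log(1/\rho))$ — which is admissible because $\tilde{\kappa} = O(1)$ with high probability by Lemma~\ref{lem:precondition_sparse}, so $q$ is bounded away from $1$ — makes the second term at most $\rho/2$, yielding the claimed contraction factor $1 - \rho/4$.

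The main obstacle is bookkeeping around the randomness: the CG guarantee from Lemma~\ref{lem:approx_converge} holds conditionally on the embedding used to build the preconditioner $\M$ achieving $\tilde{\kappa} = O(1)$, and the expectation in the statement is over $\tilde S$ and $\M$ together. One can handle this cleanly by conditioning on the history before iteration $t$, treating the preconditioner failure event as negligible via a union bound across $t_{\max}$ iterations (so it contributes only to the overall success probability in Theorem~\ref{thm:main}, not to the per-step contraction), and otherwise running the deterministic Young-inequality argument above. The subspace-invariance step is the other delicate point, but it reduces to noting that all increments $\w_t$ live in $\range(\bar{\A}^\top)$, and the assumption that $\x_0 - \x^*$ is not in $\mathrm{null}(\A)$ rules out the degenerate case in which the bound is vacuous.
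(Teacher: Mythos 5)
Your decomposition $\x_{t+1}-\x^*=(\I-\P_{\tilde S})(\x_t-\x^*)+(\w_t^*-\w_t)$ and the plan of bounding the two pieces and recombining is exactly the paper's route; the paper uses Cauchy--Schwarz to package the cross term into the perfect square $(\sqrt{1-\rho}\,\|\x_t-\x^*\|+\Delta_t)^2$, whereas you use Young's inequality with tuning $\alpha=\rho/4$---an interchangeable bookkeeping choice. Your observation that $\|\w_t^*\|=\|\P_{\tilde S}(\x_t-\x^*)\|\le\|\x_t-\x^*\|$ is in fact slightly sharper than the paper's triangle-inequality bound $\|\w_t^*\|\le 2\|\x_t-\x^*\|$, and the range-invariance argument is the same as the paper's.

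The genuine gap is in how you handle the preconditioner randomness. Lemma~\ref{lem:main_convergence} asserts an \emph{unconditional} one-step expectation contraction, where the expectation ranges over both $\tilde S$ and the subspace-embedding sketch $\mathbf{\Phi}$ used in Algorithm~\ref{alg:preconditioner}. Your plan---condition on the high-probability event $\tilde\kappa=O(1)$ and push the failure probability into a union bound in Theorem~\ref{thm:main}---does not give this statement: it would prove a conditional contraction, which is a different lemma, and the conditioning subtly distorts the law of $\tilde S$ because the event $\{\tilde\kappa=O(1)\}$ depends jointly on $\mathbf{\Phi}$ and $\tilde S$ (via $\tilde\A=\I_{\tilde S}\bar\A$), so that $\E[\P_{\tilde S}\mid\tilde\kappa=O(1)]$ is no longer the same matrix as $\E[\P_{\tilde S}]$. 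Moreover, with the paper's choice $\delta_1=\rho/324$ a union bound over $t_{\max}=O(\rho^{-1}\log(1/\epsilon))$ iterations gives total failure probability $\Theta(\log(1/\epsilon))$, not small; you would need to shrink $\delta_1$ further (to $O(1/t_{\max})$), which is feasible but is a different set of parameter choices than the lemma uses. The missing idea that makes the \emph{expectation} bound go through cleanly is the paper's bound on $\Delta_t$ in the failure case: even when $\M$ is poor, the CG iterate stays within $2\|\tilde\w^*\|$ of $\tilde\w^*=(\M^\top\tilde\A)^\dagger\M^\top\tilde\b_t$, and $\|\tilde\w^*\|\le\|\w_t^*\|$ because the preconditioned feasible set contains the original one, yielding $\Delta_t\le 4\|\w_t^*\|\le 8\|\x_t-\x^*\|$ regardless of $\tilde\kappa$. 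Plugging that into the law of total expectation with $\Pr\{\tilde\kappa\ne O(1)\}\le\delta_1=\rho/324$ is what gives the $(1-\rho/4)$ factor unconditionally. Without a worst-case bound on $\Delta_t$ in the failure branch, the per-step expectation could blow up and the lemma as stated would not follow.
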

Since we already obtain a lower bound for the quantity $\rho = \lambda_{\min}^+(\E[\P_{\tilde{S}}])$ in Lemma~\ref{lem:main_sampling}, by combining it with Lemma~\ref{lem:main_convergence} we are ready to prove our main result.

\subsection{Proof of Theorem~\ref{thm:main}}
\begin{proof}[Proof of Theorem~\ref{thm:main}]
Notice that in Algorithm~\ref{alg:main} we only carry out randomized Hadamard transform once. By combining Lemma~\ref{lem:main_sampling} and Lemma~\ref{lem:main_convergence} we know that conditioned on an event that holds with probability $0.99$, if we set $s_{\max} = O(\log \frac{1}{\rho}) = O(\log(r\bar{\kappa}_k/\tau))$, then we have
\begin{align*}
\E\|\x_t - \x^*\|^2 \leq \left(1 - \frac{\rho}{4}\right)^t \cdot \|\x_0 - \x^*\|^2 \leq \left(1 - \frac{\tau / r}{10\bar{\kappa}_k^2\log^3 m}\right)^t \cdot \|\x_0 - \x^*\|^2.
\end{align*}
By further choosing $\x_0 = \mathbf{0}, \tau = O(k\cdot \log^3 m)$ and $t_{\max} = O(r\bar{\kappa}_k^2/k \cdot \log(100/\epsilon))$ and denoting $\tilde{\x} = \x_{t_{\max}}$, we have $\E\|\tilde{\x} - \x^*\|^2 \leq \frac{\epsilon}{100} \|\x_0 - \x^*\|^2 = \frac{\epsilon}{100} \|\x^*\|^2$. By using Markov's inequality and applying a union bound, we have that $\|\tilde{\x} - \x^*\|^2 \leq \epsilon \| \x^*\|^2$ holds with probability $0.98$, and the total time complexity is
\begin{align*}
& ~ O(mn\log m) + t_{\max} \cdot O(n\tau \log(\tau/\delta_1) + \tau^{\omega} + n\tau s_{\max}) \\
= & ~ O(mn\log m) + r\bar{\kappa}_k^2/k \cdot \log(100/\epsilon) \cdot O(nk \log^3 m \log (r\bar{\kappa}_k) + k^\omega \log^{3\omega} m) \\
= & ~ O(mn \log m + (nr\log^3 m \log (r\bar{\kappa}_k) + rk^{\omega-1} \log^{3\omega}m )\cdot\bar{\kappa}_k^2\log1/\epsilon) \\
= & ~ O(mn \log m + (nr\log^4 m + rk^{\omega-1} \log^{3\omega}m)\cdot\bar{\kappa}_k^2\log1/\epsilon)
\end{align*}
where we use the fact that $\delta_1 = \rho / 324 = \Omega(k / r)$ as stated in the proof of Lemma~\ref{lem:main_convergence} in Section~\ref{sec:proof_approx_project}, and the assumption $\bar{\kappa}_k \leq m^4$ which gives that $\log (r\bar{\kappa}_k) = O(\log m)$.
\end{proof}

\section{Expected Projection under DPP Sampling}
\label{sec:convergence}
In this section we give a lower bound for the quantity $\lambda_{\min}^+(\E[\P_{S_\dpp}])$ where $S_{\dpp}\sim k'$-$\dpp(\A\A^\top)$. This quantity is directly associated with the convergence rate of approximated sketch-and-project (see Lemma~\ref{lem:main_convergence}), and later we will show that $\rho = \lambda_{\min}^+(\E[\P_{\tilde{S}}]) \geq \lambda_{\min}^+(\E[\P_{S_\dpp}])$, where $\tilde S$ is the sampling scheme used in our main algorithms. 

\subsection{Elementary Symmetric Polynomials}
We start by presenting the following two lemmas which help us relate the expectation formulas that arise in sampling from $\dpp(\L)$ (for some psd matrix $\L$) to properties of elementary symmetric polynomials defined by the eigenvalues of $\L$. These lemmas will then be used in the proof of Lemma~\ref{lem:improve_bound}, the main result of this section. 

First we introduce Lemma~\ref{lem:sum_adj} which is later used to compute the numerator of Eq.\eqref{eq:lemma_combine}, in terms of elementary symmetric polynomials of eigenvalues. Recall that we use $\adj(\L)\in\R^{n\times n}$ to denote the adjugate of an $n\times n$ matrix $\L$, and note that when $\L$ is invertible, then we have $\adj(\L)=\det(\L)\L^{-1}$, which is how the lemma connects to computing expectations for DPP sampling.
\begin{lemma}[Lemma 3.3 of \cite{rk20}]
\label{lem:sum_adj}
Let $\L\in \R^{m\times m}$ be a real symmetric matrix with eigenvalues $\lambda = (\lambda_1, \ldots, \lambda_m)$ where $\lambda_1 \geq \cdots \geq \lambda_m \geq 0$, let $\L = \Q\diag(\lambda)\Q^\top$ be its eigendecomposition where $\Q\in\R^{m\times m}$ is orthogonal, and let $1\leq \ell \leq m$. Then we have
\begin{align*}
\sum_{S \in \tbinom{[m]}{\ell}} \I_S \cdot\adj(\L_{S, S})\cdot\I_S^\top = \Q\diag(\sigma_{\ell-1}(\lambda_{-1}), \ldots, \sigma_{\ell-1}(\lambda_{-m}))\Q^\top
\end{align*}
where $\lambda_{-i}$ denotes the vector $\lambda$ without the $i$-th element for $1\leq i \leq n$.
\end{lemma}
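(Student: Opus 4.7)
}
My strategy is to treat both sides as symmetric $m\times m$ matrices and show that they produce the same trace pairing against an arbitrary symmetric test matrix $M$, using the scalar function
\[
f(\L) \;=\; \sum_{S\in\tbinom{[m]}{\ell}} \det(\L_{S,S}) \;=\; e_\ell\!\left(\lambda(\L)\right)
\]
(the equality is Lemma~\ref{lem:sum_minor}) as a common bridge and differentiating in the direction of $M$ in two different ways.

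\emph{Step 1 (LHS via Jacobi's formula).} Write $f(\L+tM)=\sum_S \det(\L_{S,S}+tM_{S,S})$, where $M_{S,S}=\I_S M\I_S^\top$. Applying Jacobi's formula $\tfrac{d}{dt}\det(A+tB)\big|_{t=0}=\tr(\adj(A)\,B)$ to each minor and using cyclicity of the trace,
\[
\tfrac{d}{dt}f(\L+tM)\big|_{t=0} \;=\; \sum_S \tr\!\bigl(\adj(\L_{S,S})\,\I_S M \I_S^\top\bigr) \;=\; \tr\!\left(\Bigl(\sum_S \I_S^\top\adj(\L_{S,S})\I_S\Bigr)M\right).
\]
\emph{Step 2 (RHS via eigenvalue perturbation).} On the open dense set of symmetric matrices with simple spectrum, first-order perturbation theory gives $\tfrac{d}{dt}\lambda_k(\L+tM)\big|_{t=0}=q_k^\top M q_k$, where $q_k$ is the $k$-th column of $\Q$. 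Combined with $\partial e_\ell/\partial \lambda_k=\sigma_{\ell-1}(\lambda_{-k})$ and the chain rule,
\[
\tfrac{d}{dt}f(\L+tM)\big|_{t=0} \;=\; \sum_{k=1}^m \sigma_{\ell-1}(\lambda_{-k})\,q_k^\top M q_k \;=\; \tr\!\left(\Q\diag\!\bigl(\sigma_{\ell-1}(\lambda_{-1}),\ldots,\sigma_{\ell-1}(\lambda_{-m})\bigr)\Q^\top \,M\right).
\]
Equating the two expressions for every symmetric $M$ and using that LHS and RHS are both symmetric matrices yields the claimed identity on the dense set where $\L$ has simple spectrum.

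\emph{Step 3 (remove the simple-spectrum assumption).} The main subtle point is that the eigenvalue-derivative argument in Step 2 requires distinct eigenvalues, whereas the lemma is stated for all symmetric $\L$. I will resolve this by observing that both sides are polynomial in the entries of $\L$: the LHS manifestly, and the RHS via the algebraic identity
\[
\sum_{k=1}^m \sigma_{\ell-1}(\lambda_{-k})\,q_k q_k^\top \;=\; \sum_{j=0}^{\ell-1}(-1)^j\,\sigma_{\ell-1-j}(\lambda)\,\L^{j},
\]
which follows by applying spectral calculus to the elementary recursion $\sigma_{\ell-1}(\lambda_{-k})=\sum_{j=0}^{\ell-1}(-1)^j\lambda_k^{\,j}\sigma_{\ell-1-j}(\lambda)$ (proved by induction on $\ell$ from $\sigma_{\ell-1}(\lambda_{-k})=\sigma_{\ell-1}(\lambda)-\lambda_k\sigma_{\ell-2}(\lambda_{-k})$). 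Since two polynomial maps that agree on an open dense set agree everywhere, the identity extends to every real symmetric $\L$, completing the proof. The only real obstacle is this degenerate-spectrum bookkeeping, and the polynomial-rewriting of the RHS cleanly sidesteps it.
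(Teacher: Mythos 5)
Your argument is correct and complete. Note first that the paper does not prove this lemma at all: it is imported verbatim as Lemma~3.3 of \cite{rk20}, so there is no in-paper proof to compare against. Your derivative-based derivation is therefore a genuinely self-contained alternative, and it holds up: Step~1 is a clean application of Jacobi's formula to each principal minor (together with linearity of $\L\mapsto\L_{S,S}$ and cyclicity of the trace); Step~2 correctly combines first-order eigenvalue perturbation $\tfrac{d}{dt}\lambda_k(\L+tM)|_{t=0}=q_k^\top Mq_k$ with $\partial e_\ell/\partial\lambda_k=\sigma_{\ell-1}(\lambda_{-k})$; and equating the two pairings against all symmetric $M$ (both sides being symmetric) pins down the matrix identity. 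Crucially, you also handle the one real pitfall of this route --- the simple-spectrum restriction in Step~2 --- by rewriting the right-hand side as the polynomial expression $\sum_{j=0}^{\ell-1}(-1)^j\sigma_{\ell-1-j}(\lambda)\,\L^j$ via the standard recursion $e_j(\lambda)=e_j(\lambda_{-k})+\lambda_k e_{j-1}(\lambda_{-k})$, so that the identity of two polynomial maps extends from the dense simple-spectrum set to all symmetric $\L$. This also implicitly shows the right-hand side is well defined when eigenvalues repeat. One small remark: as printed in the paper the left-hand side reads $\I_S\cdot\adj(\L_{S,S})\cdot\I_S^\top$, which is dimensionally inconsistent given the paper's convention $\I_S\in\R^{\ell\times m}$; the intended expression is $\I_S^\top\adj(\L_{S,S})\I_S$ (as used elsewhere in the paper, e.g.\ in $\E[\I_S^\top(\I_S\B\I_S^\top)^\dagger\I_S]$), and your proof correctly works with that orientation.
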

\noindent
Next we introduce Lemma~\ref{lem:poly_upper_bound} which provides an upper bound for the ratio of elementary symmetric polynomials by using Schur-concavity property of this quantity. Lemma~\ref{lem:poly_upper_bound} will be used in Eq.\eqref{eq:sym_poly_ratio}.
\begin{lemma}[Lemma 3.1 of \cite{gs12}]
\label{lem:poly_upper_bound}
Given vector $\lambda = (\lambda_1, \ldots, \lambda_n)$ with $\lambda_1 \geq \lambda_2 \geq \cdots \geq \lambda_n \geq 0$, given positive integers $\tau \geq j > 0$, we have
\begin{align*}
\frac{\sigma_{\tau + 1}(\lambda)}{\sigma_{\tau}(\lambda)} \leq \frac{1}{\tau+1-j} \sum_{i = j+1}^n \lambda_i.
\end{align*}
\end{lemma}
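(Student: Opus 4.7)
The plan is to give a direct double-counting proof of the cleared-denominator form
\[
(\tau+1-j)\,\sigma_{\tau+1}(\lambda) \;\le\; \sigma_\tau(\lambda)\sum_{i=j+1}^{n}\lambda_i,
\]
which is equivalent to the stated inequality whenever $\sigma_\tau(\lambda)>0$, and otherwise both sides vanish. This avoids any Schur-concavity or Newton-inequality machinery and uses only the definition of $\sigma_\ell$ together with a simple pigeonhole observation about subsets of $[n]$.

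First I would expand the right-hand side by distributing the outer sum:
\[
\sigma_\tau(\lambda)\sum_{i=j+1}^{n}\lambda_i \;=\; \sum_{i=j+1}^{n}\sum_{|S'|=\tau}\lambda_i\prod_{i'\in S'}\lambda_{i'},
\]
and split each inner summand according to whether $i\in S'$ or $i\notin S'$. When $i\in S'$ the term equals $\lambda_i^2\prod_{i'\in S'\setminus\{i\}}\lambda_{i'}\ge 0$, so dropping these contributions only weakens the inequality. When $i\notin S'$, the pair $(S',i)$ packages into the squarefree monomial $\prod_{i''\in S'\cup\{i\}}\lambda_{i''}$, and reindexing by $S:=S'\cup\{i\}$ yields
\[
\sigma_\tau(\lambda)\sum_{i=j+1}^{n}\lambda_i \;\ge\; \sum_{|S|=\tau+1}\,\bigl|\{i\in S:\,i>j\}\bigr|\prod_{i\in S}\lambda_i,
\]
since each $(\tau{+}1)$-subset $S$ is produced by exactly $|S\cap\{j+1,\ldots,n\}|$ of the admissible pairs $(S',i)$ with $i>j$ and $i\notin S'$.

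The final step is a pigeonhole observation: any $(\tau{+}1)$-subset $S\subseteq[n]$ can intersect $\{1,\ldots,j\}$ in at most $j$ elements, so $|S\cap\{j+1,\ldots,n\}|\ge \tau+1-j$. Substituting this lower bound inside the sum produces $(\tau+1-j)\,\sigma_{\tau+1}(\lambda)$, and dividing by $\sigma_\tau(\lambda)$ finishes the argument.

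I do not anticipate a substantive obstacle, so the main thing to be careful about is the scope of the hypotheses. The monotonicity assumption $\lambda_1\ge\cdots\ge\lambda_n$ is never used in the argument itself; it only fixes which $j$ coordinates are being excluded on the right-hand side, and thereby determines how tight the resulting bound is. The non-negativity $\lambda_n\ge 0$ is genuinely needed, since it is what lets us discard the $i\in S'$ terms in the expansion above. Finally, if $\sigma_\tau(\lambda)=0$ then one of the $\tau$ largest $\lambda_i$ must vanish, forcing $\sigma_{\tau+1}(\lambda)=0$ as well, so the inequality is trivially true in that degenerate regime.
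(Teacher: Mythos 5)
Your argument is correct. The paper does not prove this lemma at all --- it imports it verbatim as Lemma 3.1 of \cite{gs12} --- so there is no in-paper proof to compare against; what you have supplied is a valid self-contained replacement for the citation. Checking the steps: expanding $\sigma_\tau(\lambda)\sum_{i>j}\lambda_i$, discarding the nonnegative $i\in S'$ terms, and reindexing by $S=S'\cup\{i\}$ correctly yields $\sum_{|S|=\tau+1}|S\cap\{j+1,\dots,n\}|\prod_{i\in S}\lambda_i$, since each $(\tau{+}1)$-set $S$ arises from exactly $|S\cap\{j+1,\dots,n\}|$ admissible pairs; the pigeonhole bound $|S\cap\{j+1,\dots,n\}|\ge\tau+1-j$ then finishes the cleared-denominator inequality, and nonnegativity of the products justifies the final lower bound. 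Your scope remarks are also accurate: the ordering hypothesis is used only to make $\{1,\dots,j\}$ the \emph{largest} $j$ coordinates (hence the tightest bound), nonnegativity is essential for dropping terms, and $\sigma_\tau(\lambda)=0$ forces $\lambda_\tau=0$ and hence $\sigma_{\tau+1}(\lambda)=0$, so the degenerate case is vacuous. The one stylistic note is that the surrounding text of the paper frames this bound as an instance of Schur-concavity of ESP ratios; your double-counting route is more elementary and arguably more transparent about where each hypothesis enters, so it is a perfectly good substitute.
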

\noindent

\subsection{Proof of Lemma~\ref{lem:improve_bound}}
With the above two lemmas, we are ready to prove Lemma~\ref{lem:improve_bound}, which gives a lower bound for $\lambda_{\min}^+(\E[\P_{S_{\dpp}}])$ when $S_{\dpp}\sim k'$-$\dpp(\A\A^\top)$.

\begin{proof}[Proof of Lemma~\ref{lem:improve_bound}]
For simplicity we denote $S=S_{\dpp}\sim k'$-$\dpp(\A\A^\top)$, then the quantity we consider is $\lambda_{\min}^+(\A^\top \E[\I_S^\top (\I_S\A\A^\top\I_S^\top)^\dagger\I_S]\A)$. Denote $\B = \A\A^\top$, notice that $\B \in\R^{m \times m}$ is a PSD matrix with $\rank(\B) = r$. Denote $\{\lambda_i\}_{i=1}^m$ as the eigenvalues of $\B$, then we have $\lambda_i = \sigma_i^2$ for $i \leq r$ and $\lambda_i = 0$ for $r+1 \leq i \leq m$. 
Denote the singular value decomposition of $\A$ as $\A = \U\mathbf{\Sigma}\V^\top$ where $\U \in \R^{m \times m}, \mathbf{\Sigma} \in \R^{m\times n}$ and $\V \in \R^{n\times n}$,  then we have eigendecomposition $\B = \U(\mathbf{\Sigma}\mathbf{\Sigma}^\top) \U^\top$. According to \cite{rk20}, assuming for simplicity that all the $k' \times k'$ submatrices of $\B$ are non-degenerate (the other case can be reduced to this), by using Cramer's rule $\det(\B_{S, S})\cdot (\B_{S, S})^{-1} = \adj(\B_{S, S})$ we have
\begin{align}\label{eq:lemma_combine}
\E[\P_{S_{\dpp}}] = \A^\top\E[\I_S^\top (\I_S\B\I_S^\top)^\dagger\I_S]\A = \frac{\sum_{S \in \tbinom{[m]}{k'}} \A_{S,:}^\top \cdot\adj\left(\B_{S, S}\right) \cdot\A_{S,:}}{\sum_{S \in \tbinom{[m]}{k'}} \det\left(\B_{S, S}\right)}.
\end{align}
If we do not assume that all the $k' \times k'$ submatrices are non-degenerate, then the numerator becomes $\sum_{S \in \tbinom{[m]}{k'}: \det(\B_{S,S})\neq 0} \I_S \cdot\adj\left(\B_{S, S}\right) \cdot\I_S^T$. Then, for any vector $\w\in\R^n$ we have
\begin{align}\label{eq:degenerate}
\w^\top\E[\P_{S_{\dpp}}]\w = & ~ \frac{\sum_{S \in \tbinom{[m]}{k'}} \det(\B_{S,S})\cdot \w^\top\A_{S,:}^\top(\B_{S,S})^\dagger\A_{S,:}\w}{\sum_{S \in \tbinom{[m]}{k'}} \det\left(\B_{S, S}\right)} \nonumber \\
= & ~ \frac{\sum_{S \in \tbinom{[m]}{k'}: \det(\B_{S,S})\neq 0} \w^\top\A_{S,:}^\top\adj(\B_{S,S})\A_{S,:}\w}{\sum_{S \in \tbinom{[m]}{k'}} \det\left(\B_{S, S}\right)}
\end{align}
where the second step also follows from Cramer's rule. Notice that for any vector $\u$ and matrix $\C$ we have the following formula for adjugate matrix (see Fact 2.14.2 in \cite{bernstein2009matrix}):
\begin{align*}
\det(\C+\u\u^\top) = \det(\C) + \u^\top\adj(\C)\u.
\end{align*}
By applying this formula, for any $S$ such that $\det(\B_{S,S}) = 0$, we can compute each component of numerator in Eq.\eqref{eq:degenerate} as
\begin{align*}
\w^\top\A_{S,:}^\top\adj(\B_{S,S})\A_{S,:}\w = & ~ \det(\B_{S,S} + \A_{S,:}\w\w^\top\A_{S,:}^\top) - \det(\B_{S,S}) \\
= & ~ \det(\A_{S,:}\A_{S,:}^\top + \A_{S,:}\w\w^\top\A_{S,:}^\top) \\
= & ~ \det(\A_{S,:}(\I+\w\w^\top)\A_{S,:}^\top) = 0
\end{align*}
where the last step follows since $\rank(\A_{S,:}(\I+\w\w^\top)\A_{S,:}^\top) \leq \rank(\A_{S,:}) < k'$, according to our assumption that $\det(\B_{S,S}) = \det(\A_{S,:}\A_{S,:}^\top) = 0$. By applying this result to Eq.\eqref{eq:degenerate} we have 
\begin{align*}
\w^\top\E[\P_{S_{\dpp}}]\w = & ~ \frac{\sum_{S \in \tbinom{[m]}{k'}: \det(\B_{S,S})\neq 0} \w^\top\A_{S,:}^\top\adj(\B_{S,S})\A_{S,:}\w}{\sum_{S \in \tbinom{[m]}{k'}} \det\left(\B_{S, S}\right)} \\
= & ~ \frac{\sum_{S \in \tbinom{[m]}{k'}} \w^\top\A_{S,:}^\top\adj(\B_{S,S})\A_{S,:}\w}{\sum_{S \in \tbinom{[m]}{k'}} \det\left(\B_{S, S}\right)}
\end{align*}
which shows that the degenerate case can indeed be reduced to the non-degenerate case as in Eq.\eqref{eq:lemma_combine}, since we only care about the smallest positive eigenvalue of $\E[\P_{S_{\dpp}}]$. According to Lemma~\ref{lem:sum_minor} and Lemma~\ref{lem:sum_adj}, we can compute the numerator and denominator separately and have
\begin{align}\label{eq:eigen_decompose}
\E[\I_S^\top (\I_S\B\I_S^\top)^\dagger\I_S] = \frac{\U \diag\left(\sigma_{k'-1}\left(\lambda_{-1}\right), \ldots, \sigma_{k'-1}\left(\lambda_{-m}\right)\right) \U^\top}{\sigma_{k'}(\lambda)}
\end{align}
where $\sigma_{k'}$ is the real elementary symmetric polynomial with degree $k'$, and $\lambda_{-i} \in \R^{m-1}$ denotes the vector $\lambda$ without the $i$-th element.
Since we want to lower bound the above quantity, we need to lower bound each $\frac{\sigma_{k' - 1}(\lambda_{-i})}{\sigma_{k'}(\lambda)}$ for $i\in[m]$. We apply Lemma~\ref{lem:poly_upper_bound} and know for all $k' > k$ we have
\begin{align*}
\frac{\sigma_{k'}(\lambda)}{\sigma_{k'-1}(\lambda)} \leq \frac{1}{k'-k} \sum_{j = k+1}^m \lambda_j.
\end{align*}
Thus for all $i \in [m]$ and $k' > k$, we have
\begin{align}\label{eq:sym_poly_ratio}
\frac{\sigma_{k' - 1}(\lambda_{-i})}{\sigma_{k'}(\lambda)} =  \frac{\sigma_{k' - 1}(\lambda_{-i})}{\sigma_{k'}(\lambda_{-i}) + \lambda_i \cdot \sigma_{k'-1}(\lambda_{-i})} = \frac{1}{\lambda_i + \frac{\sigma_{k'}(\lambda_{-i})}{\sigma_{k'-1}(\lambda_{-i})}} \geq \frac{1}{\lambda_i + \frac{1}{k' - k} \sum_{j > k, j \neq i} \lambda_j}.
\end{align}
We consider two different cases for the value of $i$ and $k$, and have
\begin{align*}
\begin{cases}
\frac{\sigma_{k' - 1}(\lambda_{-i})}{\sigma_{k'}(\lambda)} \geq \frac{1}{\lambda_i + \frac{1}{k' - k} \sum_{j > k} \lambda_j} ~~~\text{for}~~~ i \leq k; \\
\frac{\sigma_{k' - 1}(\lambda_{-i})}{\sigma_{k'}(\lambda)} \geq \frac{1}{\frac{k'-k-1}{k'-k}\lambda_i + \frac{1}{k' - k} \sum_{j > k} \lambda_j} ~~~\text{for}~~~ k+1 \leq i \leq m.
\end{cases}
\end{align*}
Based on these two cases, we construct the following approximation matrix:
\begin{align*}
\B_{k'} := \U \diag\left(\lambda_1, \ldots, \lambda_{k}, \frac{k'-k-1}{k'-k}\lambda_{k+1}, \ldots, \frac{k'-k-1}{k'-k}\lambda_m\right) \U^\top + \frac{1}{k'-k}\sum_{j>k}^m \lambda_j\I
\end{align*}
an have $\E[\I_S^\top (\I_S\B\I_S^\top)^\dagger\I_S] \succeq \B_{k'}^{-1}$. By applying this result we can lower bound $\lambda_{\min}^+(\E[\P_{S}])$.
\begin{align*}
\lambda_{\min}^+(\E[\P_{S}]) = & ~\lambda_{\min}^+(\A^\top\E[\I_S^\top (\I_S\B\I_S^\top)^\dagger\I_S] \A) \\
\geq & ~ \lambda_{\min}^+(\A^\top\B_{k'}^{-1} \A) \\
= & ~ \lambda_{\min}^+(\V\mathbf{\Sigma}^\top\U^\top \B_{k'}^{-1} \U\mathbf{\Sigma}\V^\top) \\
= & ~ \lambda_{\min}^+(\V\diag(\frac{\lambda_1}{\lambda_1+\frac{1}{k'-k}\sum_{j>k}^m\lambda_j}, \ldots, \frac{\lambda_{k}}{\lambda_{k}+\frac{1}{k'-k}\sum_{j>k}^m\lambda_j}, \\
& ~~~~~~~~\frac{\lambda_{k+1}}{\frac{k'-k-1}{k'-k}\lambda_{k+1}+\frac{1}{k'-k}\sum_{j>k}^m\lambda_j}, \ldots, \frac{\lambda_n}{\frac{k'-k-1}{k'-k}\lambda_n+\frac{1}{k'-k}\sum_{j>k}^m\lambda_j})\V^\top) \\
= & ~ \frac{\sigma_{r}^2}{\frac{k'-k-1}{k'-k}\sigma_r^2+\frac{1}{k'-k}\sum_{j>k}^r\sigma_j^2}
\end{align*}
where the last step follows from $\lambda_j = \sigma_j^2$ for $j \in [r]$, and that $\lambda_{r+1} = \cdots = \lambda_m = 0$. Recall that we define $\bar{\kappa}_k := (\frac{1}{r-k}\sum_{j > k}^r \sigma_j^2 / \sigma_r^2)^{1/2}$, thus we can simplify the above expression as
\begin{align*}
\lambda_{\min}^+(\E[\P_{S}]) \geq \frac{1}{\frac{k'-k-1}{k'-k}+\frac{1}{k'-k}\sum_{j>k}^r\sigma_j^2 / \sigma_r^2} = \frac{k'-k}{k'-k-1 + (r-k)\bar{\kappa}_{k}^2}.
\end{align*}
\end{proof}

\paragraph{Discussion.}
Let $\B = \A\A^\top \in \R^{m\times m}$ with rank $r$, in comparison with our result, \cite{rk20} constructed the ``$k'$-coordinate approximation'' of $\B$ as
\begin{align*}
\hat{\B}_{k'} := \U \diag(\lambda_1, \ldots, \lambda_{k'}, \lambda_{k'}, \ldots, \lambda_{k'})\U^\top + \sum_{j=k'+1}^m \lambda_j \I
\end{align*}
and showed that $\mathbb{E} [\I_S^\top(\I_S\B\I_S^\top)^\dagger \I_S] \succeq \hat{\B}_{k'}^{-1}$. By using their result, one could show the following weaker lower bound:
\begin{align}\label{eq:proj_lower_bound_rk20}
\lambda_{\min}^+(\E[\P_{S}]) \geq \lambda_{\min}^+(\A^\top\hat{\B}_{k'}^{-1} \A) = \frac{\lambda_r}{\sum_{j=k'}^m\lambda_j} = \frac{\sigma_r^2}{\sum_{j=k'}^r \sigma_j^2} = \frac{1}{(r-k'+1)\bar{\kappa}_{k'-1}^2}.
\end{align}
By comparing this result with the lower bound we get in Eq.\eqref{eq:proj_lower_bound}, we can see that we have an extra parameter $k$ that only needs to satisfy $k< k'$. If we choose $k = k'-1$ then we recover the same result as in \cite{rk20}. However, if we consider the setting where $k$ is such that $\frac{\sigma_i}{\sigma_r} = O(1)$ holds for all $i \geq k$ and assume that $k \leq \frac{r}{2}$, then by choosing $k' = 2k-1 <r$, Eq.\eqref{eq:proj_lower_bound_rk20} only gives a lower bound of order $\Omega(\frac{1}{r-k})$. This is worse by a factor of $k$ than our result which is of order $\Omega(\frac{k}{r})$.

\section{Replacing DPP with Uniform Sampling}\label{sec:sampling}
In Lemma~\ref{lem:improve_bound} we give a lower bound for $\lambda_{\min}^+(\E[\P_{S}])$ when $S$ is a $k$-DPP sample.\footnote{Note that in this section we use $k$ to denote DPP size for simplicity, which corresponds to $k'$ in previous sections.} However, obtaining an exact DPP sample can be expensive. Currently, the best known result of obtaining $s$ independent approximate samples takes $\tilde{O}(mk^{\omega-1} + sk^{\omega})$, see Lemma~\ref{lem:anari_main_1}, and adapting this result to our rectangular matrix setting takes $\tilde{O}(mnk^{\omega-2} + snk^{\omega})$, see Section~\ref{sec:compare_sample_dpp}. This sampling method consists of two steps: a preprocessing step which takes $\tilde{O}(mnk^{\omega-2})$ and a sampling step which takes $\tilde{O}(nk^{\omega})$ for each sample. Notice that \emph{both these two steps are bottlenecks and unaffordable} when we use the sketch-and-project method with $k$-DPP sampling, since with the outer iteration $t_{\max} = O(r\bar{\kappa}_k^2/k \cdot\log(1/\epsilon))$, the cost will be $\tilde{O}(mnk^{\omega-2} + nr\bar{\kappa}_k^2k^{\omega-2}\log(1/\epsilon))$, i.e., at least as much as standard preconditioning for CG via power iteration.

To conquer the first bottleneck, we use a different method of preprocessing. The purpose of the preprocessing step is to make the marginal probabilities $\Pr\{j\in S_{\dpp}\}$ almost the same. \cite{alv22} do this by estimating all of the marginal probabilities, which itself requires repeatedly sampling from the $k$-DPP distribution. We instead show in Lemma~\ref{lem:approx_iso_fixed} that the randomized Hadamard transform (RHT) can uniformize the marginals \emph{in near-linear time} $O(mn\log m)$, which is much better than $\tilde{O}(mnk^{\omega-2})$. With the approximately uniform marginals, we can use the following result from \cite{alv22} to obtain an approximated $k$-DPP sample.
\begin{lemma}[Theorem 3 in \cite{alv22}]
\label{lem:anari_main_2}
Given a strongly Rayleigh distribution $\mu \in \R^{\tbinom{[m]}{k}}$ and marginal overestimates $q_i \geq \Pr_{T\sim\mu}\{i\in T\}$ for $i\in[m]$. Let $K := \sum_{i\in [m]} q_i$, there is an algorithm that outputs a sample from a distribution with total variation distance $m^{-O(1)}$ from $\mu$ in time bounded by $O(\log^3 m)$ calls to $\mathcal{T}_\mu(O(K), k)$, where $\mathcal{T}_\mu(t, k)$ is the time it takes to produce a sample from $\mu$ conditioned on all elements of the sample being a subset of a fixed set $T$ of size $|T| = t$. 
\end{lemma}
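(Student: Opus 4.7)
The plan is to reduce sampling from $\mu$ on $\tbinom{[m]}{k}$ to a small number of calls to the restricted oracle $\mathcal{T}_\mu(O(K), k)$ by using the $q_i$'s to pre-filter the ground set down to a random subset of expected size $O(K)$. Specifically, I would draw $R \subseteq [m]$ by including each index $i$ independently with probability $p_i = \min(c \cdot q_i, 1)$, for a constant $c$ chosen large enough. A standard Chernoff argument on $\sum_i p_i$ then gives $|R| = O(K)$ with high probability, so a single invocation of $\mathcal{T}_\mu(|R|, k)$ falls within the claimed oracle budget.

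The next step is to argue that the coupled process $(R, S)$, where $R$ is drawn as above and $S$ is drawn from $\mu$ conditioned on the event $\{S \subseteq R\}$, has a marginal on $S$ that is close in total variation to $\mu$. For any fixed $S \in \tbinom{[m]}{k}$, the event $\{S \subseteq R\}$ reweights $\mu$ by $\prod_{i \in S} p_i$; the marginal-overestimate condition $q_i \geq \Pr_{T \sim \mu}\{i \in T\}$ is precisely what ensures this reweighting factor lies in a controllable multiplicative window across the support of $\mu$. Standard rejection sampling then corrects the residual bias, and the closure of strongly Rayleigh measures under conditioning on events of the form $\{S \subseteq R\}$ (which is implicit in the down-up walk framework of Algorithm~\ref{alg:dpp_sample}) guarantees that $\mu$ restricted to subsets of $R$ is itself strongly Rayleigh and is therefore legitimately within the scope of the oracle.

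Finally, amplify: a single attempt accepts with probability at least $1/\mathrm{polylog}(m)$, so $O(\log^3 m)$ independent restarts suffice to yield a sample whose law is within total variation $m^{-O(1)}$ of $\mu$. The main obstacle is calibrating $c$ so that two competing requirements hold simultaneously -- $|R|$ must remain $O(K)$ to keep the oracle cost in check, yet $\Pr\{S \subseteq R\}$ must be bounded below uniformly over $S$ in the effective support of $\mu$ so that the acceptance probability does not collapse. Handling the atypical tail of $\mu$, i.e., subsets $S$ whose inclusion indicators are strongly correlated across coordinates and which are rare under $\mu$ but still contribute to total variation, is where the argument becomes delicate; either a truncation step or a more refined analysis invoking spectral-independence bounds for strongly Rayleigh measures appears necessary to obtain the clean $\log^3 m$ dependence asserted in the statement.
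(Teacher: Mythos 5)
This lemma is not proved in the paper at all: it is cited verbatim from Anari, Liu, and Vuong \cite{alv22} (their Theorem 3), and the mechanism it relies on is the down-up walk sketched in Algorithm~\ref{alg:dpp_sample}. That is a Markov chain: from a current $k$-set $S_i$, one goes \emph{up} to a random $t$-superset $T_i$ and then \emph{down} by sampling $S_{i+1}\sim\mu$ restricted to subsets of $T_i$. The $O(\log^3 m)$ is the \emph{mixing time} of that chain (to TV distance $m^{-O(1)}$), obtained from modified log-Sobolev or spectral-independence bounds that hold specifically because $\mu$ is strongly Rayleigh. Your proposal is a one-shot thinning-plus-rejection scheme, which is a genuinely different and, as written, incorrect route.

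Two concrete gaps. First, your characterization of the proposal marginal is wrong. If you draw $R$ by independent inclusion with probabilities $p_i$ and then sample $S\sim\mu$ conditioned on $\{S\subseteq R\}$, the marginal is $\tilde\mu(s)=\E_R\bigl[\mu(s)\,\mathbf{1}[s\subseteq R]/Z(R)\bigr]$ with $Z(R)=\sum_{T\subseteq R,\,|T|=k}\mu(T)$; it is \emph{not} $\mu(s)\prod_{i\in s}p_i$. The normalizer $Z(R)$ depends on $R$ in a way that does not factor across coordinates, so the ``controllable multiplicative window'' you invoke does not exist. Second, even setting that aside, a rejection step cannot rescue the scheme within the claimed oracle budget: the ratio $\max_s \mu(s)/\tilde\mu(s)$ governs the acceptance probability, and marginal \emph{over}estimates $q_i$ give you no lower bound on $p_i$ relative to $\mu$'s actual dependence on $i$. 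For sets $s$ supported on low-marginal coordinates, $\prod_{i\in s}p_i$ can be exponentially small in $k$, so the acceptance probability is not $1/\mathrm{polylog}(m)$ and $O(\log^3 m)$ restarts do not suffice. You flag this as ``handling the atypical tail,'' but the issue is not a tail correction --- it is that a single up-down step started from a random superset simply does not produce a distribution close to $\mu$; you need to \emph{iterate} and prove mixing, which is exactly where the strongly Rayleigh hypothesis (via log-concavity of the generating polynomial) enters and which your argument never uses. What is essentially right in your writeup is the oracle-budget accounting: each down step is a call to $\mathcal{T}_\mu(O(K),k)$, and there are $O(\log^3 m)$ of them; but they must be chained, not independent restarts.
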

\begin{remark}\label{rem:anari_constant}
In Algorithm~\ref{alg:dpp_sample}, we give the sampling procedure claimed by Lemma~\ref{lem:anari_main_2} under the additional ``isotropic'' assumption, i.e., that the marginal overestimates are all uniform. Moreover, the $m^{-O(1)}$ term in Lemma~\ref{lem:anari_main_2} is specified as $m^{-9}$ in the proof of \cite{alv22}.
\end{remark}

\begin{algorithm}[!ht]
\caption{$k$-DPP sampling algorithm based on \cite{alv22} (Ghost Algorithm).}
\label{alg:dpp_sample_ghost}
\begin{algorithmic}[1]
\State \textbf{Input: }$m, k, t, \mathrm{iter}, \mathrm{call}, \mathcal{O}$;
\State \textbf{Initialize }$S_0 \in  \tbinom{[m]}{k}, T_0 \gets \varnothing$; \Comment{By calling $\tilde{O}(k)$ times the oracle.}
\For{$i = 0, 1, 2, \ldots, \mathrm{iter}-1$} \Comment{$\mathrm{iter} = O(\log^3 m)$.}
\State \textbf{Initialize }$T_{i, 0} \gets T_i$;
\For{$j = 0, 1, 2, \ldots, \mathrm{call}-1$} \Comment{$\mathrm{call} \geq Ck$ for some $C=O(1)$.}
\State Sample $i_j$ by calling $ \mathcal{O}$;
\If{$i_j \in S_i \cup T_{i,j}$}:
\State $T_{i, j+1} = T_{i, j}$;
\Else: 
\State $T_{i, j+1} = T_{i,j} \cup \{i_j\}$;
\EndIf
\EndFor
\State \textbf{Update }$T_{i+1} \gets T_{i, \mathrm{call}}$; \Comment{Up-walk step by calling $ \tilde{O}(t-k)$ times the oracle.}
\State \textbf{Update }$S_{i+1} \subseteq S_i \cup T_{i+1}$; \Comment{Down-walk sampling from a $k$-DPP restricted to $S_i \cup T_{i+1}$.}
\EndFor \\
\Return $S_{\mathrm{iter}}$.
\end{algorithmic}
\end{algorithm}

To conquer the second bottleneck, we observe that compared with a real $k$-DPP sample, we are actually more satisfied with a larger set that \emph{contains} a $k$-DPP in the sketch-and-project scheme, since it can guarantee an even faster convergence rate, see Lemma~\ref{lem:project_order}.
How do we obtain a sample that contains a $k$-DPP as a subset? We show that we can achieve this by using \emph{only uniform sampling}. Suppose we have an oracle $\mathcal{O}$ that can uniformly sample one element from $[m]$, and we would like to obtain $\tilde{S}$ such that $\tilde{S}$ contains a $k$-DPP sample. Our idea is to rewrite Algorithm \ref{alg:dpp_sample} so that it can only access the elements via the oracle $\mathcal{O}$: starting from $S_0$ with size $k$, we call the oracle several times, if the sample is new then we add it to $S_0$, otherwise we discard it. After we obtain $S_0$, we follow the same rule to obtain $T_1$ such that (i) $|T_1| = t - k$ and (ii) $T_1 \cap S_0 = \varnothing$. Then, we down sample $S_1$ such that (i) $|S_1| = k$ and (ii) $S_1 \subseteq S_0 \cup T_1$. These two steps finish one iteration. Notice that only step 1 needs to call the oracle, while step 2 does not.

These steps correspond to Algorithm~\ref{alg:dpp_sample_ghost}, and we call it the ``Ghost Algorithm'' since we do not implement it due to the high cost of the down-walk step (line $14$ of Algorithm~\ref{alg:dpp_sample_ghost}). Instead, we use the coupled Algorithm~\ref{alg:dpp_sample_real} and show that the output of Algorithm~\ref{alg:dpp_sample_real} contains the output of Algorithm~\ref{alg:dpp_sample_ghost} as a subset. By further showing that the output of Algorithm~\ref{alg:dpp_sample_ghost} is the same as a $k$-DPP sample with high probability, we have the following theorem which shows that the output of Algorithm~\ref{alg:dpp_sample_real} indeed contains a $k$-DPP sample. Notice that since we only use uniform sampling in our algorithm, the total cost of sampling takes only $O((k+\log m) \log^3 m) = \tilde{O}(k)$ which is much better than $\tilde{O}(nk^\omega)$ and almost linear in $k$. For detailed discussion see Section~\ref{sec:compare_sample_dpp}. In this whole section,  we denote $\mathbf{Q} = \frac{1}{\sqrt{m}} \H \mathbf{D}$ as the RHT matrix, where $\H \in\R^{m \times m}$ is the Hadamard matrix and $\mathbf{D} \in \R^{m\times m}$ is the diagonal matrix with Rademacher entries. We denote $\bar{\A} = \Q\A$ as the preprocessed matrix.

\begin{lemma}[Coupling]\label{lem:main_coupling}
Given matrix $\A\in\R^{m\times n}$, denote $\mathbf{Q} = \frac{1}{\sqrt{m}}\H \mathbf{D}$ as the RHT matrix and $\bar{\A} = \Q\A$ as the preprocessed matrix. Suppose we have an oracle $\mathcal{O}$ that can uniformly sample one element from $[m]$. Conditioned on an event that holds with probability $0.99$ (and only depends on RHT), for $k \geq \log m$ and $\delta = m^{-O(1)}$, there is a coupling $(\tilde{S}, S_{\dpp})$ where $\tilde{S}$ is sampled according to Algorithm~\ref{alg:dpp_sample_real} with choice $\tau \geq Ck\log^3 m$ for some constant $C=O(1)$, and $S_{\dpp}\sim k$-$\dpp(\bar{\A}\bar{\A}^\top)$, such that with probability $1 - \delta$, we have $S_{\dpp} \subseteq \tilde{S}$.
\end{lemma}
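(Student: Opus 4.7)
The plan is to couple Algorithm~\ref{alg:dpp_sample_real}, which simply returns the distinct elements among $\tau$ uniform draws from $[m]$, with a hypothetical execution of the ghost Algorithm~\ref{alg:dpp_sample_ghost}. By Lemma~\ref{lem:anari_main_1} combined with Lemma~\ref{lem:anari_main_2} and Remark~\ref{rem:anari_constant}, once the marginals are (approximately) isotropic, the ghost produces an output $S_{\mathrm{iter}}$ within total variation distance $m^{-9}$ of $k$-$\dpp(\bar{\A}\bar{\A}^\top)$; by Lemma~\ref{lem:tv_distance} I can then couple $S_{\mathrm{iter}}$ with $S_{\dpp} \sim k$-$\dpp(\bar{\A}\bar{\A}^\top)$ so that $S_{\mathrm{iter}} = S_{\dpp}$ with probability $\geq 1 - m^{-9}$. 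The remaining work is to ensure that $S_{\mathrm{iter}} \subseteq \tilde S$ almost surely under the coupling.

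First I would invoke Lemma~\ref{lem:approx_iso_fixed} to fix, with probability $0.99$ over the RHT signs, an event on which every marginal $\Pr\{i \in S_{\dpp}\}$ of $k$-$\dpp(\bar{\A}\bar{\A}^\top)$ is bounded by $O(k/m)$. On this event, a uniform overestimate $q_i = O(k/m)$ is valid in Lemma~\ref{lem:anari_main_2}, so the oracle underlying the ghost algorithm can be implemented as plain uniform sampling from $[m]$, with $K = \sum_i q_i = O(k)$.

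Next I would bound the total number of oracle calls used by the ghost. Constructing $S_0$ uses $\tilde O(k)$ uniform draws by coupon collection (since $K = O(k)$), and each of the $O(\log^3 m)$ outer iterations issues $\tilde O(t-k) = \tilde O(k)$ draws in the up-walk (the down-walk does not query the oracle). The total is thus at most $Ck\log^3 m$ for an absolute constant $C$, so by hypothesis $\tau$ suffices to supply every oracle query. I would then define the coupling by identifying the successive oracle outputs of the ghost with the first $Ck\log^3 m$ of the $\tau$ uniform samples used in Algorithm~\ref{alg:dpp_sample_real}, in order of use. Under this identification, every index ever appearing in $S_0$, any $T_{i,j}$ or $T_i$ in the ghost automatically lies in $\tilde S$. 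A simple induction on $i$ using the inclusion $S_{i+1} \subseteq S_i \cup T_{i+1}$ from line~14 gives $S_{\mathrm{iter}} \subseteq S_0 \cup T_1 \cup \cdots \cup T_{\mathrm{iter}} \subseteq \tilde S$, and combining this with the TV-coupling between $S_{\mathrm{iter}}$ and $S_{\dpp}$ yields $S_{\dpp} \subseteq \tilde S$ except on an event of probability at most $m^{-9} = \delta$.

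The main obstacle I expect is the careful check that the isotropic bound from Lemma~\ref{lem:approx_iso_fixed} is genuinely strong enough to invoke Lemma~\ref{lem:anari_main_2} with \emph{uniform} overestimates, which is what allows the ghost's oracle to collapse to uniform sampling and enables the coupling with Algorithm~\ref{alg:dpp_sample_real}; a secondary technical point is the bookkeeping of oracle calls, where all of the $\tilde O$ factors (initial fill, number of MCMC iterations, and up-walk length) must combine to fit inside the $Ck\log^3 m$ budget promised by the lemma. Everything else in the argument is essentially routine given the $k$-DPP sampling machinery of \cite{alv22} and the coupling formalism of Lemma~\ref{lem:tv_distance}.
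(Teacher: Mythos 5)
Your proposal follows essentially the same route as the paper: couple Algorithm~\ref{alg:dpp_sample_real} with the ghost Algorithm~\ref{alg:dpp_sample_ghost} by sharing the uniform oracle stream, use Lemma~\ref{lem:approx_iso_fixed} to get approximately isotropic marginals so that Lemma~\ref{lem:anari_main_2} (with Remark~\ref{rem:anari_constant}) applies with uniform overestimates, invoke Lemma~\ref{lem:tv_distance} for the TV-coupling between $S_{\mathrm{iter}}$ and a true $k$-DPP, and observe that $S_{\mathrm{iter}}\subseteq\tilde S$ almost surely under the shared randomness. The one place you are less explicit than the paper is the probabilistic bookkeeping that each $T_i$ actually reaches the required size within the fixed $\mathrm{call}$ budget --- the paper isolates this in Lemma~\ref{lem:sample_log} (a geometric tail bound) and unions it with the TV-coupling failure inside Lemma~\ref{lem:ghost_coupling}, whereas you state the oracle-call count as if deterministic; you flag this as a bookkeeping point, and it does not change the structure of the argument, so the proposal is correct in substance.
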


\begin{algorithm}[!ht]
\caption{Sample by using uniform sampling oracle $\mathcal{O}$.}
\label{alg:dpp_sample_real}
\begin{algorithmic}[1]
\State \textbf{Input: }$m, \tau, \mathcal{O}$;
\State \textbf{Initialize }$\tilde{S}_0 \gets \varnothing$;
\For{$\ell = 0, 1, 2, \ldots, \tau-1$}
\State Sample $i_\ell$ by calling $\mathcal{O}$;
\If{$i_\ell \in \tilde{S}_\ell$}:
\State $\tilde{S}_{\ell+1} = \tilde{S}_\ell$;
\Else: 
\State $\tilde{S}_{\ell+1} = \tilde{S}_\ell \cup \{i_\ell\}$;
\EndIf
\EndFor \\
\Return $\tilde{S} = \tilde{S}_N$.
\end{algorithmic}
\end{algorithm}

To prove Lemma~\ref{lem:main_coupling}, we show the following in this section:
\begin{enumerate}
    \item After the preprocessing step, the marginals are approximately isotropic. See Lemma~\ref{lem:approx_iso_fixed}.
    \item In Algorithm~\ref{alg:dpp_sample_ghost}, with probability $1-\delta$, we can obtain each $T_i \in \tbinom{[m]}{t-k}$ after $O(t-k+\log\frac{1}{\delta})$ calls to the oracle. See Lemma~\ref{lem:sample_log}.
    \item In Algorithm~\ref{alg:dpp_sample_ghost}, if we set $\mathrm{call} \geq Ck$ for some constant $C=O(1)$ and $\mathrm{iter} = O(\log^3 m)$, then with probability $1 - \delta$, $S_{\mathrm{iter}}$ is the same as $S_{\dpp}$. See Lemma~\ref{lem:ghost_coupling}.
\end{enumerate}

\paragraph{Monotonicity of Projections.}
We start our analysis by making the following simple observation, which implies that in the sketch-and-project scheme, a larger set can only improve the convergence guarantee, since it leads to a ``larger'' projection matrix in the sense of matrix orders. Formally we have the following lemma.
\begin{lemma}
\label{lem:project_order}
Given matrix $\A\in\R^{m \times n}$, for any subsets $S_1 \subseteq S_2 \subseteq [m]$, define projection matrix $\P_{S_i} := \A^\top \I_{S_i}^\top (\I_{S_i}\A\A^\top\I_{S_i}^\top)^\dagger\I_{S_i}\A$ where $\I_{S_i} \in \R^{|S_i| \times m}$ is the sampling matrix. Then we have
\begin{align*}
\P_{S_1} \preceq \P_{S_2}.
\end{align*}
\end{lemma}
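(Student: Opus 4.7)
The plan is to reinterpret each $\P_{S_i}$ as an orthogonal projector onto an explicitly identified subspace of $\R^n$, and then appeal to the fact that orthogonal projections are monotone in the matrix order under subspace inclusion.

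First, I would observe that for any matrix $\B$, the expression $\B^\top(\B\B^\top)^\dagger \B$ is the orthogonal projector onto $\range(\B^\top)$. This is a standard property of the Moore--Penrose pseudoinverse: using the thin SVD $\B = \U_r \mathbf{\Sigma}_r \V_r^\top$, one computes $(\B\B^\top)^\dagger = \U_r \mathbf{\Sigma}_r^{-2}\U_r^\top$ and hence $\B^\top(\B\B^\top)^\dagger\B = \V_r\V_r^\top$, which is exactly the orthogonal projector onto the row space of $\B$. Applying this with $\B = \I_{S_i}\A$ identifies $\P_{S_i}$ as the orthogonal projector onto $\range(\A^\top \I_{S_i}^\top) \subseteq \R^n$.

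Second, since $S_1 \subseteq S_2$, every column of $\A^\top \I_{S_1}^\top$ is also a column of $\A^\top \I_{S_2}^\top$, so $\range(\A^\top \I_{S_1}^\top) \subseteq \range(\A^\top \I_{S_2}^\top)$. Finally, whenever orthogonal projectors $\P_U$ and $\P_V$ satisfy $U \subseteq V$, one has $\P_U \P_V = \P_V \P_U = \P_U$, which makes $\P_V - \P_U$ itself an orthogonal projector (onto $V \cap U^\perp$) and in particular psd, giving $\P_U \preceq \P_V$. Combining these three observations yields $\P_{S_1} \preceq \P_{S_2}$.

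There is no genuine obstacle; the only point requiring care is justifying that the pseudoinverse expression really is the orthogonal projector onto $\range(\A^\top\I_{S_i}^\top)$, which is routine once one unfolds the SVD. Everything else is a one-line consequence of subspace inclusion.
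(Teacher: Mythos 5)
Your proof is correct and follows essentially the same approach as the paper: both identify $\P_{S_i}$ as the orthogonal projector onto the row space of $\I_{S_i}\A$ and exploit the nesting of these subspaces when $S_1 \subseteq S_2$. The only minor difference is in the final step, where the paper argues via the variational characterization $\x^\top\P_{S_i}\x = 1 - \min_{\x_i \in \mathrm{span}\langle\A_{S_i}\rangle}\|\x-\x_i\|^2$ and you instead use the algebraic fact that $\P_{S_2}-\P_{S_1}$ is itself a projector (hence psd); both are standard and equally valid.
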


\begin{proof}
We only need to show that for any vector $\x \in \R^n$ such that $\|\x\|=1$, we have $\x^\top \P_{S_1} \x \leq \x^\top\P_{S_2} \x$. Let $\mathrm{span}\langle\A_{S_i}\rangle$ be the subspace spanned by the rows of $\I_{S_i}\A$. We have
\begin{align*}
\x^\top \P_{S_i} \x 
= 1 - \x^\top(\I-\P_{S_i})\x
= 1- \|\x - \P_{S_i}\x\|^2 = 1 - \min_{\x_i \in \mathrm{span}\langle\A_{S_i}\rangle} \|\x - \x_i\|^2 .
\end{align*}
 Note that since $S_1 \subseteq S_2$, we know that $\mathrm{span}\langle\A_{S_2}\rangle$ contains $\mathrm{span}\langle\A_{S_1}\rangle$, thus we have:
\begin{align*}
 \min_{\x_1 \in \mathrm{span}\langle\A_{S_1}\rangle} \|\x - \x_1\|^2 \geq \min_{\x_2 \in \mathrm{span}\langle\A_{S_2}\rangle} \|\x- \x_2\|^2.
\end{align*}
We conclude that $\x^\top \P_{S_1} \x \leq \x^\top \P_{S_2} \x$ holds for all $\x \in \R^n$ such that $\|\x\|=1$, as desired.
\end{proof}

According to Lemma~\ref{lem:project_order}, $\P_{S_1} \preceq \P_{S_2}$ holds for all $S_1 \subseteq S_2$. Thus we can take expectation over the randomness of $S_1$ and $S_2$ and have $\lambda_{\min}^+(\E[\P_{S_1}]) \leq \lambda_{\min}^+(\E[\P_{S_2}])$ holds conditioned on the event $S_1 \subseteq S_2$. 
Thus, if we can obtain a sample $\tilde{S}$ that contains a $k$-DPP sample $S_{\dpp}$, then we know that $\lambda_{\min}^+(\E[\P_{\tilde{S}}]) \geq \lambda_{\min}^+(\E[\P_{S_\dpp}])$ which guarantees a faster convergence rate for our algorithm.

\subsection{Preprocessing}
In this section we show that the $k$-DPP marginals become approximately isotropic (i.e., uniform) after we use the randomized Hadamard transform (RHT) as a preprocessing step. We note that prior works have shown that the RHT can uniformize the entries of a vector, or the row-norms or leverage scores of a matrix \cite{t11}. However, this analysis typically relies on the fact that all of these quantities can be easily described as linear or quadratic functions of the rows of the target matrix. On the other hand, the marginals of a general fixed size DPP are not expressed with any simple formula (they can only be described with expressions that involve elementary symmetric polynomials of certain sub-matrices of the input matrix). We get around this issue by relying on the fact that $k$-DPP marginals can be described as a mixture of the marginals of simpler Projection DPPs (this fact has been previously used in $k$-DPP sampling algorithms \cite{kt12}).

First, we show that applying the RHT as in Algorithm \ref{alg:main} does not affect the optimal solution of the linear system $\A\x = \b$. Denote $\bar{\A} = \Q\A, \bar{\b} = \Q\b$ and consider the linear system $\bar{\A}\x=\bar{\b}$. Since $\Q^\top\Q = \I$, we know the optimal solution is
\begin{align*}
\bar{\A}^\dagger\bar{\b} = (\Q\A)^\dagger\Q\b = (\A^\top\Q^\top\Q\A)^\dagger \A^\top\Q^\top\Q\b = (\A^\top\A)^\dagger\A^\top\b = \A^\dagger\b = \x^*,
\end{align*}
which is the same as the optimal solution of $\A\x = \b$. To construct this system, we need to compute $\Q\A$ and $\Q\b$ which takes $O(mn\log m) = \tilde{O}(mn)$ time, this is much better than the $\tilde{O}(mnk^{\omega-2})$ result in Section~\ref{sec:compare_sample_dpp}. Next we show that the marginals of $k$-$\dpp(\bar{\A}\bar{\A}^\top)$ are approximately isotropic.

\begin{lemma}[Approximately isotropic $k$-DPP]
\label{lem:approx_iso_fixed}
Given matrix $\A\in\R^{m\times n}$, denote $\mathbf{Q} = \frac{1}{\sqrt{m}}\H \mathbf{D}$ as the RHT matrix and $\bar{\A} = \Q\A$ as the preprocessed matrix. Let $S_{\dpp}\sim k$-$\dpp(\bar{\A}\bar{\A}^\top)$, for any $0 < \delta < 1$, with probability $1-\delta$ we have
\begin{align*}
\Pr\{j \in S_{\dpp}\} \leq \left(\sqrt{\frac{k}{m}} + \sqrt{\frac{8\log(m / \delta)}{m}}\right)^2 ~~~\text{for all}~~~ j \in [m].
\end{align*}
\end{lemma}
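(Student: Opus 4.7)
The plan is to express the marginal probability $\Pr\{j\in S_\dpp\}$ as a quadratic form in the Rademacher diagonal $\mathbf{d}$ underlying the RHT and then invoke Lemma~\ref{lem:rademacher} together with a union bound over $j\in[m]$. To set this up, I would first identify the correlation kernel of $k$-$\dpp(\bar\A\bar\A^\top)$. Let $\{\u_i,\lambda_i\}$ be an eigendecomposition of $\A\A^\top$, so that $\{\Q\u_i,\lambda_i\}$ is one for $\bar\A\bar\A^\top=\Q\A\A^\top\Q^\top$. By Lemma~\ref{lem:sample_dpp_projection}, sampling from $k$-$\dpp(\bar\A\bar\A^\top)$ amounts to drawing independent $\gamma_i\sim\mathrm{Bernoulli}(\lambda_i/(\lambda_i+1))$, conditioning on $\sum_i\gamma_i=k$, and then sampling a projection DPP on $\{\Q\u_i:\gamma_i=1\}$. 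Using that the marginal of a projection DPP with orthonormal spanning vectors $\{\v_i\}_{i\in A}$ at coordinate $j$ equals $\sum_{i\in A}\v_i(j)^2$, and averaging over $\gamma$,
\begin{align*}
    \Pr\{j\in S_\dpp\}=\sum_i p_i\,(\Q\u_i)_j^2=\e_j^\top \Q\,\K_\dpp\,\Q^\top \e_j,\qquad \K_\dpp:=\sum_i p_i\,\u_i\u_i^\top,
\end{align*}
where $p_i:=\Pr\{\gamma_i=1\mid\sum_\ell\gamma_\ell=k\}\in[0,1]$ and $\sum_i p_i=k$. Hence $\K_\dpp$ is a deterministic (independent of $\mathbf{d}$) PSD matrix with $\tr(\K_\dpp)=k$ and $\K_\dpp\preceq\I$.

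Next I would pass to a quadratic form in $\mathbf{d}$. Since $\Q^\top\e_j=\tfrac{1}{\sqrt m}\D\,\H_{:,j}$ and the entries of $\H_{:,j}$ are $\pm1$, setting $\widetilde\K_j:=\diag(\H_{:,j})\,\K_\dpp\,\diag(\H_{:,j})$ yields
\begin{align*}
    \Pr\{j\in S_\dpp\}=\tfrac{1}{m}\,\mathbf{d}^\top\widetilde\K_j\,\mathbf{d}.
\end{align*}
Since $\diag(\H_{:,j})$ is an orthogonal $\pm1$ diagonal, $\widetilde\K_j$ is a similarity transform of $\K_\dpp$, so $\tr(\widetilde\K_j)=k$ and $\|\widetilde\K_j\|_2\le1$. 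I would then define $f_j(\mathbf{d}):=\sqrt{\Pr\{j\in S_\dpp\}}=\tfrac{1}{\sqrt m}\|\widetilde\K_j^{1/2}\mathbf{d}\|$: as a norm of a linear image of $\mathbf{d}$ it is convex, and it is $L$-Lipschitz with $L=\|\widetilde\K_j^{1/2}\|_2/\sqrt m\le1/\sqrt m$. By Jensen, $\E[f_j(\mathbf{d})]\le\sqrt{\E[f_j(\mathbf{d})^2]}=\sqrt{\tr(\widetilde\K_j)/m}=\sqrt{k/m}$. Applying Lemma~\ref{lem:rademacher} with $t=\sqrt{8\log(m/\delta)}$ gives
\begin{align*}
    \Pr\Big\{f_j(\mathbf{d})\ge\sqrt{k/m}+\sqrt{8\log(m/\delta)/m}\Big\}\le\delta/m,
\end{align*}
and squaring the event inside the probability and union-bounding over $j\in[m]$ yields the stated conclusion.

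The main obstacle is the first step: establishing the two spectral properties $\tr(\K_\dpp)=k$ and $\K_\dpp\preceq\I$ of the correlation kernel from the mixture representation in Lemma~\ref{lem:sample_dpp_projection}. Once both are in hand, the trace equality controls $\E f_j^2$ (giving the leading $\sqrt{k/m}$ mean) and the operator-norm bound controls the Lipschitz constant $1/\sqrt m$ (yielding the $\sqrt{8\log(m/\delta)/m}$ fluctuation), so the target bound emerges with exactly the right constants. The remaining calculations are essentially mechanical.
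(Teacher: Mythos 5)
Your proposal is correct and follows essentially the same route as the paper: decompose the $k$-DPP marginal into a mixture of projection-DPP marginals via Lemma~\ref{lem:sample_dpp_projection}, rewrite it as $\frac1m\mathbf{d}^\top\B_j\mathbf{d}$ for a fixed PSD matrix $\B_j$ with $\tr(\B_j)=k$ and $\|\B_j\|\le1$, and then apply the Lipschitz--Rademacher tail bound (Lemma~\ref{lem:rademacher}) with a union bound over $j$. The only cosmetic difference is that you collapse the sum over sign patterns $\gamma$ into a single matrix $\K_\dpp=\sum_i p_i\u_i\u_i^\top$ with $p_i=\Pr\{\gamma_i=1\mid\sum_\ell\gamma_\ell=k\}$ before conjugating by $\diag(\H_{:,j})$, whereas the paper carries the $\gamma$-sum through the concentration step; both yield the same matrix $\B_j$.
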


\begin{proof}
Although there is no explicit expression for the marginals of general fixed size DPP, there is one specific class of fixed size DPP called Projection DPP which has simple expressions for the marginals. Moreover, sampling a $k$-$\dpp$ is equivalent to sampling a mixture of Projection DPPs, see Lemma~\ref{lem:sample_dpp_projection}. Let $\A = \U\mathbf{\Sigma}\V^\top$ be the SVD of $\A$ and denote $\bar{\U} = \Q\U$, then we have eigendecomposition $\bar{\A}\bar{\A}^\top = \bar{\U}(\mathbf{\Sigma}\mathbf{\Sigma}^\top)\bar{\U}^\top$. Denote $r = \rank(\bar{\A}\bar{\A}^\top) = \rank(\A)$. Lemma~\ref{lem:sample_dpp_projection} shows that if we first sample $\gamma = (\gamma_1, \ldots, \gamma_r)$ where $\gamma_i\sim\mathrm{Bernoulli}(\frac{\lambda_i}{\lambda_i +1})$ are independent, and then sample $S_\gamma \sim \mathrm{P}$-$\dpp(\bar{\U}_{:,\gamma}\bar{\U}_{:,\gamma}^\top)$, where $\bar{\U}_{:,\gamma}$ is the submatrix of columns of $\bar{\U}$ indexed by the non-zeros in $\gamma$, then conditioned on the event that $\sum_{i} \gamma_i = k$, $S_\gamma$ is distributed the same as $S_{\dpp}\sim k$-$\dpp(\bar{\A}\bar{\A}^\top)$.

The advantage of decomposing $k$-DPP into $\mathrm{P}$-DPPs is that we can explicitly express marginals of $\mathrm{P}$-DPP. Since $\bar{\U}_{:,\gamma}\bar{\U}_{:,\gamma}^\top = \bar{\U}_{:,\gamma}\bar{\U}_{:,\gamma}^\dagger$ is a projection matrix,  we know that the marginals of $\mathrm{P}$-$\dpp(\bar{\U}_{:,\gamma}\bar{\U}_{:,\gamma}^\top)$ can be expressed as the diagonal entries of this kernel matrix as (e.g., see \cite{dm21})
\begin{align*}
\Pr\{j \in S_\gamma\} = (\bar{\U}_{:,\gamma}\bar{\U}_{:,\gamma}^\top)_{j,j} = \|\e_j^\top \bar{\U}_{:,\gamma}\|^2 ~~~\text{where}~~~ j\in[m].
\end{align*}
Based on this result, we can compute the marginals of $S_{\dpp}$ by summing over the marginals of all possible P-DPPs. For $j\in[m]$ we have
\begin{align*}
\Pr\{j \in S_{\dpp}\} = \sum_{\gamma \in \{0,1\}^r} \left(\Pr\{j \in S_\gamma\}\cdot\Pr\{\gamma \mid \sum_i \gamma_i = k\} \right) = \sum_{\gamma \in \{0,1\}^r} \left(\|\e_j^\top \bar{\U}_{:,\gamma}\|^2\cdot\Pr\{\gamma \mid \sum_i \gamma_i = k\} \right).
\end{align*}
Note that by definition we have
\begin{align*}
\|\e_j^\top \bar{\U}_{:,\gamma}\|^2 = \|\e_j^\top \Q\U_{:,\gamma}\|^2 = \frac{1}{m}\|\e_j^\top\H\mathbf{D}\U_{:,\gamma}\|^2 = \frac{1}{m}\|\mathbf{d}^\top \diag(\e_j^\top\H)\U_{:,\gamma}\|^2,
\end{align*}
where $\mathbf{d}$ is the vector of Rademacher variables along the diagonal of $\D$. If we denote $\mathbf{E}_j := \diag(\e_j^\top\H)$ and $p_{\gamma} := \Pr\{\gamma \mid \sum_i \gamma_i = k\}$, then we have
\begin{align*}
\Pr\{j \in S_{\dpp}\} = & ~ \sum_{\gamma \in \{0,1\}^r} p_{\gamma}\cdot \|\e_j^\top \bar{\U}_{:,\gamma}\|^2 \\
= & ~ \frac{1}{m} \sum_{\gamma \in \{0,1\}^r} p_{\gamma}\cdot \|\mathbf{d}^\top \mathbf{E}_j\U_{:,\gamma}\|^2. \\ 
= & ~ \frac{1}{m} \sum_{\gamma \in \{0,1\}^r} p_{\gamma}\cdot \mathbf{d}^\top\mathbf{E}_j\U_{:,\gamma}\U_{:,\gamma}^\top \mathbf{E}_j \mathbf{d} \\ 
= & ~ \frac{1}{m} \mathbf{d}^\top \left(\sum_{\gamma \in \{0,1\}^r} p_{\gamma} \mathbf{E}_j \U_{:,\gamma}\U_{:,\gamma}^\top\mathbf{E}_j\right)\mathbf{d}.
\end{align*}
We define function $f_j(\x) = \sqrt{\frac{1}{m} \x^\top\B_j \x}$ where $\B_j := \sum_{\gamma \in \{0,1\}^r} p_{\gamma} \mathbf{E}_j \U_{:,\gamma}\U_{:,\gamma}^\top\mathbf{E}_j$. We first analyze the properties of $\B_j$. Since $\mathbf{E}_j^2 = \I$ and $\U_{:,\gamma}^\top\U_{:,\gamma} = \I$, we have
\begin{align*}
\|\B_j\| \leq \sum_{\gamma \in \{0,1\}^r} p_{\gamma} \cdot\|\mathbf{E}_j \U_{:,\gamma}\U_{:,\gamma}^\top\mathbf{E}_j\| \leq \sum_{\gamma \in \{0,1\}^r} p_{\gamma} \cdot\| \U_{:,\gamma}\U_{:,\gamma}^\top\| = 1
\end{align*}
and
\begin{align*}
\tr(\B_j) = \sum_{\gamma \in \{0,1\}^r} p_{\gamma}\cdot \tr(\mathbf{E}_j \U_{:,\gamma}\U_{:,\gamma}^\top\mathbf{E}_j) = \sum_{\gamma \in \{0,1\}^r} p_{\gamma} \cdot \tr(\U_{:,\gamma}^\top\U_{:,\gamma}) = k\cdot\left(\sum_{\gamma \in \{0,1\}^r} p_{\gamma}\right) = k.
\end{align*}
Next we analyze the properties of $f_j$. Note that $f_j$ is convex, with Lipschitz constant as follows:
\begin{align*}
|f_j(\x) - f_j(\y)| = \frac{1}{\sqrt{m}}|\|\x\|_{\B_j} - \|\y\|_{\B_j}| \leq \frac{1}{\sqrt{m}} \|\x-\y\|_{\B_j} \leq \frac{1}{\sqrt{m}} \|\B_j\|\cdot \|\x-\y\| \leq \frac{1}{\sqrt{m}} \|\x-\y\|.
\end{align*}
We consider the random variable $f_j(\mathbf{d})$ and bound the expectation as follows:
\begin{align*}
\E[f_j(\mathbf{d})] \leq (\E[f_j(\mathbf{d})^2])^{1/2} = \left(\frac{1}{m} \E[\mathbf{d}^\top\B_j \mathbf{d}]\right)^{1/2} = \left(\frac{1}{m} \tr(\B_j)\right)^{1/2} = \sqrt{\frac{k}{m}}.
\end{align*}
Then we use Lemma~\ref{lem:rademacher} on the concentration of a Lipschitz function of a Rademacher random vector, with $t = \sqrt{8\log(m / \delta)}$, and obtain
\begin{align*}
\Pr\left\{\sqrt{\frac{1}{m} \mathbf{d}^\top\B_j \mathbf{d}} \geq \sqrt{\frac{k}{m}} + \sqrt{\frac{8\log(m/ \delta)}{m}} \right\} \leq \frac{\delta}{m}.
\end{align*}
Further notice that $\Pr\{j \in S_{\dpp}\} = f_j^2(\mathbf{d})= \frac{1}{m} \mathbf{d}^\top\B_j \mathbf{d}$, thus by applying union bound we have that with probability $1-\delta$,
\begin{align*}
\Pr\{j \in S_{\dpp}\} \leq \left(\sqrt{\frac{k}{m}} + \sqrt{\frac{8\log(m / \delta)}{m}}\right)^2 ~~~\text{for all}~~~ j \in [m].
\end{align*}
\end{proof}
\begin{remark}\label{rem:approx_iso}
Since we assume that $k \geq \log m$, by setting $\delta = 0.01$ we have
\begin{align*}
\Pr\{j \in S_{\dpp}\} \leq  \left(\sqrt{\frac{k}{m}} + \sqrt{\frac{8\log(m / \delta)}{m}}\right)^2 \leq \frac{2k}{m}+ \frac{16\log(m / \delta)}{m} \leq \frac{c_0 k}{m}
\end{align*}
for some constant $c_0=O(1)$. This shows that after the randomized Hadamard transform, the marginals are approximately isotropic up to a constant factor with probability $0.99$.
\end{remark}

\subsection{Successful DPP Sample}
In this section we show that as long as the marginals are approximately isotropic, then with high probability, the output of Algorithm~\ref{alg:dpp_sample_ghost} is the same as a real $k$-$\dpp(\bar{\A}\bar{\A}^\top)$ sample. The main result is Lemma~\ref{lem:ghost_coupling}. Notice that Algorithm~\ref{alg:dpp_sample_ghost} generates new $T_i$'s with given size by calling the uniform sampling oracle, thus we first compute the number of calls to the oracle we need in Lemma~\ref{lem:sample_log}. Since the proof of Lemma~\ref{lem:sample_log} is a straightforward concentration argument, we defer the proof to Appendix~\ref{sec:appendix_proof}.

\begin{lemma}\label{lem:sample_log}
Suppose we have an oracle that can uniformly sample one element from $[m]$. Suppose we have $S \in \tbinom{[m]}{k}$ where $k \leq m$. Let $t = O(k)$ and $0 < \delta < 1$, then with probability $1 - \delta$, after at most $O(t+\log\frac{1}{\delta})$ calls to the oracle, we can obtain $T := \cup_{j=1}^t \{i_j\}$ such that $T \subseteq [m] \backslash S$.
\end{lemma}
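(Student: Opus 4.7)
The plan is to treat each oracle call as an independent Bernoulli trial: a call is a \emph{success} when the returned index falls outside $S \cup T_{\mathrm{cur}}$, where $T_{\mathrm{cur}}$ denotes the subset accumulated so far. Since we only add the sample to $T_{\mathrm{cur}}$ upon success and stop once $|T_{\mathrm{cur}}| = t$, at every point before termination the per-call success probability equals $(m - k - |T_{\mathrm{cur}}|)/m \geq (m - k - t)/m$. Under the hypothesis $t = O(k)$ (and in the regime where this lemma is invoked, $k$ is much smaller than $m$, so $k+t \leq m/2$), this probability is bounded below by an absolute constant, say $1/2$.

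First, I would stochastically dominate the actual process by one in which each oracle call succeeds independently with probability exactly $1/2$. Concretely, couple each oracle call to an auxiliary Bernoulli$(1/2)$ variable $Z_\ell$ so that ``$Z_\ell = 1$'' implies ``$\ell$-th call is a success''; the number $N$ of oracle calls required to reach $|T_{\mathrm{cur}}| = t$ is then stochastically dominated by the waiting time for $t$ successes in $Z_1, Z_2, \ldots$, i.e.\ by a sum of $t$ i.i.d.\ geometric$(1/2)$ random variables.

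Next, I would apply a standard multiplicative Chernoff bound to the dual quantity: choose $N_0 = c_1 t + c_2 \log(1/\delta)$ for sufficiently large absolute constants $c_1, c_2$, and observe that the number of successes in $N_0$ independent Bernoulli$(1/2)$ trials satisfies
\begin{align*}
\Pr\Big\{\mathrm{Bin}(N_0, 1/2) < t\Big\} \leq \exp\!\big(-\Omega(N_0)\big) \leq \delta.
\end{align*}
Hence after at most $N_0 = O(t + \log(1/\delta))$ oracle calls, with probability $\geq 1 - \delta$ the process has accumulated $t$ distinct indices in $[m]\setminus S$, which is precisely the desired $T$.

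The only real subtlety — and it is genuinely minor — is tracking the regime $k + t \ll m$, which is what gives a constant lower bound on the per-call acceptance probability and thereby the clean $O(t + \log(1/\delta))$ sample complexity. This is implicit in the statement ``$t = O(k)$'' together with the way Algorithm~\ref{alg:dpp_sample_ghost} uses this lemma (with $k$ much smaller than $m$); without such a separation, the bound would degrade by a multiplicative factor of $m/(m-k-t)$. Beyond this bookkeeping, the argument is a routine concentration inequality for a sum of independent Bernoullis.
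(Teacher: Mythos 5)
Your proof is correct and takes a genuinely different, and arguably more elementary, route than the paper's. The paper models the number of oracle calls as a sum of \emph{non-identical} geometric random variables $X_i \sim \mathrm{Ge}(p_i)$ with $p_i = 1-(k+i-1)/m$, and then invokes Janson's tail bound for sums of independent geometrics (their Lemma~\ref{lem:geo_tail}), which requires carefully estimating both $\mu = \E[X]$ and $p_* = \min_i p_i$ and then optimizing the parameter $\lambda$ in the bound. You instead collapse all the $p_i$ to a uniform lower bound (at least $1/2$ in the regime $k+t \ll m$), couple the process with i.i.d.\ Bernoulli$(1/2)$ trials so that the actual waiting time is stochastically dominated by a negative binomial, and then pass through the standard duality $\Pr\{N > N_0\} = \Pr\{\mathrm{Bin}(N_0,1/2) < t\}$ to apply a multiplicative Chernoff bound. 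Your version is self-contained (no need for Janson's specialized inequality) and cleanly separates the two ingredients — domination and concentration — at the cost of slightly looser constants. Both proofs rely on the same regime assumption $k+t \leq m/2$, which the paper leaves somewhat implicit (it appears only as the remark ``follows from $k,t\ll m$'' mid-computation), and which you helpfully make explicit; you are right that it is guaranteed by how Lemma~\ref{lem:ghost_coupling} invokes this lemma with $t = O(k)$ and $k \geq \log m$ against $m$.
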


\begin{lemma}[Ghost coupling]\label{lem:ghost_coupling}
Conditioned on an event that holds with probability $0.99$ (and only depends on RHT). For $k \geq \log m$ and $\delta = m^{-O(1)}$, if we run Algorithm~\ref{alg:dpp_sample_ghost} with choice $\mathrm{call} \geq Ck$ for some constant $C = O(1)$, and $\mathrm{iter} = O(\log^3 m)$, then with probability $1 - \delta$, $S_{\mathrm{iter}}$ is the same as $S_{\dpp}$, where $S_{\dpp}\sim k$-$\dpp(\bar{\A}\bar{\A}^\top)$.
\end{lemma}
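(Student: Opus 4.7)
The plan is to show that Algorithm~\ref{alg:dpp_sample_ghost} is a faithful stochastic implementation of the down-up walk underlying Lemma~\ref{lem:anari_main_2}, and then convert the total variation bound of that lemma into a high-probability coupling via Lemma~\ref{lem:tv_distance}.

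First, I would condition on the event from Lemma~\ref{lem:approx_iso_fixed} (as sharpened in Remark~\ref{rem:approx_iso}), which holds with probability $0.99$ and yields uniform marginal overestimates $q_j = c_0 k/m$ with $c_0 = O(1)$. Feeding these into Lemma~\ref{lem:anari_main_2} gives $K = c_0 k$, so there is an algorithm producing a sample within total variation distance $m^{-9}$ (per Remark~\ref{rem:anari_constant}) of $k$-$\dpp(\bar{\A}\bar{\A}^\top)$ using $\mathrm{iter} = O(\log^3 m)$ down-up iterations, each calling a subroutine $\mathcal{T}_\mu(O(K), k) = \mathcal{T}_\mu(O(k), k)$. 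Concretely, under isotropic marginals this subroutine is exactly Algorithm~\ref{alg:dpp_sample}: draw a uniformly random $t$-sized superset $T$ of $S_i$ (equivalently, draw $T_{i+1}\subseteq [m]\setminus S_i$ uniformly over $(t-k)$-sized subsets, where $t = O(k)$), then down-walk a sample $S_{i+1}$ from $\mu$ restricted to $T$.

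Second, I would verify that one iteration of the Ghost Algorithm faithfully realizes this subroutine. The down-walk at line $14$ is taken verbatim from Algorithm~\ref{alg:dpp_sample}, so only the up-walk step needs checking. Lines $5$--$12$ of Algorithm~\ref{alg:dpp_sample_ghost} construct $T_{i+1}$ by repeated uniform draws from $\mathcal{O}$, discarding duplicates and elements of $S_i \cup T_{i,j}$; by exchangeability of the i.i.d.\ oracle calls, conditioned on the event that exactly $t-k$ distinct indices outside $S_i$ are accumulated, the resulting set $T_{i+1}$ is uniform over $(t-k)$-sized subsets of $[m]\setminus S_i$, as required. The probability of \emph{not} accumulating $t-k$ distinct indices is controlled by Lemma~\ref{lem:sample_log}: with target failure probability $\delta' = \delta/(2\,\mathrm{iter})$, and since $k\geq \log m$ and $\delta = m^{-O(1)}$, the bound $\mathrm{call} \geq C(t-k+\log(1/\delta')) = O(k)$ suffices for a suitable constant $C = O(1)$. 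A union bound over the $\mathrm{iter} = O(\log^3 m)$ iterations then keeps the cumulative failure probability below $\delta/2$.

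Third, conditioned on every up-walk succeeding, the Ghost Algorithm is distributionally identical to the algorithm underlying Lemma~\ref{lem:anari_main_2}, so $S_{\mathrm{iter}}$ is at total variation distance at most $m^{-9}$ from a true $k$-$\dpp(\bar{\A}\bar{\A}^\top)$ sample. By Lemma~\ref{lem:tv_distance}, there exists a coupling $(S_{\mathrm{iter}}, S_\dpp)$ under which the two agree except on an event of probability at most $m^{-9}$, and combining with the up-walk union bound yields overall probability $1-\delta$ (for any $\delta = m^{-O(1)}$, after absorbing the $m^{-9}$ slack into the constant in the $O(1)$ exponent). The main subtlety, and the part of the argument that needs the most care, is the second step: verifying that despite a random number of rejected duplicates, the \emph{set} (not the sequence) of accepted indices, conditioned on reaching the target size $t-k$, is genuinely uniform over $\tbinom{[m]\setminus S_i}{t-k}$, and simultaneously that $\mathrm{call} = O(k)$ oracle queries are enough to support the union bound across all $\mathrm{iter} = O(\log^3 m)$ iterations under the regime $k \geq \log m$.
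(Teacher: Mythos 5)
Your proposal is correct and follows essentially the same route as the paper: condition on the isotropy event from Lemma~\ref{lem:approx_iso_fixed}/Remark~\ref{rem:approx_iso} to get $K=O(k)$, invoke Lemma~\ref{lem:anari_main_2} to get the $O(\log^3 m)$-iteration TV guarantee, control the up-walk failure via Lemma~\ref{lem:sample_log} with $\delta'=\delta/(2\,\mathrm{iter})$ and a union bound, and finish with Lemma~\ref{lem:tv_distance}. Your second step, making explicit the exchangeability argument that conditioned on accumulating $t-k$ distinct non-$S_i$ indices the accepted \emph{set} is uniform over $\tbinom{[m]\setminus S_i}{t-k}$, is a welcome elaboration of a point the paper treats as immediate.
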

\begin{proof}
According to Lemma~\ref{lem:approx_iso_fixed} and Remark~\ref{rem:approx_iso}, with probability $0.99$ we have $\Pr\{j \in S_{\dpp}\} \leq \frac{c_0 k}{m}$ holds for all $j\in[m]$ for some constant $c_0 = O(1)$, which gives
\begin{align*}
K := \sum_{j=1}^m \Pr\{j \in S_{\dpp}\} \leq c_0 k.
\end{align*}
By combining this result with Lemma~\ref{lem:anari_main_2}, we know that conditioned on an event that holds with probability $0.99$, Algorithm~\ref{alg:dpp_sample_ghost} produces a sample from a distribution with total variation distance $m^{-O(1)}$ from the $k$-DPP distribution with $O(\log^3 m)$ iterations of up-down walk, and size of the super set needs to satisfies $|T_i| = O(K)$. Since we know that $K = O(k)$, the constraints become $|T_i| = O(k)$ for all $i$. By Lemma~\ref{lem:sample_log}, in order to satisfy $|T_i| = t-k = O(k)$ for all $i$
holds with probability $1 - \frac{\delta}{2}$, we need to set
\begin{align*}
\mathrm{call} = O(t-k + \log\frac{2\mathrm{iter}}{\delta}) = O(k+ \log\log m + \log\frac{1}{\delta}) = O(k+\log\frac{1}{\delta}).
\end{align*}
We set $\delta = m^{-O(1)}$ and have $\log \frac{1}{\delta} = O(\log m)$, thus there exists a constant $C=O(1)$ such that it suffices to set $\mathrm{call} \geq Ck$. Then we can apply a union bound and know that with probability $1-\frac{\delta}{2}$, we can obtain all $\{T_i\}_{i=1}^{\mathrm{iter}}$ such that $|T_i| = t-k$ and $T_i \cap S_{i-1} = \varnothing$. Finally, by using Lemma~\ref{lem:tv_distance} and applying another union bound again, we know that $\Pr\{S_{\mathrm{iter}} = S_{\dpp}\} = 1 - \delta$.
\end{proof}

\paragraph{Completing the coupling argument.} We now use the guarantee for the ghost algorithm from Lemma \ref{lem:ghost_coupling} to complete the proof of Lemma \ref{lem:main_coupling} from the beginning of this section.
\begin{proof}[Proof of Lemma~\ref{lem:main_coupling}]
In Algorithm~\ref{alg:dpp_sample_real}, in order to calculate $\tau$ which is the number of calls to the oracle we need, we look at the coupled Algorithm~\ref{alg:dpp_sample_ghost}. There are two parts that need to be computed: (i) $\mathrm{call}$, the number of calls we need to obtain each $T_i$ and (ii) $\mathrm{iter}$, the number of iterations we need. Let $\delta= m^{-O(1)}$ be the failure probability. We set $\mathrm{call} \geq Ck$ for some constant $C = O(1)$ and $\mathrm{iter} = O(\log^3 m)$, then we have $\tau = \mathrm{call} \times \mathrm{iter} \geq Ck\log^3 m$.
According to Lemma~\ref{lem:ghost_coupling}, we know $S_{\mathrm{iter}} = S_{\dpp}$ holds with probability $1-\delta$. Furthermore, since we almost surely have
\begin{align*}
S_{\mathrm{iter}} \subseteq  \bigcup_{i=0}^{\mathrm{iter}-1} (S_i \cup T_i) = \tilde{S},
\end{align*}
thus we know $S_{\dpp} \subseteq \tilde{S}$ holds with probability $1-\delta$. Thus we finish the proof.
\end{proof}

Now with Lemma~\ref{lem:improve_bound}, Lemma~\ref{lem:main_coupling} and Lemma~\ref{lem:project_order}, we prove our main result as follows.

\subsection{Proof of Lemma~\ref{lem:main_sampling}}
\label{sec:proof_sample}
\begin{proof}[Proof of Lemma~\ref{lem:main_sampling}]
For $k\geq \log m$, we set $\tau \geq Ck\log^3 m$ for some constant $C = O(1)$ which is specified in the proof of Lemma~\ref{lem:ghost_coupling} and larger than $2$. We set $k' = \min\{\tau /\log^3 m, r\}$ and consider $\dpp$ sample $S_{\dpp}\sim k'$-$\mathrm{DPP}(\bar{\A}\bar{\A}^\top)$, and let $\P_{S_{\dpp}}$ be the corresponding projection matrix. Let $\tilde{S}$ be the output of Algorithm~\ref{alg:dpp_sample_real} with given $\tau$, and let $\P_{\tilde{S}}$ be the corresponding projection matrix. We define $\mathcal{E}$ as the event $\{S_{\dpp} \subseteq \tilde{S}\}$, then according to Lemma~\ref{lem:main_coupling}, event $\mathcal{E}$ holds with probability $1 - \delta$. Thus we have
\begin{align*}
\E[\P_{\tilde{S}}] = & ~ \E[\P_{\tilde{S}}\mid \mathcal{E}] \Pr\{\mathcal{E}\} + \E[\P_{\tilde{S}}\mid \mathcal{E}^c] \Pr\{\mathcal{E}^c\} \\
\succeq & ~\E[\P_{\tilde{S}}\mid \mathcal{E}] \Pr\{\mathcal{E}\} \\
\succeq & ~ \E[\P_{S_{\dpp}}\mid \mathcal{E}] \Pr\{\mathcal{E}\} \\
= & ~ \E[\P_{S_{\dpp}}] - \E[\P_{S_{\dpp}}\mid \mathcal{E}^c] \Pr\{\mathcal{E}^c\} \\
\succeq & ~ \E[\P_{S_{\dpp}}] - \Pr\{\mathcal{E}^c\}\cdot \I \\
= & ~ \E[\P_{S_{\dpp}}] - \delta \cdot \I
\end{align*}
where the first step follows from the law of total expectation, the second step follows since the projection matrices are PSD, the third step follows from the definition of $\mathcal{E}$ and Lemma~\ref{lem:project_order}, the fourth step follows from the law of total expectation, the fifth step follows from Lemma~\ref{lem:project_order} and the last step follows since $\Pr\{\mathcal{E}\} \geq 1- \delta$. Notice that since $\Q^\top\Q = \I$, we know that $\bar{\A} = \Q\A$ has the same spectrum as $\A$ and thus $\bar{\kappa}_{k}(\bar{\A})=\bar{\kappa}_{k}(\A)=\bar{\kappa}_{k}$. By applying Lemma~\ref{lem:improve_bound} to $\bar{\A}$ we know that $\lambda_{\min}^+(\E[\P_{S_{\dpp}}]) \geq \frac{k'-k}{k'-k-1 + (r-k)\bar{\kappa}_{k}^2}$. By combining this with the above result we have
\begin{align}\label{eq:S_lower_bound}
\lambda_{\min}(\E[\P_{\tilde{S}}]) \geq & ~ \lambda_{\min}^+(\E[\P_{S_{\dpp}}]) - \delta \nonumber \\
\geq & ~ \frac{k'-k}{k'-k-1 + (r-k)\bar{\kappa}_{k}^2} - \delta \nonumber \\
> & ~ \frac{\tau - k\log^3 m}{\tau-k\log^3 m + (r-k)\bar{\kappa}_{k}^2\log^3 m} - \delta \nonumber \\
= & ~ \frac{\tau - k\log^3 m}{\tau-(\bar{\kappa}_{k}^2+1)k\log^3 m + r\bar{\kappa}_{k}^2\log^3 m} - \delta
\end{align}
where the third step follows from $k' \leq \tau /\log^3 m$. We can verify that by treating $\bar{\kappa}_{k}$ as a constant and considering the above expression (Eq.\eqref{eq:S_lower_bound}) as a function of $k$, then this expression is decreasing in $k$. By using $k\leq \tau/(C\log^3 m)$, we have
\begin{align*}
\lambda_{\min}(\E[\P_{\tilde{S}}]) \geq & ~ \frac{(C-1)\tau}{C\tau-(\bar{\kappa}_{k}^2+1)\tau + Cr\bar{\kappa}_{k}^2\log^3 m} - \delta \\
= & ~ \frac{(C-1)\tau / r}{(C-1-\bar{\kappa}_k^2)\tau / r + C\bar{\kappa}_k^2\log^3 m} - \delta.
\end{align*}
We consider two cases here. If $\bar{\kappa}_k^2 \geq C-1$, then by using $C\geq 2$ we have
\begin{align*}
\lambda_{\min}(\E[\P_{\tilde{S}}]) \geq \frac{(C-1)\tau / r}{C\bar{\kappa}_k^2\log^3 m} - \delta \geq \frac{\tau / r}{2\bar{\kappa}_k^2\log^3 m} - \delta.
\end{align*}
If $\bar{\kappa}_k^2 < C-1$, then we use the assumption that $\tau / \log^3 m \leq r$ (otherwise we can simply choose $k' = r$ which is the trivial case) and have
\begin{align*}
\lambda_{\min}(\E[\P_{\tilde{S}}]) \geq & ~ \frac{(C-1)\tau / r}{(C-1-\bar{\kappa}_k^2)\log^3 m + C\bar{\kappa}_k^2\log^3 m} - \delta \\
= & ~ \frac{(C-1)\tau / r}{(C-1)\log^3 m + (C-1)\bar{\kappa}_k^2\log^3 m} - \delta \\
= & ~ \frac{\tau / r}{(\bar{\kappa}_k^2+1)\log^3 m} - \delta
\geq \frac{\tau / r}{2\bar{\kappa}_k^2\log^3 m} - \delta
\end{align*}
where we use the fact that $\bar{\kappa}_k\geq 1$. Finally, according to Remark~\ref{rem:anari_constant} we can see that the failure probability is actually $\delta = m^{-9}$. After some calculation we know that as long as $\bar{\kappa}_k \leq m^4$, we have the following holds:
\begin{align*}
\lambda_{\min}^+(\E[\P_{\tilde{S}}]) \geq \frac{\tau / r}{2\bar{\kappa}_k^2\log^3 m} - \frac{1}{m^9} \geq \frac{\tau / r}{2.5\bar{\kappa}_k^2\log^3 m}.
\end{align*}
\end{proof}

\subsection{Comparison with Direct $k$-DPP Sampling}\label{sec:compare_sample_dpp}
In this section we compare our result with the results from \cite{alv22}, and show an improvement of at least $O(k^{\omega-2})$. The main result of approximately sampling $k$-DPP in \cite{alv22} is Lemma~\ref{lem:anari_main_1}, however our setting is different in that we are not given the PSD matrix $\A\A^\top$, instead we only know $\A\in\R^{m\times n}$. We can naively first compute $\A\A^\top$ which takes $O(m^{\omega-1} n)$ then use Lemma~\ref{lem:anari_main_1}, resulting in prohibitive $\tilde{O}(m^{\omega-1}n + mk^{\omega-1})$ cost of sampling one $k$-DPP. As an alternative, we can improve this result by not computing $\A\A^\top$ explicitly. 

By slightly revising the proof in \cite{alv22}, we show that given $\A\in\R^{m \times n}$, we can obtain $s$ independent approximate samples from $k$-$\dpp(\A\A^\top)$ in time $\tilde{O}(mnk^{\omega-2} + s n k^{\omega-1})$. To be specific there are two steps, here we give a brief proof sketch of each step without rigorous proof.
\begin{itemize}
    \item Preprocessing: according to \cite{alv22}, we first need to obtain an exact $k$-DPP sample defined on an $8k \times 8k$ PSD matrix. Constructing this matrix takes $O(nk^{\omega-1})$, and sampling $k$-DPP defined by it takes $\tilde{O}(k^\omega)$, thus it takes $\tilde{O}(nk^{\omega-1} + k^\omega) = \tilde{O}(nk^{\omega-1})$. Then by using Theorem 34 in \cite{alv22}, Algorithm~\ref{alg:dpp_sample} produces marginal overestimates $q_i \geq \Pr\{i\in S_{\dpp}\}$ with sum $\sum_{i\in[m]} q_i \leq 4k$ in time $\tilde{O}(\frac{m}{k} \cdot nk^{\omega-1}) = \tilde{O}(mnk^{\omega-2})$. After the preprocessing step, the distribution is in approximately isotropic position.
    \item Sampling: after the preprocessing step, we obtain an approximately isotropic distribution, and the next step is to get independent approximate samples. This step is relatively easy, we can directly use Lemma~\ref{lem:anari_main_2} and know that we can obtain approximate samples from $k$-$\dpp(\A\A^\top)$ in time $\tilde{O}(1) \cdot \tilde{O} (n k^{\omega-1}) = \tilde{O}(nk^{\omega-1})$. Here we use the fact that $\mathcal{T}_\mu(4k, k) = \tilde{O}(n(4k)^{\omega-1} + (4k)^{\omega}) = \tilde{O} (n k^{\omega-1})$ with $\mu=k$-$\dpp(\A\A^\top)$ is the time of sampling exact DPP.
\end{itemize}
By the above two steps, we conclude that we can obtain independent approximate $k$-DPP samples in time $\tilde{O}(mnk^{\omega-2} + sn k^{\omega-1})$. In comparison, our Lemma~\ref{lem:main_coupling} only needs an $\tilde{O}(mn)$ preprocessing time and $O(k\log^3 m)$ calls to an uniform sampling oracle to guarantee the same failure probability, thus the total cost is only $\tilde{O}(mn + sk\log^3 m) = \tilde{O}(mn+sk)$ which is at least an $O(k^{\omega-2})$ improvement.

\section{Approximate Sketch-and-Project}\label{sec:approx_sketch_proj}
In this section we detail the idea and discussion in Section~\ref{sec:analysis_3}, and show how we approximate the update step of sketch-and-project by solving a preconditioned normal equation as a sub-problem. For notation convenience we use $\A\in\R^{m\times n}$ and $\b\in\R^m$ to denote the ones after randomized Hadamard transform, which is the original $\Q\A$ and $\Q\b$ respectively for $\Q = \frac{1}{\sqrt{m}}\H\D$. Specifically, $\A$ in this section corresponds to $\bar{\A}$ in Section~\ref{sec:sampling}.
\subsection{Approximate Update by Solving Sub-Problem}
In the update rule of sketch-and-project (see Eq.\eqref{eq:sketch_project_rec}), we need to compute $\w_t^* := (\I_{\tilde S}\A)^\dagger(\I_{\tilde S}\A\x_t - \I_{\tilde S}\b)$ in each iteration. We can directly compute this quantity: we first compute $\I_{\tilde S}\A$ in $O(n \tau)$, then compute $\I_{\tilde S}\A \cdot (\I_{\tilde S}\A)^\top$ in $O(n\tau^{\omega-1})$, then compute $(\I_{\tilde S}\A\A^\top\I_{\tilde S}^\top)^{\dagger}$ in $O(\tau^\omega)$, and finally apply $\I_{\tilde S}\A$ then $(\I_{\tilde S}\A\A^\top\I_{\tilde S}^\top)^{\dagger}$ to a vector, which takes $O(n\tau + \tau^2)$ time. These steps take $O(n\tau^{\omega-1} + \tau^\omega)$ in total, which is prohibitive for our per-iteration cost. Instead of directly computing $\w_t^*$, we show that it is possible to \emph{approximately} compute this quantity in $\tilde{O}(n\tau+\tau^\omega)$ time, by relying on the fact that it can be transformed into a highly under-determined least squares problem, and using a variant of the sketch-to-precondition approach \cite{rokhlin2008fast,msm14}.

Denote $\tilde{\A} = \I_{\tilde S}\A \in \R^{\tau \times n}$ and $\tilde{\b}_t = \I_{\tilde S}\A\x_t - \I_{\tilde S}\b$, notice that since $\tau \leq n$, we can transform the problem of computing $(\I_{\tilde S}\A)^\dagger(\I_{\tilde S}\A\x_t - \I_{\tilde S}\b)$ to solving an under-determined least squares problem:
\begin{align}\label{eq:under_deter_w}
\w_t^* = \arg\min \|\w\|_2 ~~~\text{subject to}~~~ \tilde{\A} \w = \tilde{\b}_t
\end{align}
which has solution $\w_t^* = \tilde{\A}^\dagger\tilde{\b}_t$. It is equivalent to solving the second kind normal equation:
\begin{align*}
\tilde{\A}\tilde{\A}^\top \u = \tilde{\b}_t, ~~~ \w = \tilde{\A}^\top \u.
\end{align*}
\begin{algorithm}[!ht]
\caption{Construct preconditioner for solving normal equation.}
\label{alg:preconditioner}
\begin{algorithmic}[1]
\State \textbf{Input: }matrix $\tilde{\A}\in\R^{\tau\times n}$;
\State Compute $\mathbf{\Phi}\tilde{\A}^\top$ by calling Lemma~\ref{lem:precondition_sparse}; \Comment{Sketch: takes $\tilde{O}(n\tau)$ time.}
\State Compute compact size SVD as $\mathbf{\Phi}\tilde{\A}^\top = \tilde{\U} \tilde{\mathbf{\Sigma}} \tilde{\V}^\top$; \Comment{Takes $O(\tau^\omega)$ time.}
\State Construct preconditioner $\mathbf{M} \leftarrow \tilde{\V} \tilde{\mathbf{\Sigma}}^{-1}$;
\\
\Return $\M$;
\end{algorithmic}
\end{algorithm}

Notice that $\tilde{\A}\tilde{\A}^\top \u = \tilde{\b}_t$ is a positive semidefinite linear system over $\u$, thus we can use CG to solve it. However since we cannot bound $\kappa(\tilde{\A})$, we need to precondition this linear system first. Notice that the subspace embedding result of Lemma~\ref{lem:precondition_sparse} directly implies a way to construct a preconditioner. By calling Algorithm~\ref{alg:preconditioner} and treating $\epsilon$ in Lemma~\ref{lem:precondition_sparse} as a fixed constant, we can obtain an invertible matrix $\M \in \R^{\tau \times \tau}$ such that $\tilde{\kappa}:=\kappa(\tilde{\A}^\top \M) = O(1)$. Thus we can precondition Problem\eqref{eq:under_deter_w} as
\begin{align}\label{eq:under_deter_w_pre}
\tilde{\w}^* = \arg\min \|\w\|_2 ~~~\text{subject to}~~~ \M^\top\tilde{\A} \w = \M^\top\tilde{\b}_t
\end{align}
and similarly, we can reformulate Problem\eqref{eq:under_deter_w_pre} to the following second kind normal equation:
\begin{align*}
(\M^\top \tilde{\A}\tilde{\A}^\top \M) \u = \M^\top\tilde{\b}_t, ~~~ \w = \tilde{\A}^\top \M\u.
\end{align*}
We use CG to solve this positive semidefinite linear system over $\u$, and have the following lemma.

\begin{lemma}[Convergence]\label{lem:approx_converge}
Given $\tilde{\A} \in \R^{\tau \times n}, \tilde{\b} \in \R^\tau$, let $\w^* = \tilde{\A}^\dagger\tilde{\b}$ be the least squares solution of $\tilde{\A}\w = \tilde{\b}$. Given $\delta <1/2$, if the subspace embedding matrix $\mathbf{\Phi} \in \R^{\phi\times n}$ from Lemma \ref{lem:precondition_sparse} satisfies $\phi = O(\tau+\log(1/\delta))$, then with probability $1 - \delta$, Algorithm~\ref{alg:preconditioner} outputs $\M$ in time $O(n\tau \log(\tau/\delta) + \tau^{\omega})$ such that $\tilde{\kappa} = \kappa(\tilde{\A}^\top\M) = O(1)$, and solving the following system
\begin{align}\label{eq:equi_cg}
(\M^\top \tilde{\A}\tilde{\A}^\top \M) \u = \M^\top\tilde{\b}, ~~~ \w = \tilde{\A}^\top \M\u
\end{align}
by using CG and starting from $\w_0 = \mathbf{0}$ will yield the following convergence result:
\begin{align*}
\|\w_s - \w^*\| \leq 2\left(\frac{\tilde{\kappa}-1}{\tilde{\kappa}+1}\right)^{s} \cdot \|\w^*\|.
\end{align*}
\end{lemma}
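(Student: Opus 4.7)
The plan is to decompose the argument into three pieces: (a) show that $\M$ is a high-quality preconditioner, i.e.\ $\tilde{\kappa}=\kappa(\tilde{\A}^\top\M)\leq(1+\epsilon)^2=O(1)$ for a fixed constant $\epsilon$; (b) verify that the unique solution $\u^*$ of the preconditioned normal equation \eqref{eq:equi_cg} satisfies $\tilde{\A}^\top\M\u^*=\w^*=\tilde{\A}^\dagger\tilde{\b}$, so that the change of variables $\w=\tilde{\A}^\top\M\u$ transfers CG-convergence in $\u$ directly into the desired bound on $\w$; and (c) invoke Lemma~\ref{lem:cg} and account for the preconditioner-construction runtime.

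For step (a) I would fix $\epsilon=1/2$ and apply Lemma~\ref{lem:precondition_sparse} with $\A\leftarrow\tilde{\A}^\top$ and the given $\delta$, obtaining, with probability $1-\delta$, a sketch $\mathbf{\Phi}\in\R^{\phi\times n}$ with $\phi=O(\tau+\log(1/\delta))$ that is a subspace embedding for the $\tau$-dimensional column space of $\tilde{\A}^\top$. Writing the compact SVD $\mathbf{\Phi}\tilde{\A}^\top=\tilde{\U}\tilde{\mathbf{\Sigma}}\tilde{\V}^\top$ of rank $r=\rank(\tilde{\A})$ (the embedding preserves rank), the substitution $\x=\tilde{\V}\tilde{\mathbf{\Sigma}}^{-1}\y$ into the embedding inequality together with $\mathbf{\Phi}\tilde{\A}^\top\x=\tilde{\U}\y$ and $\|\tilde{\U}\y\|=\|\y\|$ gives $(1+\epsilon)^{-1}\|\y\|\leq\|\tilde{\A}^\top\M\y\|\leq(1+\epsilon)\|\y\|$ for every $\y\in\R^r$, and hence $\tilde{\kappa}\leq(1+\epsilon)^2=O(1)$.

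The main obstacle is step (b), because $\tilde{\b}$ need not lie in $\range(\tilde{\A})$, so \emph{a priori} $\M^\top\tilde{\b}$ is different from $\M^\top\tilde{\A}\tilde{\A}^\dagger\tilde{\b}$. The key observation I would use is that the same subspace embedding gives $\ker(\mathbf{\Phi}\tilde{\A}^\top)=\ker(\tilde{\A}^\top)$, which forces the columns of $\tilde{\V}$ (an orthonormal basis for the row space of $\mathbf{\Phi}\tilde{\A}^\top$) to span $\range(\tilde{\A})\subseteq\R^\tau$. Consequently $\tilde{\V}^\top$, and therefore $\M^\top$, annihilates every vector orthogonal to $\range(\tilde{\A})$, and in particular $\M^\top(\tilde{\b}-\tilde{\A}\tilde{\A}^\dagger\tilde{\b})=\mathbf{0}$. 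Letting $\u^*\in\R^r$ be the unique vector with $\tilde{\A}^\top\M\u^*=\w^*$ (which exists and is unique because $\tilde{\A}^\top\M$ has full column rank $r$ and $\w^*\in\range(\tilde{\A}^\top)=\range(\tilde{\A}^\top\M)$), this identity yields $\M^\top\tilde{\A}\tilde{\A}^\top\M\u^*=\M^\top\tilde{\A}\tilde{\A}^\dagger\tilde{\b}=\M^\top\tilde{\b}$, confirming that $\u^*$ is the (unique) solution of \eqref{eq:equi_cg}.

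Given (a) and (b), step (c) is short: I would apply Lemma~\ref{lem:cg} with the role of $\A$ played by $\M^\top\tilde{\A}\in\R^{r\times n}$, whose condition number equals $\kappa(\tilde{\A}^\top\M)=\tilde{\kappa}$. Starting from $\u_0=\mathbf{0}$ (so that $\w_0=\tilde{\A}^\top\M\u_0=\mathbf{0}$ as required) gives $\|\u_s-\u^*\|_{\M^\top\tilde{\A}\tilde{\A}^\top\M}\leq 2\bigl(\tfrac{\tilde{\kappa}-1}{\tilde{\kappa}+1}\bigr)^s\|\u^*\|_{\M^\top\tilde{\A}\tilde{\A}^\top\M}$, and since $\|\u\|_{\M^\top\tilde{\A}\tilde{\A}^\top\M}^2=\|\tilde{\A}^\top\M\u\|^2$, the change of variables $\w=\tilde{\A}^\top\M\u$ turns this into the advertised bound on $\|\w_s-\w^*\|$. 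The runtime then tallies $O(n\tau\log(\tau/\delta)+\tau^2\log^4(\tau/\delta))$ for producing $\mathbf{\Phi}\tilde{\A}^\top$ (Lemma~\ref{lem:precondition_sparse} with constant $\epsilon$), $O(\phi\cdot\tau^{\omega-1})=O(\tau^\omega)$ for its compact SVD, and $O(\tau^2)$ for assembling $\M=\tilde{\V}\tilde{\mathbf{\Sigma}}^{-1}$, which since $\omega>2$ simplifies to the claimed $O(n\tau\log(\tau/\delta)+\tau^\omega)$.
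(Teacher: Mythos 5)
Your proof is correct and follows essentially the same strategy as the paper's: recast the update as a preconditioned second-kind normal equation, show the CG iterates (after the change of variables $\w=\tilde{\A}^\top\M\u$) converge to $\w^*=\tilde{\A}^\dagger\tilde{\b}$, and invoke Lemma~\ref{lem:cg} with $\M^\top\tilde{\A}$ in the role of $\A$ together with Lemma~\ref{lem:precondition_sparse} for the bound $\tilde\kappa=O(1)$ and the runtime tally.

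The one genuine difference is in step (b). The paper establishes $\tilde{\w}^*=\w^*$ via the identity $(\M^\top\tilde{\A}\tilde{\A}^\top\M)^\dagger=\M^{-1}(\tilde{\A}\tilde{\A}^\top)^\dagger\M^{-\top}$, which silently presumes $\M\in\R^{\tau\times\tau}$ is square and invertible, i.e.\ that $\tilde{\A}$ has full row rank $\tau$ (so that the compact SVD of $\mathbf{\Phi}\tilde{\A}^\top$ has $\tau$ singular values); that identity would be false in general for a singular $\tilde{\A}\tilde{\A}^\top$. You instead observe that the subspace embedding preserves kernels, so $\range(\tilde{\V})=\ker(\mathbf{\Phi}\tilde{\A}^\top)^\perp=\ker(\tilde{\A}^\top)^\perp=\range(\tilde{\A})$, whence $\M^\top$ annihilates the component of $\tilde{\b}$ orthogonal to $\range(\tilde{\A})$ and the equation $\M^\top\tilde{\A}\tilde{\A}^\top\M\u^*=\M^\top\tilde{\b}$ follows directly for the unique $\u^*$ with $\tilde{\A}^\top\M\u^*=\w^*$ (uniqueness because $\M$ has full column rank $r$ with $\range(\M)=\range(\tilde{\A})$, on which $\tilde{\A}^\top$ is injective). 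This handles the rank-deficient case cleanly and is arguably a small improvement in rigor over the paper's argument. Your step (a) re-derives the condition-number bound rather than citing Lemma~\ref{lem:precondition_sparse} as a black box; this is redundant but correct, and it has the side benefit of exhibiting exactly the kernel-preservation fact you need in step (b). Step (c) and the runtime accounting match the paper.
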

\begin{proof}
Let $\u^*$ be the min-length solution of the PSD linear system $(\M^\top \tilde{\A}\tilde{\A}^\top \M) \u = \M^\top\tilde{\b}$. Denote $\tilde{\w}^*$ as the solution of Problem\eqref{eq:equi_cg}, then we have
\begin{align*}
\tilde{\w}^* = & ~  \tilde{\A}^\top \M\u^* \\
= & ~  \tilde{\A}^\top \M (\M^\top\tilde{\A}\tilde{\A}^\top\M)^\dagger\M^\top\tilde{\b} \\
= & ~ \tilde{\A}^\top \M \M^{-1} (\tilde{\A} \tilde{\A}^\top)^{\dagger}(\M^\top)^{-1}\M^\top\tilde{\b} \\
= & ~ \tilde{\A}^\top (\tilde{\A}\tilde{\A}^\top)^{\dagger}\tilde{\b} \\
= & ~ \tilde{\A}^\dagger \tilde{\b} = \w^*
\end{align*}
which shows that the optimal solution of Problem\eqref{eq:equi_cg} is the same as that of $\tilde{\A}\w = \tilde{\b}$. According to Lemma~\ref{lem:precondition_sparse} we know that with the choice of $\phi= O(\tau + \log(1/\delta))$, $\tilde{\kappa} = O(1)$ holds with probability $1 - \delta$. According to Lemma~\ref{lem:cg}, the convergence guarantee is
\begin{align*}
\|\tilde{\A}^\top\M(\u_s - \u^*)\| \leq 2 \left(\frac{\tilde{\kappa} - 1}{\tilde{\kappa}+1} \right)^s \cdot \|\tilde{\A}^\top\M \u^*\|
\end{align*}
and by doing the change of variable back from $\u$ to $\w$ we have
\begin{align*}
\|\w_s - \w^*\| \leq 2\left(\frac{\tilde{\kappa} - 1}{\tilde{\kappa} + 1}\right)^s \cdot\|\w^*\|.
\end{align*}
Finally we compute the cost of constructing the preconditioner. Following Lemma~\ref{lem:precondition_sparse} and Algorithm~\ref{alg:preconditioner}, we first construct $\mathbf{\Phi}\tilde{\A}^\top$ which takes $O(n\tau \log(\tau/\delta) + \tau^2\log^4(\tau/\delta))$, then compute the compact size SVD as $\mathbf{\Phi}\tilde{\A}^\top = \tilde{\U}\tilde{\mathbf{\Sigma}} \tilde{\V}^\top$ which takes $O(\phi\tau^{\omega-1})$ where $\phi = O(\tau + \log(1 / \delta))$, and finally we set $\M = \V\tilde{\mathbf{\Sigma}}^{-1}$ which takes $O(\tau^2)$. Thus we conclude that the total time complexity is
\begin{align*}
O(n\tau \log(\tau/\delta) + \tau^2\log^4(\tau/\delta) + \tau^{\omega-1}\phi) = O(n\tau \log(\tau/\delta) + \tau^{\omega}) = \tilde{O}(n\tau + \tau^{\omega}).
\end{align*}
\end{proof}

\subsection{Proof of Lemma~\ref{lem:main_convergence}}
\label{sec:proof_convergence}
We are now ready to provide the convergence analysis for sketch-and-project with an approximate update, as it is implemented in Algorithm \ref{alg:main}.
\label{sec:proof_approx_project}
\begin{proof}[Proof of Lemma~\ref{lem:main_convergence}]
In outer iteration $t$, we analyze the update as follows.
\begin{align*}
\E\|\x_{t+1} - \x^*\|^2 = & ~ \E\|\x_t - \w_{s_{\max}} - \x^*\|^2 \\
= & ~ \E\|(\x_t - \x^* - \w_t^*) + (\w_t^* - \w_{s_{\max}})\|^2 \\
= & ~ \E\|\x_t - \x^* - \w_t^*\|^2 + \|\w_t^* - \w_{s_{\max}}\|^2 + 2\E[(\x_t - \x^* - \w_t^*)^\top(\w_t^* - \w_{s_{\max}})] \\
\leq & ~ \E\|\x_t - \x^* - \w_t^*\|^2 + \|\w_t^* - \w_{s_{\max}}\|^2 + 2\|\w_t^* - \w_{s_{\max}}\| \cdot (\E\|\x_t - \x^* - \w_t^*\|^2)^{1/2}.
\end{align*}
For the first part, as we already discussed in Section \ref{sec:analysis_1}, we have
\begin{align}
\label{eq:null_space}
\E\|\x_t - \x^* - \w_t^*\|^2 = & ~ (\x_t - \x^*)^\top\E[(\I-\P_{{\tilde S}})](\x_t - \x^*) \nonumber \\
\leq & ~ (1 - \lambda_{\min}^+(\E[\P_{{\tilde S}}])) \cdot \|\x_t - \x^*\|^2.
\end{align}
This is due to the update rule of sketch-and-project:
\begin{align*}
\x_{t+1} = \x_t - \w_t^* = \x_t - \A^\top\I_{\tilde S}^\top(\I_{\tilde S}\A\A^\top\I_{\tilde S}^\top)^\dagger (\I_{\tilde S}\A\x_t - \I_{\tilde S}\b)
\end{align*}
and the fact that if $\x_t-\x^*$ is not in the null-space of $\E[\P_{\tilde S}]$, then neither will be $\x_{t+1}-\x^*$. This fact follows because
\begin{align*}
\E[\P_{\tilde S}] = & ~ \A^\top \E[\I_{\tilde S}^\top (\I_{\tilde S}\A\A^\top\I_{\tilde S}^\top)^\dagger\I_{\tilde S}] \A 
\end{align*}
which implies that the nullspace of $\E[\P_{\tilde S}]$ contains the nullspace of $\A^\top\A$. At the same time, in the proof of Lemma~\ref{lem:main_sampling} we showed that $\E[\P_{\tilde S}]\succeq \E[\P_{S_{\dpp}}]-\delta\I\succeq \frac12\E[\P_{S_{\dpp}}]$, where $S_{\dpp}\sim k'$-DPP$(\A\A^\top)$, so the nullspace of $\E[\P_{\tilde S}]$ is no larger than that of $\E[\P_{S_{\dpp}}]$. But we know that the nullspace of $\E[\P_{S_{\dpp}}]$ is exactly the same as that of $\A^\top\A$ using Eq.\eqref{eq:eigen_decompose}.
Finally, denoting the singular value decomposition of $\A$ as $\A = \U\mathbf{\Sigma}\V^\top$ where $\U \in \R^{m \times m}, \mathbf{\Sigma} \in \R^{m\times n}$ and $\V \in \R^{n\times n}$, the update rule can be seen as $\x_{t+1} = \x_t - \V\cdot \v_t$ for some vector $\v_t$, thus if $\x_t - \x^*$ is not in the null-space of $\E[\P_{\tilde S}]$, then $\x_{t+1}-\x^*$ will never be in the null-space of $\E[\P_{\tilde S}]$. In particular, if we set $\x_0 = \mathbf{0}$, then notice that $\x^* = \A^\dagger \b = \V\mathbf{\Sigma}^\dagger \U^\top\b$, we know $\x_0 - \x^* = \x^*$ is not in the nullspace of $\E[\P_{\tilde S}]$. Thus we know Eq.\eqref{eq:null_space} holds.

For the second and third part, denote $\Delta_t := \|\w_t^* - \w_{s_{\max}}\|$ and $\rho := \lambda_{\min}^+(\E[\P_{\tilde S}])$. Then, we have
\begin{align}\label{eq:rho_delta}
\E\|\x_{t+1} - \x^*\|^2 \leq & ~ (1 - \lambda_{\min}^+(\E[\P_{\tilde S}])) \cdot \|\x_t - \x^*\|^2 + 2\Delta_t\sqrt{1 - \lambda_{\min}^+(\E[\P_{\tilde S}])}\cdot \|\x_t - \x^*\| + \Delta_t^2 \nonumber \\
= & ~ (\sqrt{1 - \lambda_{\min}^+(\E[\P_{\tilde S}])} \cdot \|\x_t - \x^*\| + \Delta_t)^2 \nonumber \\
= & ~ (\sqrt{1 - \rho} \cdot \|\x_t - \x^*\| + \Delta_t)^2.
\end{align}
We define $\mathcal{E}$ as the event that $\kappa(\tilde{\A}^\top \M) = O(1)$, then according to Lemma~\ref{lem:precondition_sparse}, as long as we set $\phi = O(\tau+\log(1 / \delta_1))$, then event $\mathcal{E}$ holds with probability $1 - \delta_1$. Denote $\tilde{\x}_{t+1} = \x_t - \w_t^*$ be the update of \emph{exact} sketch-and-project at iteration $t$, then we have
\begin{align*}
\|\w_t^*\| = \|\tilde{\x}_{t+1} - \x_t\| \leq \|\tilde{\x}_{t+1} - \x^*\| + \|\x_t - \x^*\| \leq 2 \|\x_t - \x^*\|.
\end{align*}
If event $\mathcal{E}$ holds, then we choose $\epsilon = \frac{\rho}{16}$ and the number of iterations as $s_{\max} = O(\tilde{\kappa}\log(1/\epsilon)) = O(\log(1/\rho))$, according to Lemma~\ref{lem:approx_converge} we have
\begin{align*}
\Delta_t = \|\w_{s_{\max}} - \w_t^*\| \leq \frac{\rho}{8} \cdot \|\w_t^*\| \leq \frac{\rho\cdot \|\x_t - \x^*\|}{4}.
\end{align*}
If event $\mathcal{E}$ does not hold, we denote $\tilde{\w}^* = (\M^\top \tilde{\A})^\dagger \M^\top \tilde{\b}_t$ as the solution of Problem \eqref{eq:equi_cg}, where in comparison we have $\w_t^* = \tilde{\A}^\top \tilde{\b}_t$ is the solution of the original Problem \eqref{eq:under_deter_w}. Denote $W^* = \{\w \mid \tilde{\A}\w = \tilde{\b}_t\}$ and $\tilde{W}^* = \{\w \mid \M^\top\tilde{\A}\w = \M^\top\tilde{\b}_t\}$, notice that $W^* \subseteq \tilde{W}^*$, thus we have $\|\tilde{\w}^*\| = \min_{\w \in \tilde{W}^*} \|\w\| \leq \min_{\w \in W^*} \|\w\| = \|\w_t^*\|$. Then we have
\begin{align*}
\Delta_t = & ~ \|\w_t^* - \w_{s_{\max}}\| \\
\leq & ~ \|\w_{s_{\max}} - \tilde{\w}^*\| + \|\tilde{\w}^* - \w_t^* \| \\
\leq & ~ 2\Big(\frac{\tilde{\kappa}-1}{\tilde{\kappa}+1}\Big)^{s_{\max}} \cdot \|\tilde{\w}^*\| + \|\tilde{\w}^* - \w_t^* \| \\
\leq & ~ 2\|\tilde{\w}^*\| + \|\tilde{\w}^*\| + \|\w_t^*\| \\
\leq & ~ 4\|\w_t^*\| \\
\leq & ~ 8 \|\x_t - \x^*\|
\end{align*}
where we use Lemma~\ref{lem:approx_converge} in the third step. By combining above two results with Eq.\eqref{eq:rho_delta} we have 
\begin{align*}
\E\|\x_{t+1} - \x^*\|^2 = & ~ \E[\|\x_{t+1} - \x^*\|^2 \mid \mathcal{E}]\cdot \Pr\{\mathcal{E}\} + \E[\|\x_{t+1} - \x^*\|^2 \mid \bar{\mathcal{E}}]\cdot \Pr\{\bar{\mathcal{E}}\} \\
\leq & ~ (1-\delta_1) \cdot (\sqrt{1 - \rho} \cdot \|\x_t - \x^*\| + \frac{\rho}{4}\cdot\|\x_t - \x^*\|)^2 + \delta_1\cdot (\sqrt{1 - \rho} \cdot \|\x_t - \x^*\| + 8\|\x_t - \x^*\|)^2 \\
\leq & ~ (\sqrt{1 - \rho} \cdot \|\x_t - \x^*\| + \frac{\rho}{4}\cdot\|\x_t - \x^*\|)^2 + \delta_1\cdot (9 \|\x_t - \x^*\|)^2 \\
\leq & ~ (1 - \frac{\rho}{2} + 81\delta_1)\cdot \|\x_t - \x^*\|^2
\end{align*}
where we use Eq.\eqref{eq:rho_delta} and the fact that $\sqrt{1 - \rho} + \rho / 4 \leq \sqrt{1 - \rho / 2}$. We choose $\delta_1 = \rho / 324$ and have
\begin{align*}
\E\|\x_{t+1} - \x^*\|^2 \leq \left(1 - \frac{\rho}{4}\right)\cdot \|\x_t - \x^*\|^2.
\end{align*}
\end{proof}

\section{Positive Semidefinite Linear Systems}
\label{sec:psd}
In this section we consider solving a positive semidefinite linear system $\A\x = \b$ where $\A\in\R^{n \times n}$ is a PSD matrix with $\rank(\A) = r$ and $\b\in\R^n$. Denote $\{\lambda_i\}_{i = 1}^n$ as the eigenvalues of $\A$ in decreasing order and define $\bar{\kappa}_k := \frac{1}{r-k}\sum_{j > k}^r \lambda_j / \lambda_r$. Notice that the definition of $\bar{\kappa}_k$ is slightly different from the rectangular matrix case in the sense that we are averaging the tail ratios of eigenvalues. Similar to the sketch-and-project analysis of the rectangular matrix setting, we need to analyze the quantity $\lambda_{\min}^+(\E[\P_S]) = \lambda_{\min}^+(\A^{1/2}\E[(\I_S \A \I_S)^\dagger] \A^{1/2})$. The difference here is that we naturally have access to this PSD matrix $\A$, thus, we can directly use Lemma~\ref{lem:anari_main_1} for approximately sampling $\tau$-DPPs. Furthermore, because of the PSD structure, instead of a Kaczmarz-type row-sampling algorithm, we use a coordinate descent-type algorithm. Notice that in this section we use $\tau$ to denote the $\dpp$ sample size, which is not the same as the uniform sample size in previous sections.

\begin{algorithm}[!ht]
\caption{Fast coordinate descent-type solver for PSD linear systems.}
\label{alg:main_psd}
\begin{algorithmic}[1]
\State \textbf{Input: }matrix $\A\in\R^{n\times n}$, vector $\b \in \R^n$, $\dpp$ sample size $\tau$, iterate $\x_0$, outer iteration $t_{\max}$;
\State Precompute the marginal overestimates as in Theorem 4 of \cite{alv22}; \Comment{Takes $\tilde{O}(n\tau^{\omega-1})$ time.}
\For{$t = 0, 1, \ldots, t_{\max}-1$}
\State Generate approximate $\tau$-$\dpp(\A)$ sample $S$ using Lemma~\ref{lem:anari_main_2};
\Comment{Takes $\tilde{O}(\tau^\omega)$ time.}
\State Compute $\A_{S, S}\leftarrow \I_S \A \I_S^\top$; \Comment{Takes $O(\tau^2)$ time.}
\State Compute $\tilde{\b} \leftarrow \I_S\A\x_t - \I_S\b$; \Comment{Takes $O(\nnz(\A_{S,:}))$ time.}
\State Compute $\w_t \leftarrow \A_{S, S}^\dagger\tilde{\b}$; \Comment{Takes $O(\tau^{\omega})$ time.}
\State Update $\x_{t+1} = \x_t - \I_S^\top\w_t$; \Comment{Takes $O(\tau)$ time.}
\EndFor \\
\Return $\tilde{\x} = \x_{t_{\max}}$;
\end{algorithmic}
\end{algorithm}

\begin{theorem}[Fast solver for PSD linear systems]\label{thm:main_psd}
Given PSD matrix $\A\in\R^{n \times n}$ with $\rank(\A) = r$ and $\b \in \R^n$, let $\{\lambda_i\}_{i=1}^n$ be the eigenvalues of $\A$ in decreasing order. For $\log n \leq k < r$, define $\bar{\kappa}_k := \frac{1}{r-k}\sum_{j > k}^r \lambda_j / \lambda_r$ and assume for simplicity that $\bar{\kappa}_k \leq n^8$.
Let $\x^* = \A^\dagger\b$ be the min-length solution of the linear system $\A\x = \b$. Assume $\x_0 - \x^*$ is not in the nullspace of $\A$. Then conditioned on an event that holds with probability $0.99$ (and only depends on the preprocessing step in line 2 of Algorithm~\ref{alg:main_psd}), for some $C=O(1)$, running Algorithm~\ref{alg:main_psd} with choice $\tau = \min\{Ck, r\}$ and $\x_0$ will yield the following convergence result:
\begin{align*}
\E\|\x_t - \x^*\|_{\A}^2 \leq \left(1 - \frac{\tau/r}{3\bar{\kappa}_k}\right)^t \cdot \|\x_0 - \x^*\|_{\A}^2.
\end{align*}
By further choosing $\x_0 = \mathbf{0}$ and $t_{\max} =  O(r \bar{\kappa}_k/ \tau \cdot\log(1/\epsilon))$ for given $\epsilon > 0$, Algorithm~\ref{alg:main_psd} outputs $\tilde{\x}$ such that $\|\tilde{\x} - \x^*\|_{\A}^2 \leq \epsilon \|\x^*\|_{\A}^2$ holds with probability $0.98$ in time:
\begin{align*}
O\Big((\nnz^*(\A) + n\tau^{\omega-1} \log^4 n)\cdot \bar{\kappa}_k\log1/\epsilon\Big).
\end{align*}
where $\nnz^*(\A)=n\cdot\max_i\{\nnz(\A_{i,:})\}\leq n^2$.
\end{theorem}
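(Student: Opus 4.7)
My proof of Theorem~\ref{thm:main_psd} follows the same three-step template used for Theorem~\ref{thm:main} (sketch-and-project recursion, DPP-based expected projection bound, cost accounting), but is considerably simpler because in the PSD setting we have direct access to the kernel matrix $\A$: we sample the $\tau$-$\dpp$ directly via Lemma~\ref{lem:anari_main_1} with no Hadamard preprocessing, and we solve the $\tau\times\tau$ sub-problem exactly in $O(\tau^\omega)$ time with no inner CG.

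I would begin with the sketch-and-project recursion in the $\A$-norm. The Algorithm~\ref{alg:main_psd} update satisfies $\x_{t+1}-\x^*=(\I-\I_S^\top\A_{S,S}^\dagger\I_S\A)(\x_t-\x^*)$, so introducing $\v_t=\A^{1/2}(\x_t-\x^*)$ together with the orthogonal projection $\P_S=\A^{1/2}\I_S^\top(\I_S\A\I_S^\top)^\dagger\I_S\A^{1/2}$ one obtains $\v_{t+1}=(\I-\P_S)\v_t$ and hence
\begin{align*}
\E\|\x_{t+1}-\x^*\|_{\A}^2 \leq \left(1-\lambda_{\min}^+(\E[\P_S])\right)\|\x_t-\x^*\|_{\A}^2,
\end{align*}
provided $\v_t$ stays outside the nullspace of $\E[\P_S]$, which follows from the hypothesis on $\x_0-\x^*$ by the same invariant argument used in the proof of Lemma~\ref{lem:main_convergence}.

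The heart of the argument is a lower bound on $\lambda_{\min}^+(\E[\P_S])$ for $S\sim\tau$-$\dpp(\A)$. Diagonalizing $\A=\Q\diag(\lambda)\Q^\top$ and applying Lemmas~\ref{lem:sum_minor} and~\ref{lem:sum_adj} exactly as in Eq.~\eqref{eq:eigen_decompose} gives $\E[\I_S^\top(\I_S\A\I_S^\top)^\dagger\I_S]=\Q\diag(\sigma_{\tau-1}(\lambda_{-i}))\Q^\top/\sigma_\tau(\lambda)$, so sandwiching with $\A^{1/2}$ yields $\lambda_{\min}^+(\E[\P_S])=\min_{i:\lambda_i>0}\lambda_i\,\sigma_{\tau-1}(\lambda_{-i})/\sigma_\tau(\lambda)$. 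Invoking the ESP-ratio bound of Lemma~\ref{lem:poly_upper_bound} and specializing to $i=r$ together with $\sum_{j=k+1}^r\lambda_j=(r-k)\bar{\kappa}_k\lambda_r$ produces
\begin{align*}
\lambda_{\min}^+(\E[\P_S]) \geq \frac{\tau-k}{\tau-k-2+(r-k)\bar{\kappa}_k},
\end{align*}
which for $\tau=\min\{Ck,r\}$ with a large enough constant $C$ is at least $\tau/(2.5\,r\bar{\kappa}_k)$. The crucial improvement over the rectangular Theorem~\ref{thm:main} is that $\bar{\kappa}_k$ appears here to the \emph{first} power rather than squared: the two $\A^{1/2}$ factors cancel one power of the tail eigenvalues, which is precisely what enables the complexity claimed in Theorem~\ref{thm:main_psd}.

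To finish the proof I would pay the sampling error. Lemma~\ref{lem:anari_main_1} returns $\tilde S$ whose distribution has total variation distance $n^{-O(1)}$ from the exact $\tau$-$\dpp(\A)$, and since $\|\P_{\tilde S}\|_2\leq 1$ almost surely, the induced operator-norm perturbation to $\E[\P_{\tilde S}]$ is also $n^{-O(1)}$; under the hypothesis $\bar{\kappa}_k\leq n^8$ this is dwarfed by $\tau/(r\bar{\kappa}_k)$, so the contraction factor degrades only to $1-\tau/(3r\bar{\kappa}_k)$. Iterating $t_{\max}=O(r\bar{\kappa}_k/\tau\cdot\log(1/\epsilon))$ steps from $\x_0=\mathbf{0}$, Markov's inequality together with a union bound over the preprocessing success event and the per-iteration sampling events gives the stated $0.98$ success probability. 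For the runtime, each outer iteration costs $\tilde O(\tau^\omega)$ for DPP sampling, $O(\tau\cdot\nnz^*(\A)/n)$ to form $\I_S\A\x_t-\I_S\b$ row by row while exploiting sparsity, $O(\tau^\omega)$ to solve the $\tau\times\tau$ PSD sub-system by direct factorization, and $O(\tau)$ for the update; combining with the one-time $\tilde O(n\tau^{\omega-1})$ preprocessing from Theorem~4 of~\cite{alv22} and using $r\leq n$ collapses the total to $\tilde O\bigl((\nnz^*(\A)+n\tau^{\omega-1})\bar{\kappa}_k\log(1/\epsilon)\bigr)$. The main obstacle is the projection lower bound in the third paragraph: tracking the interaction of the $\A^{1/2}$ sandwich with Schur-concavity of ESP ratios to obtain the tight first-power dependence on $\bar{\kappa}_k$.
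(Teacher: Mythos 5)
Your proposal follows the paper's own proof very closely: the $\A$-norm block coordinate descent recursion, the nullspace invariant, the ESP-based lower bound on $\lambda_{\min}^+(\E[\P_S])$ (you re-derive the paper's Lemma~\ref{lem:improve_bound} inline, applied to $\A^{1/2}$ — which is precisely what the paper does by citation), the total-variation accounting for the approximate DPP sampler, and the per-iteration cost tally are all the same. Two small corrections: the intermediate bound should read $\frac{\tau-k}{\tau-k-1+(r-k)\bar\kappa_k}$ with a $-1$, not $-2$, in the denominator; and the remark that $\bar\kappa_k$ enters to the first power here rather than squared, attributed to a ``cancellation'' from the two $\A^{1/2}$ factors, is a misreading of the notation rather than a genuine improvement — the PSD-case $\bar\kappa_k$ is defined as an average of eigenvalue ratios and therefore equals $\bar\kappa_k(\A^{1/2})^2$, so the projection lower bound agrees exactly with Lemma~\ref{lem:improve_bound} applied to $\A^{1/2}$; the real savings over Theorem~\ref{thm:main} come from direct $\tau$-DPP sampling (no $\log^3 n$ inflation of the block size $\tau$) and from exploiting the explicit access to $\A$ for sparsity and exact $\tau\times\tau$ solves.
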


\begin{proof}
In the case of solving PSD linear system, instead of using approximated sketch-and-project, we use exact block coordinate descent with the following update rule
\begin{align*}
\x_{t+1} = \x_t - \I_S^\top(\A_{S, S})^\dagger (\A\x_t - \b)_{S} = \x_t - \I_S^\top(\I_S\A\I_S^\top)^\dagger \I_S(\A\x_t - \b)
\end{align*}
where $S$ is an approximated $\tau$-DPP sample obtained by Lemma~\ref{lem:anari_main_2}. Here for notation simplicity we use $S$, but the sample is different and independent at each iteration $t$. Notice that $\x^* = \A^\dagger\b$ is the optimal solution, similar to the proof of Lemma~\ref{lem:main_convergence} in Section~\ref{sec:proof_approx_project}, we have
\begin{align*}
\x_{t+1} - \x^* = (\I-\I_S^\top(\I_S\A\I_S^\top)^\dagger\I_S\A) (\x_t-\x^*).
\end{align*}
Since we assume that $\x_0 - \x^*$ is not in the nullspace of $\A$, by induction we know $\x_{t} - \x^*$ will always be not in the nullspace of $\A$ (same as the argument in proof of Lemma~\ref{lem:main_convergence} in Section~\ref{sec:proof_convergence}). Next, we proceed with a standard convergence analysis of block coordinate descent:
\begin{align*}
\A^{1/2}(\x_{t+1} - \x^*) = & ~ (\A^{1/2}-\A^{1/2}\I_S^\top(\I_S\A\I_S^\top)^\dagger\I_S\A) (\x_t-\x^*) \\
= & ~ (\I-\A^{1/2}\I_S^\top(\I_S\A\I_S^\top)^\dagger\I_S\A^{1/2}) \A^{1/2}(\x_t-\x^*) \\
= & ~ (\I - \P_S)\A^{1/2}(\x_t-\x^*)
\end{align*}
where $\P_S = \A^{1/2}\I_S^\top(\I_S\A\I_S^\top)^\dagger\I_S\A^{1/2}$ is the projection matrix according to $\A^{1/2}$ and $S$.
Similarly, but with slight difference, we have
\begin{align*}
\E\|\x_{t+1} - \x^*\|_{\A}^2 = & ~ \E\|\A^{1/2}(\x_{t+1} - \x^*)\|^2 \\
= & ~ \E[(\A^{1/2}\x_{t} - \A^{1/2}\x^*)^\top (\I-\P_S)^2 (\A^{1/2}\x_{t} - \A^{1/2}\x^*)] \\
= & ~ (\A^{1/2}\x_{t} - \A^{1/2}\x^*)^\top \E[(\I-\P_S)^2](\A^{1/2}\x_{t} - \A^{1/2}\x^*) \\
= & ~ (\A^{1/2}\x_{t} - \A^{1/2}\x^*)^\top (\I - \E[\P_S])(\A^{1/2}\x_{t} - \A^{1/2}\x^*) \\
\leq & ~ (1-\lambda_{\min}^+(\E[\P_S]))\cdot \|\A^{1/2}(\x_t-\x^*)\|^2 \\
=  & ~ (1-\rho)\cdot \|(\x_t-\x^*)\|_{\A}^2
\end{align*}
where we denote $\rho = \lambda_{\min}^+(\E[\P_S])$. According to Theorem 4 of \cite{alv22} we know with probability $1-n^{-10} > 0.99$, the marginal probabilities are approximately isotropic. According to Lemma~\ref{lem:anari_main_2} and Lemma~\ref{lem:tv_distance}, conditioned on this event, there is a coupling $(S, S_{\dpp})$ where $S_{\dpp}\sim \tau$-$\dpp(\A)$, such that $S = S_{\dpp}$ with probability $1 - n^{-O(1)}$. Denote $\mathcal{E}$ as the event $\{S = S_{\dpp}\}$, we have $\Pr\{\mathcal{E}\} = 1 - \delta$ where $\delta = n^{-O(1)}$. Similar to the proof of Lemma~\ref{lem:main_sampling} in Section~\ref{sec:proof_sample}, we have
\begin{align*}
\E[\P_{S}] = & ~ \E[\P_{S}\mid \mathcal{E}] \Pr\{\mathcal{E}\} + \E[\P_{S}\mid \mathcal{E}^c] \Pr\{\mathcal{E}^c\} \\
\succeq & ~\E[\P_{S}\mid \mathcal{E}] \Pr\{\mathcal{E}\} \\
= & ~ \E[\P_{S_{\dpp}}\mid \mathcal{E}] \Pr\{\mathcal{E}\} \\
= & ~ \E[\P_{S_{\dpp}}] - \E[\P_{S_{\dpp}}\mid \mathcal{E}^c] \Pr\{\mathcal{E}^c\} \\
\succeq & ~ \E[\P_{S_{\dpp}}] - \Pr\{\mathcal{E}^c\}\cdot \I \\
= & ~ \E[\P_{S_{\dpp}}] - \delta \cdot \I.
\end{align*}
Next we apply Lemma~\ref{lem:improve_bound} to $\A^{1/2} \in \R^{n\times n}$ with rank $r$, by denoting $S_{\dpp}\sim\tau$-$\dpp(\A)$ we have
\begin{align*}
\lambda_{\min}^+(\E[\P_{S_{\dpp}}]) \geq & ~\frac{\tau-k}{\tau-k-1 + (r-k)\bar{\kappa}_{k}} \\
> & ~ \frac{(C-1)\tau}{(C-1)\tau + C(r-k)\bar{\kappa}_{k}} \\
= & ~ \frac{(C-1)\tau/r}{(C-1-\bar{\kappa}_{k})\tau/r + C\bar{\kappa}_{k}}.
\end{align*}
Similar to the proof of Lemma~\ref{lem:main_sampling}, we consider two cases here. If $\bar{\kappa}_k \geq C-1$, then by using $C\geq 2$ we have
\begin{align*}
\lambda_{\min}^+(\E[\P_{S_{\dpp}}]) \geq \frac{(C-1)\tau/r}{ C\bar{\kappa}_{k}} \geq \frac{\tau/r}{ 2\bar{\kappa}_{k}}.
\end{align*}
If $\bar{\kappa}_k^2 < C-1$, then we use the assumption that $\tau \leq r$ and have
\begin{align*}
\lambda_{\min}^+(\E[\P_{S_{\dpp}}]) \geq \frac{(C-1)\tau/r}{(C-1-\bar{\kappa}_{k}) + C\bar{\kappa}_{k}} = \frac{\tau/r}{1 + \bar{\kappa}_{k}}\geq \frac{\tau/r}{ 2\bar{\kappa}_{k}} 
\end{align*}
Finally, according to Remark~\ref{rem:anari_constant} we can see that the failure probability is actually $\delta = n^{-9}$. After some calculation we know that as long as $\bar{\kappa}_k \leq n^8$, we have the following holds:
\begin{align*}
\rho = \lambda_{\min}^+(\E[\P_S]) \geq \lambda_{\min}^+(\E[\P_{S_\dpp}]) -\delta \geq \frac{\tau/r}{ 2\bar{\kappa}_{k}} - \frac{1}{n^9}\geq \frac{\tau/r}{ 3\bar{\kappa}_{k}}.
\end{align*}
Thus we have the following convergence result:
\begin{align*}
\E\|\x_t - \x^*\|_{\A}^2 \leq \left(1 - \frac{\tau/r}{3\bar{\kappa}_{k}}\right)^t \cdot \|\x_0 - \x^*\|_{\A}^2.
\end{align*}
By further choosing $t_{\max} = O(r\bar{\kappa}_k/\tau\cdot\log(100/\epsilon))$ and denoting $\tilde{\x} = \x_{t_{\max}}$, we have $\E\|\tilde{\x} - \x^*\|_{\A}^2\leq \epsilon\|\x^*\| / 100$. By using Markov's inequality and taking an union bound, we have $\|\tilde{\x} - \x^*\|_{\A}^2\leq \epsilon\|\x^*\|_{\A}^2$ holds with probability $0.98$. Next we compute the time complexity. According to Theorem 34 in \cite{alv22}, the preprocessing step takes $O(\frac{n}{\tau}\log n\cdot \tau^{\omega} \log^3 n) = O(n\tau^{\omega-1} \log^4 n)$. As for the sampling procedure, generating each approximate $\tau$-$\dpp$ sample takes $O(\tau^{\omega} \log^3 n)$. Further notice that in Algorithm~\ref{alg:main_psd} the step of computing $\tilde{\b} = \I_S \A\x_t - \I_S\b$ takes $O(\nnz(\A_{S,:}))$, thus the overall time complexity is
\begin{align*}
& ~ O\left(n\tau^{\omega - 1}\log^4 n + r\bar{\kappa}_k /\tau\cdot\log(1/\epsilon) \cdot(\max_{S:|S|=\tau}\nnz(\A_{S,:}) + \tau^{\omega}\log^3 n)\right) \\
= & ~ O\Big((\nnz^*(\A) + n\tau^{\omega-1} \log^4 n)\cdot\bar{\kappa}_k \log1/\epsilon\Big)
\end{align*}
where $\nnz^*(\A)=n\cdot\max_i\{\nnz(\A_{i,:})\}\leq n^2$.
\end{proof}

\section{Implicit PSD Linear Systems}
\label{sec:proof_main_2}
In this section we give a proof of Theorem~\ref{thm:main_2} which is our second main theorem. Though there are some differences, the proof idea is similar with that of Theorem~\ref{thm:main}, which is detailed in Section~\ref{sec:analysis}. Here, we actually show a more general result, which is that one can use Algorithm \ref{alg:main_2} to solve implicit PSD linear systems of the form $\A^\top\A\x=\c$, with any $\c$ (this becomes useful for our sparse least squares solver in Section \ref{s:ls}). Then, following standard arguments \cite{leventhal2010randomized,gower2015randomized}, we illustrate how to instantiate this with $\c=\A^\top\b$ to solve the original system $\A\x=\b$ via normal equations.
\begin{proof}[Proof of Theorem~\ref{thm:main_2}]
In the algorithm, we preprocess the $\A^\top\A\x=\c$ linear system (from both left and right) with $\Q=\frac1{\sqrt n}\H\D$:
\begin{align*}
(\Q\A^\top)(\A\Q^\top)\z = \Q\c, ~~~\x = \Q^\top\z.
\end{align*}
For notation convenience we use $\A$ and $\c$ to denote the ones after randomized Hadamard transform (which is the original $\A\Q^\top$ and $\Q\c$), and we will transform back to the original $\A$ and $\c$ in the end of the proof. Thus we need to solve the implicit PSD linear system $\A^\top\A\z = \c$, where the optimal solution is $\z^* = (\A^\top\A)^\dagger\c$. Denote $\I_{\tilde{S}}^\top\in\R^{n \times \tau}$ as the column sampling matrix where $\tilde{S}$ is the output of Algorithm~\ref{alg:dpp_sample_real} and $\tilde{\A} = \A\I_{\tilde{S}}^\top \in \R^{m\times \tau}$. Consider the following sketch-and-project update rule:
\begin{align}\label{eq:exact_sketch_proj}
\z_{t+1} = \z_t - \I_{\tilde{S}}^\top(\I_{\tilde{S}}\A^\top\A\I_{\tilde{S}}^\top)^\dagger\I_{\tilde{S}}(\A^\top\A\z_t - \c).
\end{align}
Denote $\y_t = \A\z_t$ and $\tilde{\c} = \tilde{\A}^\top\y_t - \I_{\tilde{S}}\c$, by right multiplying by $\A$ we have
\begin{align*}
\y_{t+1} = \y_t - \tilde{\A}(\tilde{\A}^\top\tilde{\A})^\dagger(\tilde{\A}^\top\y_t - \I_{\tilde{S}}\c) = \y_t - \tilde{\A}\underbrace{(\tilde{\A}^\top\tilde{\A})^\dagger\tilde{\c}}_{\w_t^*}.
\end{align*}
Similarly, instead of directly computing $\w_t^*$ we consider solving the sub-problem $\tilde{\A}^\top\tilde{\A}\w = \tilde{\c}$. We construct the preconditioner $\M$ similarly as Lemma~\ref{lem:approx_converge}, and consider the following linear system
\begin{align*}
\M^\top\tilde{\A}^\top\tilde{\A}\M \u = \M^\top\tilde{\c}, ~~~\w = \M\u,
\end{align*}
then Lemma~\ref{lem:approx_converge} guarantees that $\tilde{\kappa} = \kappa(\tilde{\A}\M) = O(1)$ holds with probability $1-\delta_1$. Denote $\u^* = \M^{-1}(\tilde{\A}^\top\tilde{\A})^\dagger\tilde{\c}$ as the optimal solution, by calling $\mathrm{CG}(\M^\top\tilde{\A}^\top, \M^\top\tilde{\c}, s)$ and starting from $\u_0 = \mathbf{0}$, Lemma~\ref{lem:cg} gives the following convergence result:
\begin{align*}
\|\tilde{\A}\M(\u_s - \u^*)\| \leq 2\left(\frac{\tilde{\kappa}-1}{\tilde{\kappa}+1}\right)^{s} \cdot \|\tilde{\A}\M\u^*\|.
\end{align*}
where $\tilde{\kappa} = O(1)$ holds with probability $1-\delta_1$. Define $\mathcal{E}$ as the event that $\tilde{\kappa} = O(1)$, if event $\mathcal{E}$ holds, then by choosing $s_{\max} = O(\log(1/\epsilon))$ and changing the variable back from $\u$ to $\w$ we have
\begin{align*}
\|\tilde{\A}(\w_{s_{\max}} - \w_t^*)\| \leq \epsilon\|\tilde{\A}\w_t^*\|.
\end{align*}
Denote $\tilde{\z}_{t+1} = \z_t - \I_{\tilde{S}}^\top \w_t^*$ as the update of \textbf{exact} sketch-and-project, notice that our approximate update is $\z_{t+1} = \z_t - \I_{\tilde{S}}^\top\w_{s_{\max}}$. Denote $\Delta_t = \|\A(\tilde{\z}_{t+1} - \z_{t+1})\|$ as the matrix $\A$ norm difference between our approximate update and exact sketch-and-project update, and we bound it as
\begin{align*}
\Delta_t = & ~ \|\A(\tilde{\z}_{t+1} - \z_{t+1})\| 
=\|\tilde{\A} (\w_{s_{\max}} -\w_t^*)\|\\
\leq & ~ \epsilon \|\tilde{\A}\w_t^*\| 
=  \epsilon \|\A\tilde{\z}_{t+1} - \A\z_t\| \\
\leq & ~ \epsilon (\|\A\tilde{\z}_{t+1} -\A\z^*\| +\|\A\z_{t} -\A\z^*\|) 
\leq  2\epsilon\|\A(\z_{t} - \z^*)\|.
\end{align*}
Notice that $\tilde{\z}_{t+1}$ is the same as the update of exact block coordinate descent on the PSD linear system $\A^\top\A\z=\c$, so similarly as before, we have:
\begin{align*}
\E\|\A(\tilde{\z}_{t+1} - \z^*)\|^2 
\leq & ~ (1-\lambda_{\min}^+(\E[\P_S]))\cdot \| \A(\z_t-\z^*)\|^2 \\
=  & ~ (1-\rho)\cdot \| \A(\z_t-\z^*)\|^2.
\end{align*}
where we denote $\rho = \lambda_{\min}^+(\E[\P_{S}])$. By combining the above two results we have
\begin{align*}
 \E\|\A(\z_{t+1} - \z^*)\|^2 
= & ~ \E\|(\A\z_{t+1} - \A\tilde{\z}_{t+1}) + (\A\tilde{\z}_{t+1} - \A\z^*)\|^2 \\
\leq & ~ \Delta_t^2 + \E\|\A(\tilde{\z}_{t+1} - \z^*)\|^2 + 2\Delta_t \cdot (\E\|\A(\tilde{\z}_{t+1} - \z^*)\|^2)^{1/2} \\
\leq & ~ \Delta_t^2 + (1-\rho)\cdot \| \A(\z_t-\z^*)\|^2 + 2\Delta_t\sqrt{1-\rho}\cdot\E\|\A(\z_{t+1} - \tilde{\z}_{t+1})\| \\
= & ~ (\sqrt{1-\rho}\cdot \|\A(\z_t-\z^*)\| + \Delta_t)^2.
\end{align*}
Notice that if event $\mathcal{E}$ holds, then we have $\Delta_t \leq 2\epsilon \|\A(\z_t-\z^*)\|$ by choosing $s_{\max} = \log(1/\epsilon)$. If event $\mathcal{E}$ does not hold, we use the same trick as in the proof of Lemma~\ref{lem:main_convergence} and bound $\Delta_t$ as $\Delta_t \leq 8 \|\A(\z_t - \z^*)\|$. To conclude we have
\begin{align*}
\E\|\A(\z_{t+1} - \z^*)\|^2 = & ~ \E[\|\A(\z_{t+1} - \z^*)\|^2 \mid \mathcal{E}] \cdot\Pr\{\mathcal{E}\} + \E[\|\A(\z_{t+1} - \z^*)\|^2 \mid \bar{\mathcal{E}}] \cdot\Pr\{\bar{\mathcal{E}}\} \\
\leq & ~ (1-\delta_1)\cdot(\sqrt{1-\rho} +2\epsilon)^2\|\A(\z_t-\z^*)\|^2 + \delta_1\cdot (\sqrt{1-\rho} + 8)^2\|\A(\z_t-\z^*)\|^2 \\
\leq & ~ [(\sqrt{1-\rho} +2\epsilon)^2 + 81\delta_1]\cdot \|\A(\z_t-\z^*)\|^2.
\end{align*}
We set $\epsilon = \rho / 8$ and $\delta_1 = \rho/324$, notice that $\sqrt{1 - \rho} + \rho / 4 \leq \sqrt{1 - \rho / 2}$, thus we have
\begin{align*}
\E\|\A(\z_{t+1} - \z^*)\|^2 \leq \left(1 - \frac{\rho}{4}\right)\cdot \|\A(\z_t-\z^*)\|^2.
\end{align*}
By applying Lemma~\ref{lem:main_sampling} to $\A^\top$, we know that by setting $\tau \geq Ck\log^3 n$ for some constant $C = O(1)$, then we have $\rho \geq \frac{\tau / r}{2.5\bar{\kappa}_k^2\log^3 n}$. By combining this with the above result we have
\begin{align*}
\E\|\A(\z_t - \z^*)\|^2 \leq \left(1 - \frac{\tau / r}{10\bar{\kappa}_k^2\log^3 n}\right)^t \cdot \|\A(\z_0 - \z^*)\|^2.
\end{align*}
Notice that this is the convergence result for solving $\A^\top\A\z = \c$ for an arbitrary $\c$. By choosing $\c = \A^\top\b$, we have $\z^* = (\A^\top\A)^\dagger\c = (\A^\top\A)^\dagger\A^\top\b = \A^\dagger\b$ which is the least squares solution of linear system $\A\z = \b$ we want. Finally, notice that both the matrix $\A$ and vector $\c$ here are after the randomized Hadamard transform, and we need to transform the variable back. Since $\Q^\top\Q=\I$, the quantity $\A\z$ here is the identical to the original $\A\x$, thus we have
\begin{align*}
\E\|\A(\x_t - \x^*)\|^2 \leq \left(1 - \frac{\tau / r}{10\bar{\kappa}_k^2\log^3 n}\right)^t \cdot \|\A(\x_0 - \x^*)\|^2.
\end{align*}
By further choosing $\tau = O(k\cdot \log^3 n), \x_0 = \mathbf{0}$ and $t_{\max} = O(r\bar{\kappa}_k^2/k \cdot \log(100/\epsilon))$ we have $\E\|\A(\x_t - \x^*)\|^2 \leq \frac{\epsilon}{100} \|\A\x^*\|^2$. By applying Markov's inequality we have $\|\A(\x_t - \x^*)\|^2 \leq \epsilon \|\A\x^*\|^2$ holds with probability $0.98$. Finally by using Lemma~\ref{lem:precondition_sparse} we compute the total cost as
\begin{align*}
& ~ O(mn\log n) + t_{\max} \cdot O(m\tau \log(\tau/\delta_1) + \tau^{\omega} + m\tau s_{\max}) \\
= & ~ O(mn\log n) + r\bar{\kappa}_k^2/k \cdot \log(100/\epsilon) \cdot O(mk \log^3 n \log (r\bar{\kappa}_k) + k^\omega \log^{3\omega} n) \\
= & ~ O(mn \log n + (mr\log^3 n \log (r\bar{\kappa}_k) + rk^{\omega-1} \log^{3\omega}n)\cdot\bar{\kappa}_k^2\log1/\epsilon) \\
= & ~ O(mn \log n + (mr\log^4 n + rk^{\omega-1} \log^{3\omega}n)\cdot\bar{\kappa}_k^2\log1/\epsilon)
\end{align*}
where the last step follows from the assumption $\bar{\kappa}_k \leq n^4$ and thus $\log (r\bar{\kappa}_k) = O(\log n)$.
\end{proof}

\section{Least Squares Regression}
\label{s:ls}

In this section, consider a tall least squares problem where, given an $n\times d$ matrix $\A$ and $\b\in\R^n$, where $n\gg d$, we want to find 
\begin{align*}
\x^*=\argmin_\x\|\A\x-\b\|^2.    
\end{align*}
Note that one could simply apply our Algorithm~\ref{alg:main_2} and use Theorem~\ref{thm:main_2} to get the complexity analysis, obtaining runtime $\tilde O(nd + dk^{\omega-1})$. However, in this setting, we can further improve on this approach for sparse matrices by combining it with a standard sketch-to-precondition iterative solver for tall least squares \cite{rokhlin2008fast}. In this approach, one first constructs a nearly square sketch $\tilde \A$ of the tall matrix $\A$, by using a subspace embedding such as the one from Lemma \ref{lem:precondition_sparse}, at the cost of $\tilde O(\nnz(\A)+d^2)$. Then, one constructs a preconditioner of $\A$ by performing an SVD (or QR) decomposition of $\tilde\A$, at the cost of $\tilde O(d^\omega)$. Now, we can simply run a preconditioned iterative solver in time $\tilde O(\nnz(\A) + d^2)$ to solve the least squares task. For matrices with $k$ large singular values, the preconditioner construction can be further accelerated by using approximate SVD with power iteration in time $\tilde O(d^2k^{\omega-2})$. 

In this section, we show that instead of directly constructing the preconditioner from $\tilde\A$, we can use our Algorithm \ref{alg:main_2} as an inner solver, which will act as a preconditioner for the outer iterative method. In this case, as our outer iterative method we use preconditioned gradient descent.

\begin{algorithm}[!ht]
\caption{Least squares solver using gradient descent with inexact preconditioning.}
\label{alg:main_ls}
\begin{algorithmic}[1]
\State \textbf{Input: }matrix $\A\in\R^{n\times d}$, vector $\b \in \R^n$, uniform sample size $\tau$, iterate $\x_0$, inner CG iteration $s_{\max}$, inner iteration $t_{\max}$, outer iteration $T$;
\State Compute sketch $\tilde{\A} \leftarrow \mathbf{\Phi}\A$; \Comment{Takes $O(\nnz(\A)\log(d/\delta))$ time.}
\For{$t = 0, 1, \ldots, T-1$}
\State Compute $\g_t \leftarrow \A^\top(\A\x_t - \b)$; \Comment{Takes $O(\nnz(\A))$ time.}
\State Solve $\hat{\p}_t \approx (\tilde{\A}^\top\tilde{\A})^\dagger\g_t$ by calling Algorithm~\ref{alg:main_2} with choice $(\tilde{\A}, \g_t, \tau, \mathbf{0}, t_{\max}, s_{\max})$;
\State Compute $\x_{t+1} \leftarrow \x_t - \hat{\p}_t$;
\EndFor \\
\Return $\tilde{\x} = \x_{t_{\max}}$;
\end{algorithmic}
\end{algorithm}

\begin{theorem}[Least squares]\label{thm:main_ls}
Given matrix $\A\in\R^{n \times d}$, $\b \in \R^n$ with $\rank(\A) = r$, let $\{\sigma_i\}_{i=1}^r$ be the singular values of $\A$ in decreasing order. For $\log d \leq k < r$, define $\bar{\kappa}_k := (\frac{1}{r-k}\sum_{j > k}^r \sigma_j^2 / \sigma_r^2)^{1/2}$ and assume for simplicity that $\bar{\kappa}_k \leq d^4$.
Given $\epsilon > 0$, running Algorithm~\ref{alg:main_ls} with choice $\tau \geq Ck\log^3 d$ for some $C=O(1)$, $\x_0 = \mathbf{0}, s_{\max} = O(\log(r\bar{\kappa}_k / \tau)), t_{\max} = O(r\bar{\kappa}_k^2/k)$ and $T = O(\log(1/\epsilon))$ will output $\tilde \x$ such that  with probability $0.99$ we have $\|\A\tilde\x-\b\|^2\leq\min_\x\|\A\x-\b\|^2 + \epsilon\|\b\|^2$ in time 
\begin{align*}
O\left(\nnz(\A)\log(d/\epsilon) + (d^2\log^4 d + dk^{\omega-1}\log^{3\omega}d)\cdot \bar{\kappa}_k^2\log1/\epsilon\right).
\end{align*}
\end{theorem}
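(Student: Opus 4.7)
The plan is to view Algorithm~\ref{alg:main_ls} as preconditioned gradient descent on the quadratic $f(\x) = \tfrac{1}{2}\|\A\x - \b\|^2$, with the implicit preconditioner $\M = (\tilde\A^\top\tilde\A)^\dagger$, where the preconditioner is \emph{applied inexactly} by calling Algorithm~\ref{alg:main_2} as an inner solver. First I would use Lemma~\ref{lem:precondition_sparse} with a small constant $\epsilon$ and small polynomial failure probability to conclude that $\tilde\A = \mathbf{\Phi}\A$ satisfies $\tilde\A^\top\tilde\A \approx_{O(1)} \A^\top\A$ (restricted to the row space of $\A$). In particular, this yields a constant $\rho_0 < 1$ such that $\|(\I - (\tilde\A^\top\tilde\A)^\dagger \A^\top\A)\v\|_{\A^\top\A} \leq \rho_0 \|\v\|_{\A^\top\A}$ for every $\v$ in the row space. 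Furthermore, $\tilde\A$ inherits the spectral profile of $\A$, so it has at most $O(k)$ large singular values with $\bar\kappa_k(\tilde\A) = O(\bar\kappa_k(\A))$.

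Next, set $\x^* = \A^\dagger\b$, so that $\g_t = \A^\top\A(\x_t - \x^*)$, and define the idealized preconditioned step $\p_t^* = (\tilde\A^\top\tilde\A)^\dagger \g_t$. The exact iterate $\x_t - \p_t^* - \x^* = (\I - (\tilde\A^\top\tilde\A)^\dagger\A^\top\A)(\x_t-\x^*)$ contracts by $\rho_0$ in the $\A^\top\A$-norm. For the inexact step, I would invoke Theorem~\ref{thm:main_2} (or rather the general implicit-psd version established in Section~\ref{sec:proof_main_2}) on the inner problem with inputs $(\tilde\A, \g_t)$: with $\tau = O(k\log^3 d)$, $t_{\max} = O(r\bar\kappa_k^2/k)$ and $s_{\max} = O(\log(r\bar\kappa_k/\tau))$, Algorithm~\ref{alg:main_2} returns $\hat\p_t$ such that $\|\tilde\A(\hat\p_t - \p_t^*)\|^2 \leq \epsilon_\star \|\tilde\A \p_t^*\|^2$ with constant probability, for any desired constant $\epsilon_\star$. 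Applying the spectral equivalence $\tilde\A^\top\tilde\A \approx_{O(1)} \A^\top\A$ twice gives $\|\hat\p_t - \p_t^*\|_{\A^\top\A} \leq O(\sqrt{\epsilon_\star})\|\p_t^*\|_{\A^\top\A}$, and preconditioning yields $\|\p_t^*\|_{\A^\top\A} \leq O(1)\|\x_t - \x^*\|_{\A^\top\A}$. A triangle inequality then gives
\begin{align*}
\|\x_{t+1} - \x^*\|_{\A^\top\A} \leq \bigl(\rho_0 + O(\sqrt{\epsilon_\star})\bigr)\|\x_t - \x^*\|_{\A^\top\A}.
\end{align*}
Taking $\epsilon_\star$ a sufficiently small absolute constant makes the overall contraction $\rho < 1$, so $T = O(\log(1/\epsilon))$ outer iterations suffice to achieve $\|\A(\tilde\x - \x^*)\|^2 \leq \epsilon\|\A\x^*\|^2 \leq \epsilon\|\b\|^2$, which is equivalent to the desired additive guarantee on $f(\tilde\x)$ by the Pythagorean identity.

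For the runtime: the one-time sketch $\tilde\A = \mathbf{\Phi}\A$ costs $O(\nnz(\A)\log(d/\epsilon) + d^2\log^4 d)$ by Lemma~\ref{lem:precondition_sparse}. Each of the $T = O(\log(1/\epsilon))$ outer iterations costs $O(\nnz(\A))$ for the gradient $\g_t = \A^\top(\A\x_t - \b)$, plus one call to Algorithm~\ref{alg:main_2} on the $\phi \times d$ matrix $\tilde\A$ with $\phi = \tilde O(d)$. Since we only ask for constant accuracy, Theorem~\ref{thm:main_2} with $m = \phi, n = d, r\leq d$ gives an inner cost of $\tilde O\bigl((d^2\log^4 d + dk^{\omega-1}\log^{3\omega}d)\bar\kappa_k^2\bigr)$. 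Summing over $T$ outer iterations and adding the sketching cost matches the bound stated in the theorem. Failure probabilities are handled by a union bound, absorbing the small polynomial overhead into the $\log$ factors inside the inner solver call (via $\log(1/\delta)$ terms in Lemmas~\ref{lem:precondition_sparse} and \ref{lem:main_coupling}).

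The main obstacle is the careful translation of the inner solver's convergence guarantee, which is in the $\tilde\A$-seminorm, into the $\A^\top\A$-seminorm needed to compose with the outer contraction; this hinges on the spectral equivalence from Lemma~\ref{lem:precondition_sparse} together with the fact that $\g_t$ lies in $\range(\A^\top) = \range(\tilde\A^\top)$ so that $\p_t^*$ and $\hat\p_t$ remain in the relevant subspace. A secondary technicality is verifying that $\tilde\A$'s effective tail ratio $\bar\kappa_k$ is controlled by that of $\A$ (since Theorem~\ref{thm:main_2}'s runtime scales with $\bar\kappa_k^2$), which follows from the $(1+\epsilon)$ spectral equivalence applied singular-value-wise.
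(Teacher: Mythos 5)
Your proposal follows the paper's proof closely in its overall architecture: inexact preconditioned gradient descent on $\tfrac12\|\A\x-\b\|^2$ with $\tilde\H = \tilde\A^\top\tilde\A$ as preconditioner, Algorithm~\ref{alg:main_2} as the inner solver applied to $(\tilde\A,\g_t)$, the spectral equivalence from Lemma~\ref{lem:precondition_sparse} to translate between the $\tilde\H$- and $\H$-norms, a triangle inequality to compose the exact-step contraction with the inner-solver error, and essentially the same runtime accounting (including the observation that $\tilde\A$ inherits $\A$'s tail conditioning). The key steps and constants are all on target.

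There is, however, one genuine gap in the probability bookkeeping. You invoke Theorem~\ref{thm:main_2} to claim that each inner call produces $\hat\p_t$ with $\|\tilde\A(\hat\p_t-\p_t^*)\|^2 \le \epsilon_\star\|\tilde\A\p_t^*\|^2$ ``with constant probability,'' and then propose to handle failures by a union bound over the $T=O(\log(1/\epsilon))$ outer iterations. But Theorem~\ref{thm:main_2}'s stated success probability is a fixed constant ($0.98$), so a union bound over $T$ calls gives total failure probability $\Omega(\log(1/\epsilon))$, which exceeds $1$ once $\epsilon$ is small. Boosting each call to success probability $1-O(1/T)$ (by enlarging $t_{\max}$ or repeating the inner solver) would introduce an extra $\log T = \log\log(1/\epsilon)$ factor into the per-iteration cost, which is not present in the claimed time bound. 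So the union-bound route, as stated, does not close.

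The paper sidesteps this by never applying Markov's inequality inside the outer loop. Instead it uses the inner solver's guarantee \emph{in expectation}, taken directly from the proof of Theorem~\ref{thm:main_2} rather than from its statement: conditional on $\x_t$, on $\tilde\H$, and on a single one-time RHT event (the RHT preprocessing in Algorithm~\ref{alg:main_2} is reused across all outer iterations, so this event is conditioned on once with probability $0.99$), one has $\E_t\big[\|\hat\p_t-\p_t^*\|_{\tilde\H}\big]\le \tfrac14\|\p_t^*\|_{\tilde\H}$. This expected per-step error composes with the deterministic exact-step contraction to give $\E_t\big[\|\A(\x_{t+1}-\x^*)\|\big]\le \tfrac56\|\A(\x_t-\x^*)\|$, which by the tower property yields $\E\big[\|\A(\x_T-\x^*)\|\big]\le(5/6)^T\|\A\x^*\|$, and a single application of Markov at the very end gives the $0.99$-probability guarantee with no union bound over $T$ and no extra $\log\log(1/\epsilon)$ factor. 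I would revise your proposal to carry the expectation through the outer recursion in this way rather than working with per-call high-probability bounds.
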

\begin{proof}
Given $\A\in\R^{n \times d}$ with $\rank(\A) = r$ and $\b \in \R^n$, we consider solving the least squares problem $\x^* = \argmin_{\x}\|\A\x-\b\|$ with (approximate) preconditioned gradient descent
\begin{align*}
\x_{t+1} = \x_t - \tilde{\H}^\dagger \g_t = \x_t - \underbrace{(\tilde{\A}^\top\tilde{\A})^\dagger (\A^\top\A\x_t - \A^\top\b)}_{\p_t^*}
\end{align*}
where $\tilde{\H} = \tilde{\A}^\top\tilde{\A} = \A^\top\mathbf{\Phi}^\top\mathbf{\Phi}\A$ is such that $\tilde{\H} \approx_{1+\epsilon_H} \H = \A^\top\A$ for some constant $\epsilon_H$ that will be specified later. By using the fact that $\H(\x_t - \x^*) = \g_t$ we have $\|\g_t\|_{\H^\dagger} = \|\x_t - \x^*\|_{\H}$, and
\begin{align*}
\|\x_t - \p_t^* - \x^*\|_{\H}^2 = & ~ \|\x_t - \x^*\|_{\H}^2 - 2(\x_t - \x^*)^\top\H\p_t^* + \|\p_t^*\|_{\H}^2 \\
= & ~ \|\x_t - \x^*\|_{\H}^2 - 2\g_t^\top\tilde{\H}^\dagger\g_t + \|\H^{1/2}\tilde{\H}^\dagger\g_t\|^2 \\
\leq & ~ \|\x_t - \x^*\|_{\H}^2 - 2\|\g_t\|_{\tilde{\H}^\dagger}^2 + (1+\epsilon_H) \|\g_t\|_{\tilde{\H}^\dagger}^2 \\
= & ~ \|\x_t - \x^*\|_{\H}^2 - (1-\epsilon_H)\|\g_t\|_{\tilde{\H}^\dagger}^2 \\
\leq & ~ \|\x_t - \x^*\|_{\H}^2 - (1-\epsilon_H)^2\|\g_t\|_{\H^\dagger}^2 \\
\leq & ~ 2 \epsilon_H\cdot \|\x_t - \x^*\|_{\H}^2
\end{align*}
By choosing $\epsilon_H = 1/ 8$ we have $\|\x_t - \p_t^* - \x^*\|_{\H} \leq \frac{1}{2}\|\x_t-\x^*\|_{\H}$. Suppose for now that we can construct a randomized estimate $\hat{\p}_t$ such that $\E_t[\|\hat{\p}_t - \p_t^*\|_{\tilde{\H}}] \leq \frac{1}{4} \|\p_t^*\|_{\tilde{\H}}$, where $\E_t$ is the expectation conditioned on $\x_t$ and $\tilde\H$. Then, since we also assume that $\tilde{\H} \approx_{1+\epsilon_H} \H$ for $\epsilon_H = 1/8$, we have
\begin{align*}
\E_t\big[\|\hat{\p}_t - \p_t^*\|_{\H}\big] \leq \sqrt{1+\epsilon_H} \E_t\big[\|\hat{\p}_t - \p_t^*\|_{\tilde{\H}}\big] \leq \frac{\sqrt{1+\epsilon_H}}{4} \|\p_t^*\|_{\tilde{\H}} \leq \frac{1+\epsilon_H}{4} \|\p_t^*\|_{\H} \leq \frac{1}{3} \|\p_t^*\|_{\H}.
\end{align*}
By combining the above two results we have
\begin{align*}
\E_t\big[\|\A(\x_{t+1} - \x^*)\|\big]
= & ~ \E_t\big[\|\x_t - \hat{\p}_t - \x^*\|_{\H}\big] \\
\leq & ~ \|\x_t - \x^* - \p_t^*\|_{\H} + \E_t\big[\|\p_t^* - \hat{\p}_t\|_{\H}\big] \\
\leq & ~ \frac{1}{2}\|\x_t -\x^*\|_{\H} + \frac{1}{3} \|\p_t^*\|_{\H} \\
= & ~ \frac{1}{2}\|\x_t -\x^*\|_{\H} + \frac{1}{3} \|\g_t\|_{\H^\dagger} \\
= & ~ \frac{5}{6} \|\A(\x_t - \x^*)\|.
\end{align*}
To achieve this convergence result, there are two parts left that need to be done: (i) construct the preconditioner $\tilde{\H} = \tilde{\A}^\top\tilde{\A}$ such that $\tilde{\H} \approx_{1+1/8} \H$ is a spectral approximation; and (ii) compute the randomized estimate $\hat{\p}_t$ such that $\E_t\big[\|\hat{\p}_t - \p_t^*\|_{\tilde{\H}}\big] \leq \frac{1}{4} \|\p_t^*\|_{\tilde{\H}}$.
For the first part, we use sparse embedding matrix to construct $\tilde{\A} = \mathbf{\Phi}\A$. According to Lemma~\ref{lem:precondition_sparse} we know $\tilde{\A}^\top\tilde{\A} \approx_{1+1/8} \A^\top\A$ holds with probability at least $1-\delta_1$ by choosing $\phi = O(d+\log(1/\delta_1))$, and the construction takes $O(\nnz(\A)\log(d/\delta_1) + d^2\log^4 (d/\delta_1))$ time. 

For the second part, to approximate $\p_t^*$, we treat it as solving the linear system $(\tilde{\A}^\top\tilde{\A}) \p = \g_t$ where $\tilde{\A} \in \R^{\phi \times d}$ for $\phi = O(d+ \log (1 / \delta_1))$ and $\g_t = \A^\top(\A\x_t - \b)$ is given. Notice that since $\tilde{\A}^\top\tilde{\A}$ is an $(1+\frac{1}{8})$-spectral approximation of $\A^\top\A$, we know that all the singular values of $\tilde{\A}$ are constant factor approximations of the singular values of $\A$, thus we know $\bar{\kappa}_k(\tilde{\A})$ and $\bar{\kappa}_k(\A)$ only differs within a constant factor. We apply Algorithm~\ref{alg:main_2} to $(\tilde{\A}\in\R^{\phi\times d}, \c = \g_t, \p_0 = \mathbf{0})$ with choice $\tau \geq Ck\log^3 d, s_{\max} = O(\log(r\bar{\kappa}_k/k))$ and $t_{\max} = O(r\bar{\kappa}_k^2/k)$ (note that the RHT preprocessing step from line 2 can use the same RHT across all gradient descent iterations). According to the proof of Theorem~\ref{thm:main_2}, we know that with the right choice of constants the output $\hat{\p}_t$ will satisfy $\E_t[\|\hat{\p}_t - \p_t^*\|_{\tilde{\H}}] \leq \frac{1}{4} \|\p_t^*\|_{\tilde{\H}}$ in time
\begin{align*}
O(\phi d \log d + (\phi r \log^4 d + rk^{\omega-1}\log^{3\omega}d)\bar{\kappa}_k^2) = O((d^2\log^4 d + dk^{\omega-1}\log^{3\omega}d)\cdot \bar{\kappa}_k^2).
\end{align*}
With the above analysis,  we have the following convergence result, where the expectation is taken over the entire run of the algorithm, conditioning on successful construction and preprocessing of $\tilde\A$ (which holds with high probability):
\begin{align*}
 \E\big[\|\A(\x_T-\x^*)\|\big]\leq \Big(\frac 56\Big)^{T}\|\A(\x_0-\x^*)\|
\end{align*}
Thus by choosing $\x_0=\mathbf 0$, $T = O(\log(1/\epsilon))$, using Markov's inequality we can obtain that with probability 0.99, $\tilde{\x} = \x_T$ will satisfy $\|\A(\tilde\x-\x^*)\|^2\leq\epsilon^2\|\b\|^2$, which in turn implies $\|\A\tilde{\x} - \b\|^2 \leq \|\A\x^* - \b\|^2+\epsilon\|\b\|^2$.
The overall time complexity of the procedure is:
\begin{align*}
& ~ O\left(\nnz(\A)\log d + d^2\log^4 d + (\nnz(\A) + \bar{\kappa}_k^2 d^2\log^4 d + \bar{\kappa}_k^2 dk^{\omega-1}\log^{3\omega}d)\log1/\epsilon\right)\\
= & ~ O\left(\nnz(\A)\log(d/\epsilon) + (d^2\log^4 d + dk^{\omega-1}\log^{3\omega}d)\cdot \bar{\kappa}_k^2\log1/\epsilon\right).
\end{align*}

\end{proof}

\bibliographystyle{alpha}
\bibliography{ref.bib}

\appendix
\section{Omitted Proofs}\label{sec:appendix_proof}

\begin{proof}[Proof of Lemma~\ref{lem:sample_log}]
We first state Lemma~\ref{lem:geo_tail} which is the tail bounds for sums of geometric random variables.
\begin{lemma}[Theorem 2.1 in \cite{j18}]\label{lem:geo_tail}
Let $\{X_i\}_{i=1}^n$ be independent geometric random variables with possibly different distributions: $X_i \sim p_i$ where $0 < p_i \leq 1$. Let $X := \sum_{i=1}^n X_i, \mu := \E[X]$ and $p_* := \min_i p_i$. For any $\lambda \geq 1$, we have
\begin{align*}
\Pr\{X \geq \lambda \mu\} \leq e^{-p_*\mu(\lambda-1-\ln\lambda)}.
\end{align*}
\end{lemma}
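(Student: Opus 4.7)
The plan is to use the exponential (Chernoff) Markov inequality. For any $t>0$ with $(1-p_i)e^t<1$ for all $i$, independence of the $X_i$ gives
\begin{align*}
\Pr\{X \geq \lambda\mu\} \;\leq\; e^{-t\lambda\mu}\prod_{i=1}^n \E[e^{tX_i}] \;=\; e^{-t\lambda\mu}\prod_{i=1}^n \frac{p_i e^t}{1-(1-p_i)e^t},
\end{align*}
using the standard MGF of a geometric random variable. Taking logs reduces the task to bounding $F(t):=\sum_i \psi(p_i,t)-t\lambda\mu$ from above by $-p_*\mu(\lambda-1-\ln\lambda)$, where $\psi(p,t):=\ln p + t -\ln(1-(1-p)e^t)$ and $\mu=\sum_i 1/p_i$.

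First I would handle the symmetric case $p_i\equiv p_*$, where $X$ is negative binomial with parameters $(n,p_*)$, $\mu=n/p_*$, and $p_*\mu=n$. Here the optimal Chernoff choice is $e^{t}=(\lambda-p_*)/(\lambda(1-p_*))$, which conveniently gives $1-(1-p_*)e^{t}=p_*/\lambda$. A direct substitution collapses $F(t)$ to exactly $-n(\lambda-1-\ln\lambda)=-p_*\mu(\lambda-1-\ln\lambda)$, proving the bound in this case. (The same value of $t$ also falls out of the equivalent reduction $\{X\ge \lceil\lambda\mu\rceil\}=\{\mathrm{Bin}(\lceil\lambda\mu\rceil-1,p_*)<n\}$ and a KL-form Chernoff bound for the binomial, which I would use as a cross-check.)

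For the general case I would use the same $t$ and show that the log-Chernoff bound is still $\le -p_*\mu(\lambda-1-\ln\lambda)$ by a termwise argument. Concretely, it suffices to prove that for every $p\in[p_*,1]$,
\begin{align*}
h(p) \;:=\; p\bigl[\ln p + t - \ln(1-(1-p)e^{t})\bigr] - t\lambda + p_*(\lambda-1-\ln\lambda) \;\le\; 0,
\end{align*}
since summing $h(p_i)/p_i$ over $i$ recovers $F(t)+p_*\mu(\lambda-1-\ln\lambda)$. The symmetric case already established $h(p_*)=0$, so the remaining ingredient is monotonicity $h'(p)\le 0$ on $[p_*,1]$. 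After differentiation, $h'(p)$ simplifies (using the identity $1-(1-p_*)e^{t}=p_*/\lambda$) to an expression that can be checked to have constant sign on $[p_*,1]$ by comparison with a linear function of $\ln p$.

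I expect this monotonicity check to be the main technical obstacle, since $h$ mixes logarithmic and rational terms in $p$ and the bookkeeping is delicate. If this route becomes too heavy, a cleaner backup is the coupling $X_i \stackrel{d}{=}\lceil Z_i\rceil$ with $Z_i\sim\mathrm{Exp}(-\ln(1-p_i))$ independent, so that $X\le n+\sum Z_i$; the sum of independent exponentials has a clean product MGF $\prod_i \alpha_i/(\alpha_i-t)$ with $\alpha_i=-\ln(1-p_i)\ge p_i$, and the Chernoff optimization there is straightforward, at the cost of tracking the additive $n$ shift and the slack between $\alpha_i$ and $p_i$. Either way, the final step is to combine the resulting bound with the standard identity $\lambda-1-\ln\lambda\ge 0$ for $\lambda\ge 1$, ensuring the inequality is non-vacuous.
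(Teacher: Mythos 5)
The paper does not prove this lemma at all: it is imported verbatim as Theorem~2.1 of \cite{j18} (Janson's tail bound for sums of geometrics), so there is no in-paper argument to match against. Your Chernoff route is essentially Janson's own strategy, and it does go through, but two points need fixing. First, your claim that the optimal choice $e^{t}=\frac{\lambda-p_*}{\lambda(1-p_*)}$ makes $F(t)$ collapse \emph{exactly} to $-n(\lambda-1-\ln\lambda)$ in the symmetric case is false: with that $t$ one gets $F(t)/n=\ln\lambda+t-t\lambda/p_*$, which equals $-(\lambda-1-\ln\lambda)$ only if $t=u:=\frac{p_*(\lambda-1)}{\lambda-p_*}$, whereas your $t$ equals $-\ln(1-u)>u$. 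The inequality still goes the right way, since $-\ln(1-u)\ge u$ for $u\in[0,1)$ gives $F(t)\le -n(\lambda-1-\ln\lambda)$; correspondingly your anchor should be $h(p_*)\le 0$, not $h(p_*)=0$. Second, the monotonicity step you defer as ``the main technical obstacle'' is actually immediate and you should just carry it out: writing $w(p)=\frac{pe^{t}}{1-(1-p)e^{t}}=\E e^{tX_p}$, one computes
\begin{align*}
h'(p)=\psi(p,t)+p\,\partial_p\psi(p,t)=\ln w(p)+1-w(p)\le 0,
\end{align*}
by $\ln w\le w-1$; note also $(1-p)e^{t}\le(1-p_*)e^{t}=\frac{\lambda-p_*}{\lambda}<1$ on $[p_*,1]$, so all MGFs are finite there. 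With $h(p_*)\le 0$ and $h'\le 0$ the termwise bound $h(p_i)\le 0$ follows and the proof closes; the exponential-coupling backup is unnecessary. For comparison, Janson instead takes $e^{-t}=1-p_*(1-1/\lambda)$ and bounds $\ln\E e^{tX_i}=-\ln\bigl(1-(1-e^{-t})/p_i\bigr)\le\frac{p_*}{p_i}\bigl(-\ln(1-(1-e^{-t})/p_*)\bigr)$ via convexity of $x\mapsto-\ln(1-x)$, which replaces your whole $h$-analysis with one line; your version costs a derivative computation but needs no convexity lemma. Either way the result is proved.
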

For $i\in[t]$, we define random variable $X_i$ as the number of calls it take to obtain a new sample in the $i$-th step (that is, we have obtained $i-1$ samples and want to get the $i$-th new sample). Then we know that $\{X_i\}_{i=1}^t$ are  independent geometric random variables with different distributions: $X_i \sim \mathrm{Ge}(p_i)$ where $p_i = 1- \frac{k+i-1}{m}$. We denote $X := \sum_{i=1}^t X_i$ and also define
\begin{align*}
\begin{cases}
\mu :=  \E[X] = \sum_{i=1}^t \E[X_i] = \sum_{i=1}^t \frac{1}{p_i} \\
p_* := \min_i p_i = 1 - \frac{k+t-1}{m}
\end{cases}
\end{align*}
Then we have
\begin{align*}
\mu = & ~ \frac{m}{m-k} + \frac{m}{m-k-1} + \cdots + \frac{m}{m - k - t +1} \\
= & ~ \frac{1}{1 - \frac{k}{m}} + \frac{1}{1 - \frac{k+1}{m}} + \cdots + \frac{1}{1 - \frac{k+t-1}{m}} \\
\leq & ~ (1 + \frac{2k}{m}) + (1 + \frac{2(k+1)}{m}) + \cdots + (1 + \frac{2(k+t-1)}{m}) \\
= & ~ t + \frac{2kt + t(t-1)}{m} \\
= & ~ t \cdot \left(1+\frac{2k+t-1}{m}\right)
\end{align*}
where the third step follows from $k, t \ll m$. Next we lower bound $p_* \mu$:
\begin{align*}
p_* \mu = & ~ \left(1 - \frac{k+t-1}{m}\right)\left(\frac{1}{1 - \frac{k}{m}} + \frac{1}{1 - \frac{k+1}{m}} + \cdots + \frac{1}{1 - \frac{k+t-1}{m}}\right) \\
= & ~ t - \sum_{i=1}^t \frac{t- i}{m-(k+i-1)} \\
\geq & ~ t - t\cdot \frac{t-1}{m-k} \\
\geq & ~ \frac{2}{3} t
\end{align*}
where the third step follows since $\frac{t- i}{m-(k+i-1)}$ is decreasing in $i$ and the last step follows from $k, t \ll m$. Now we can bound the tail probability by using concentration inequality of geometric distribution. We use Lemma~\ref{lem:geo_tail} by setting $\lambda = 2 + \frac{2}{t}\log\frac{1}{\delta}$ and have
\begin{align*}
\Pr\left\{X \geq (2 + \frac{2}{t}\log\frac{1}{\delta}) \mu\right\} \leq & ~  \exp(-p_*\mu(1+\frac{2}{t}\log\frac{1}{\delta}-\log(2 + \frac{2}{t}\log\frac{1}{\delta}))) \\
\leq & ~ \exp(-\frac{2t}{3}(1+\frac{2}{t}\log\frac{1}{\delta}-\log(2 + \frac{2}{t}\log\frac{1}{\delta})))
\end{align*}
In order to show that RHS is less than $\delta$, we only need to show $1+\frac{2}{t}\log\frac{1}{\delta}-\log(2 + \frac{2}{t}\log\frac{1}{\delta}) \geq \frac{3}{2t} \log\frac{1}{\delta}$. We define $f(x) = 1 + \frac{x}{2} - \log(2+2x)$, then we know $f(x)$ is decreasing for $0 \leq x \leq 1$ and increasing for $x >1$, thus $f_{\min} = f(1) = \frac{3}{2} - \log 4 > 0$. Since $\frac{1}{t} \log\frac{1}{\delta} > 0$ we have $f(\frac{1}{t} \log\frac{1}{\delta}) > 0$, which gives
\begin{align*}
\Pr\left\{X \geq (2 + \frac{2}{t}\log\frac{1}{\delta}) \mu\right\} \leq \delta.
\end{align*}
This shows that with probability at least $1-\delta$, the number of calls to the oracle is at most
\begin{align*}
X \leq (2 + \frac{2}{t}\log\frac{1}{\delta})\cdot t \cdot \left(1+\frac{2k+t-1}{m}\right) \leq 4t + 4 \log\frac{1}{\delta}
\end{align*}
which is $O(t+\log\frac{1}{\delta})$.
\end{proof}

\end{document}